\documentclass[11pt,letterpaper]{article}
\usepackage[dvipsnames]{xcolor}
\usepackage{tikz}
\usetikzlibrary{angles,quotes,calc}
\usetikzlibrary{
  shapes.multipart,
  matrix,
  positioning,
  shapes.callouts,
  shapes.arrows,
  calc}

\usepackage{answers}
\usepackage{setspace}

\usepackage{graphicx}
\usepackage[shortlabels]{enumitem}
\usepackage{multicol}
\usepackage{mathrsfs}
\usepackage[margin=1in]{geometry} 
\usepackage{amsmath,amsthm,amssymb}
\usepackage{mathtools}
\usepackage[utf8]{inputenc}
\usepackage[english]{babel}

\usepackage[maxalphanames=99, maxbibnames=99,style=alphabetic,natbib=true]{biblatex}
\addbibresource{main.bib}

\usepackage{bbm}
\usepackage{breqn}

\usepackage{subfiles}
\usepackage{amsmath,amssymb}
\usepackage[english]{babel}
\usepackage[nottoc]{tocbibind}
\usepackage{bm}
\usepackage{hyperref}
\usepackage{algorithm}
\usepackage{multirow}
\usepackage{multicol}
\usepackage{algpseudocode}
\usepackage[capitalise, nameinlink, noabbrev]{cleveref}
\usepackage{xifthen}
\usepackage{booktabs}
\hypersetup{
    colorlinks=true,
    linkcolor=blue,
    filecolor=magenta,      
    urlcolor=cyan,
    citecolor=Green
}
\usepackage{thmtools}
\usepackage{thm-restate}

\usepackage{gensymb}

\newtheorem{problem}{Problem}

\newtheorem{theorem}{Theorem}

\newtheorem{lemma}[theorem]{Lemma}

\newtheorem{definition}[theorem]{Definition}
\newtheorem{remark}[theorem]{Remark}

\newtheorem{proposition}[theorem]{Proposition}

\newcommand{\Z}{\mathbb{Z}}
\newcommand{\C}{\mathbb{C}}
\newcommand{\R}{\mathbb{R}}

\DeclareMathOperator*{\E}{\mathbb{E}}

\DeclareMathOperator*{\poly}{poly}
\DeclareMathOperator*{\supp}{supp}
\DeclareMathOperator*{\Tr}{Tr}

\newcommand{\eps}{\varepsilon}

\newcommand{\norm}[1]{\left\lVert #1\right\rVert}

\newcommand{\abs}[1]{\left|#1\right| }

\newcommand{\wt}{\widetilde}

\newcommand{\tuple}[1]{\left(#1\right)}

\newcommand{\zo}{\{0, 1\}}
\newcommand{\Ext}{\textup{\textsf{Ext}}}
\newcommand{\Smp}{\textup{\textsf{Samp}}}

\newcommand{\Prb}[2][]{ \ifthenelse{\isempty{#1}}
  {\Pr\left[#2\right]}
  {\Pr_{#1}\left[#2\right]} }
\newcommand{\Ex}[2][]{ \ifthenelse{\isempty{#1}}
  {\E\left[#2\right]}
  {\E_{#1}\left[#2\right]} }
\newcommand{\Vari}[2][]{ \ifthenelse{\isempty{#1}}
  {\mathbf{Var}\left[#2\right]}
  {\mathop{\mathbf{Var}}_{#1}\left[#2\right]} }
\newcommand{\nPrb}[2][]{ \ifthenelse{\isempty{#1}}
  {\Pr[#2]}
  {\Pr_{#1}[#2]} }
\newcommand{\nEx}[2][]{ \ifthenelse{\isempty{#1}}
  {\E[#2]}
  {\E_{#1}[#2\right]} }

\author{Zhiyang Xun\thanks{Supported by NSF Grant CCF-2312573, a Simons Investigator Award (\#409864, David Zuckerman), NSF award CCF-2008868
and the NSF AI Institute for Foundations of Machine Learning (IFML).}\\Department of Computer Science\\The University of Texas at Austin\\ \texttt{zxun@cs.utexas.edu} \and David Zuckerman\thanks{Supported by NSF Grant CCF-2312573 and a Simons Investigator Award (\#409864).}\\Department of Computer Science\\The University of Texas at Austin\\ \texttt{diz@utexas.edu}}

\title{Near-Optimal Averaging Samplers and Matrix Samplers}

\begin{document}
\maketitle
\begin{abstract}
We present the first efficient averaging sampler that achieves asymptotically optimal randomness complexity and near-optimal sample complexity. For any $\delta < \varepsilon$ and any constant $\alpha > 0$, our sampler uses $m + O(\log (1 / \delta))$ random bits to output $t = O((\frac{1}{\varepsilon^2} \log \frac{1}{\delta})^{1 + \alpha})$ samples $Z_1, \dots, Z_t \in \{0, 1\}^m$ such that for any function $f: \{0, 1\}^m \to [0, 1]$,
\[
\Pr\left[\left|\frac{1}{t}\sum_{i=1}^t f(Z_i) - \mathbb{E}[f]\right| \leq \varepsilon\right] \geq 1 - \delta.
\]
The randomness complexity is optimal up to a constant factor, and the sample complexity is optimal up to the $O((\frac{1}{\varepsilon^2} \log \frac{1}{\delta})^{\alpha})$ factor.

Our technique generalizes to matrix samplers. A matrix sampler is defined similarly, except that $f: \{0, 1\}^m \to \mathbb{C}^{d \times d}$ and the absolute value is replaced by the spectral norm.
Our matrix sampler achieves randomness complexity $m + \widetilde O (\log(d / \delta))$ and sample complexity $ O((\frac{1}{\varepsilon^2} \log \frac{d}{\delta})^{1 + \alpha})$  for any constant $\alpha > 0$, both near-optimal with only a logarithmic factor in randomness complexity and an additional $\alpha$ exponent on the sample complexity.

We use known connections with randomness extractors and list-decodable codes to give applications to these objects. Specifically, we give the first extractor construction with optimal seed length up to an arbitrarily small constant factor above 1, when the min-entropy $k = \beta n$ for a large enough constant $\beta < 1$. Finally, we generalize the definition of averaging sampler to any normed vector space.
\end{abstract}

\section{Introduction}
Randomization plays a crucial role in computer science, offering significant benefits across various applications. However, obtaining true randomness can be challenging.
It's therefore natural to study whether we can achieve the benefits of randomization while using few random bits.

One of the most basic uses of randomness is sampling. 
Given oracle access to an arbitrary function $f: \zo^m \to [0, 1]$ on a large domain, our goal is to estimate its average value.
By drawing $t = O(\log(1/\delta)/\eps^2)$ independent random samples $Z_1, \dots, Z_t \in \zo^m$, 
the Chernoff bound guarantees that the average value $\abs{\frac{1}{t} \sum_{i=1}^t f(Z_i) - \E f} \le \eps$ with probability at least $1 - \delta$. This method uses full independence in sampling, but more efficient strategies can be pursued. This leads to the following definition:
\begin{definition}[\cite{bellare1994randomness}]
    A function $\Smp : \zo^n \to (\zo^m)^t$ is a $(\delta, \eps)$ averaging sampler with $t$ samples using $n$ random bits if for every function $f : \zo^m \to [0, 1]$, we have \[
    \Pr_{(Z_1, \dots, Z_t) \sim \Smp(U_n)}
    \left[ \abs{\frac{1}{t}  \sum_i f(Z_i) - \E f} \leq \varepsilon \right]
    \geq 1 - \delta.
    \]
\end{definition}

The goal is to construct explicit samplers using a small number of random bits that have sample complexity close to the optimal. Researchers have made significant progress toward this goal, and a summary is presented in \cref{table:samplers}. 
Bellare and Rompel \cite{bellare1994randomness} suggested that interesting choices of parameters are $\delta = \exp(-\poly(m))$ and $\eps = 1/\poly(m)$, which allow us to use $\poly(m)$ random bits and generate $\poly(m)$ samples.
For simplicity, we assume $\delta \le \eps$ throughout the paper (see \cref{rem:choice_of_parameters} for further discussion).

\begin{table}[ht]
\centering
\makebox[\textwidth][c]{
\begin{tabular}{|c|c|c|c|c|}
\hline
Reference & Method & Random Bits & Sample Complexity \\
\hline
\hline
\cite{CEG95} & Lower Bound & $ m + \log (1 / \delta) - \log(O(t)) $ & $ \Omega( \log (1 / \delta) / \eps^2) $ \\
\hline
\cite{CEG95} & Non-Explicit & $ m + 2\log (2 / \delta) + \log\log(1 / \eps) $ & $ 2 \log (4 / \delta) / \eps^{2} $ \\
\hline
\hline
Standard & Full Independence & $ O(m \log (1 / \delta) / \eps^2) $ & $ O( \log (1 / \delta) / \eps^{2}) $ \\
\hline
\cite{CG89} & Pairwise Independence & $O(m + \log(1 / \delta))$  & $ O(1 / (\delta\eps^2))$ \\
\hline
\cite{Gil93} & Expander Walks &  $ m + O( \log (1 / \delta) / \eps^{2}) $ & $ O( \log (1 / \delta) / \eps^{2}) $ \\
\hline
\cite{bellare1994randomness} & Iterated Sampling &  $ O(m + (\log m) \log (1 / \delta)) $ & $ \text{poly}(1 / \eps, \log (1 / \delta), \log m) $ \\
\hline
\cite{zuckerman1997randomness} & Hash-Based Extractors & $ (1 + \alpha)(m + \log (1 / \delta)) $ & $ \text{poly}(1 / \eps, \log (1 / \delta), m) $ \\
\hline
\cite{reingold2000entropy} & Zig-Zag Extractors & $ m + (1 + \alpha)\log (1 / \delta) $ & $ \text{poly}(1/\eps, \log (1 / \delta)) $ \\
\hline

\hline
\multirow{ 2}{*}{\cref{cor:domain_sampler}} & Compose \cite{reingold2000entropy} &  \multirow{ 2}{*}{$m + O(\log (1 / \delta))$}  & \multirow{ 2}{*}{$O( ( \log (1 / \delta) / \eps^2)^{1 + \alpha}  )$} \\
 &With Almost $\ell$-wise Ind. &  & \\
\hline
\end{tabular}
}
\caption{Comparison of averaging samplers, $\alpha$ any positive constant.}
\label{table:samplers}
\end{table}

The best existing randomness-efficient averaging sampler comes from the equivalence between averaging samplers and extractors \cite{zuckerman1997randomness}, which we will elaborate on later in the paper. Improving Zuckerman's construction, Reingold, Vadhan, and Wigderson \cite{reingold2000entropy} provided a $(\delta, \eps)$ averaging sampler for domain $\zo^{m}$ that uses $m + (1 + \alpha)\log (1 / \delta)$ random bits for any positive constant $\alpha$. This almost matches the lower bound in~\cite{CEG95}. However, a notable gap remains in sample complexity: the existing construction's complexity $\poly(1 / \eps, \log (1 / \delta))$ does not align with the optimal $O( \log (1 / \delta)  / \eps^2)$. This raised the following open problem (see, e.g., \cite[Open Problem 4.24]{vadhan2012pseudorandomness}, \cite[Section 6]{oded_survey}):
\begin{problem}
  \label{problem1}
  {Can we explicitly design a $(\delta, \eps)$ averaging sampler for domain $\zo^m$ that uses $O(m + \log (1 / \delta))$ random bits and only $O( \log (1 / \delta)  / \eps^2)$ samples?}
\end{problem}

We note that such algorithms do exist for general samplers, which query $f$ and estimate $\E f$ through a more complicated  computation than taking the average~\cite{BGG93}.  
However, many applications require the use of averaging samplers, such as the original use in interactive proofs \cite{bellare1994randomness}. 
Beyond these applications, averaging samplers act as a fundamental combinatorial object that relate to other notions such as randomness extractors, expander graphs, and list-decodable codes \cite{zuckerman1997randomness, vadhan2007unified}.

\subsection{Our Averaging Sampler} 
In this paper, we construct a polynomial-time computable $(\delta, \eps)$ averaging sampler with near-optimal sample complexity using an asymptotically optimal number of random bits.  In fact, the sampler we constructed is a \emph{strong} sampler, defined as follows:

\begin{definition}
    A $(\delta, \varepsilon)$ averaging sampler $\Smp$ is \emph{strong} if
    for every sequence of $t$ functions $f_1, \dots, f_t: \zo^m \to [0,1]$, we have
    \[
    \Pr_{(Z_1, \dots, Z_t) \sim \Smp(U_n)}
    \left[ \abs{\frac{1}{t}  \sum_i \tuple{f_i(Z_i) - \E f_i}} \leq \varepsilon \right]
    \geq 1 - \delta.
    \]
\end{definition}

We now state our main theorems about averaging samplers, which follow from a more general theorem that is slightly more complicated to state,~\cref{thm:general_sampler}.

\begin{restatable}{mtheorem}{CorollaryTwo}
    \label{cor:domain_sampler}
     For every constant $\alpha > 0$, there exists an efficient strong $(\delta, \eps)$ averaging sampler for domain $\zo^m$ that uses  $m + O(\log(1/\delta))$ random bits and  $O((\frac{1}{\eps^{2}} \log \frac{1}{\delta})^{1 + \alpha})$ samples.
\end{restatable}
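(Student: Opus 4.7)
The plan is to derive \cref{cor:domain_sampler} from the general composition theorem (\cref{thm:general_sampler}, stated later) by instantiating the two ingredients suggested by the last row of \cref{table:samplers}: the RVW sampler in the role of a randomness-efficient ``outer'' sampler $\Smp_1$, and an almost $\ell$-wise independent generator in the role of a sample-efficient ``inner'' sampler $\Smp_2$.

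First I would construct $\Smp_2$. Set $\ell = \Theta(\log(1/\delta))$ and $t_2 = \Theta(\log(1/\delta)/\eps^2)$, and build a distribution on $(\zo^m)^{t_2}$ whose marginals are (close to) uniform and whose joint is almost $\ell$-wise independent, using $n_2 = m + O(\log(1/\delta))$ random bits---e.g.\ by layering a uniform shift in $\zo^m$ on top of an almost $\ell$-wise independent sequence of perturbations drawn from an appropriate small-bias set. The Bellare--Rompel tail bound for almost $\ell$-wise independent random variables then certifies that $\Smp_2$ is a strong $(\delta/2, \eps/2)$ averaging sampler.

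Next I would feed $\Smp_2$ into an RVW-style outer sampler $\Smp_1$ over the domain $\zo^{n_2}$: on input $n_1 = n_2 + (1+\alpha')\log(1/\delta)$ bits it outputs $t_1 = \poly(1/\eps,\log(1/\delta))$ pseudorandom seeds forming a strong $(\delta/2,\eps/4)$ sampler. Applying $\Smp_1$ to the ``bad-seed'' indicator $g(z) = \1\!\left[\abs{\tfrac{1}{t_2}\sum_j f(\Smp_2(z)_j) - \E f} > \eps/2\right]$, whose mean is at most $\delta/2$ by the sampler property of $\Smp_2$, shows that the fraction of outer seeds giving a bad inner estimate is at most $\delta/2+\eps/4$ with probability at least $1-\delta/2$. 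Splitting the composed average $\tfrac{1}{t_1 t_2}\sum_{i,j} f(\Smp_2(\Smp_1(U)_i)_j)$ into bad- and good-seed contributions---bounded by $1$ and $\eps/2$ respectively---yields total error at most $\eps$ with probability at least $1-\delta$, using $n_1 = m + O(\log(1/\delta))$ random bits. Strongness is inherited from strongness of $\Smp_1$ and $\Smp_2$.

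The main obstacle is the sample count. A single round of the above composition gives $t_1 \cdot t_2 = \poly(1/\eps,\log(1/\delta)) \cdot O(\log(1/\delta)/\eps^2)$, whose polynomial degree is dictated by RVW and is generally larger than the target $1+\alpha$ for small $\alpha$. To reach an arbitrarily small $\alpha>0$, I would iterate the composition: apply \cref{thm:general_sampler} a constant $O(1/\alpha)$ number of times, each level shaving off a geometric fraction of the sample-count exponent while contributing only $O(\log(1/\delta))$ additional random bits. The bulk of the proof then reduces to careful bookkeeping of randomness, accuracy, failure probability, and the strong-sampler property across all layers---which is precisely what \cref{thm:general_sampler} is designed to handle cleanly.
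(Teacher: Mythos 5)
Your proposal gets the two ingredients right (the RVW extractor-based sampler and almost $\ell$-wise independence), but it assembles them in a way that does not work, and the assembly differs fundamentally from the paper's.

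\textbf{The inner sampler you describe does not exist.} You claim a distribution on $(\zo^m)^{t_2}$ that is almost $\ell$-wise independent with $\ell = \Theta(\log(1/\delta))$ using only $n_2 = m + O(\log(1/\delta))$ random bits. By \cref{lem:k_wise_complexity}, generating an almost $\ell$-wise independent sequence over $\zo^m$ costs $\Theta(\ell m) = \Theta(m \log(1/\delta))$ random bits, not $m + O(\log(1/\delta))$. The ``uniform shift plus small-bias perturbations'' idea does not rescue this: conditionally on the shift, the perturbations carry all the structure and the joint distribution is not close to $\ell$-wise independent for any $\ell \geq 2$. (A useful sanity check: if such an $\Smp_2$ existed, it alone would already be a strong $(\delta,\eps)$ sampler with optimal randomness and optimal sample count, fully resolving \cref{problem1} with no composition at all.) This is precisely the obstruction noted in the paper's Techniques discussion: with an $O(m)$ randomness budget over $\zo^m$, one is restricted to constant-wise independence; high-order $\ell$-wise independence is only affordable over a much smaller domain.

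\textbf{The composition is the wrong one and makes sample complexity worse, not better.} You run the RVW sampler over $\zo^{n_2}$ to generate $t_1$ seeds for $\Smp_2$ and then query $\Smp_2$ at each seed; this is a ``seed-sampling'' composition (the analogue of \cref{lem:another_matrix_composition}, used in the paper only for matrix samplers to boost confidence), and it yields $t_1 \cdot t_2$ total queries. Since $t_1 = \poly(1/\eps,\log(1/\delta))$, the product is $\poly(1/\eps,\log(1/\delta))$, which is the quantity you were trying to beat, and iterating this composition only multiplies the count further. The paper instead uses the \emph{sub-sampling} composition of \cref{lem:new_NZ}/\cref{prop:concatenation}: the outer sampler $\Smp_E : \zo^{n_1} \times [t_1] \to \zo^m$ is the RVW sampler, and the inner almost $\ell$-wise independent sampler runs over the domain of \emph{indices} $\zo^{\log t_1}$, which has size only $O(\log(1/\eps) + \log\log(1/\delta))$. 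The composed sampler queries only the $t_2$ sub-sampled points, so the final sample count equals $t_2$, not $t_1 t_2$, and $\ell$-wise independence with $\ell = \Theta(\log(1/\delta)/\log s)$ is cheap over the tiny index domain. The paper then gets \cref{cor:domain_sampler} by a \emph{single} application of this composition with the parameter $s = \eps^{-2\alpha}\log^{\alpha}(1/\delta)$ in \cref{thm:general_sampler}; no iteration in $O(1/\alpha)$ rounds is used, and your proposed iteration would not converge anyway because the seed-sampling composition's sample count only grows.

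In short, the two key ideas you needed were (i) almost $\ell$-wise independence over the \emph{small index domain}, not over $\zo^m$, and (ii) the \emph{sub-sampling} composition of \cref{prop:concatenation}, whose final sample count is that of the inner sampler alone. Your proposal misses both.
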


This nearly resolves~\cref{problem1}. We also give a sampler with asymptotically optimal sample complexity but a worse randomness complexity.

% By setting $s = \eps^{-2\alpha}\log^{\alpha}(1/\delta)$ for an arbitrarily small constant $\alpha$, we derive the following theorem as a corollary, which nearly resolves \cref{problem1}.

\begin{restatable}{mtheorem}{optimalSampleSampler}
    \label{lem:optimal_sample_sampler}
    There exists an efficient strong $(\delta, \eps)$ averaging sampler for domain $\zo^m$ that uses $m + O(\log\frac{1}{\delta} (\log \frac{1}{\eps} + \log\log\frac{1}{\delta} ))$ random bits and $O(\frac{1}{\eps^{2}} \log \frac{1}{\delta})$ samples. 
\end{restatable}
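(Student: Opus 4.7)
The plan is to obtain \cref{lem:optimal_sample_sampler} as a special case of \cref{thm:general_sampler} (the more general result from which \cref{cor:domain_sampler} is also derived), by letting the parameter $\alpha$ depend on $\eps$ and $\delta$ rather than being fixed to a constant. At a high level the construction composes a Reingold--Vadhan--Wigderson style strong extractor with an almost $\ell$-wise independent sequence on the extractor's seeds, and $\alpha$ tunes a tradeoff between sample count $(\log(1/\delta)/\eps^2)^{1+\alpha}$ and a $\Theta(1/\alpha)$-factor hidden in the $O(\log(1/\delta))$ randomness of \cref{cor:domain_sampler}. Collapsing the sample count to the information-theoretic optimum just means shrinking $\alpha$ to the point where the extra $(\log(1/\delta)/\eps^2)^{\alpha}$ factor becomes $O(1)$.

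Concretely, I would set $\alpha = c/(\log(1/\eps) + \log\log(1/\delta))$ for a small enough constant $c$, so that $(\log(1/\delta)/\eps^2)^{\alpha} = 2^{\alpha(\log\log(1/\delta) + 2\log(1/\eps))} = O(1)$, yielding $O(\log(1/\delta)/\eps^2)$ samples as claimed. I would then inspect the two places where $\alpha$ enters \cref{thm:general_sampler}: the seed length $d$ of the base extractor, which is robust to subconstant $\alpha$ (it changes by at most a constant factor), and the independence parameter $\ell = \Theta(\log(1/\delta)/\alpha)$ of the almost $\ell$-wise independent generator feeding the extractor's seeds. The dominant contribution to the randomness is $O(\ell \cdot d) = O(\log(1/\delta)(\log(1/\eps) + \log\log(1/\delta)))$ bits, and adding the $m + O(\log(1/\delta))$ bits for the extractor's source gives the target randomness. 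Strong $(\delta,\eps)$-sampler correctness is inherited verbatim from the proof of \cref{cor:domain_sampler}: combine the strong extractor property (for uniform $X$, the function $y \mapsto f(\Ext(X,y))$ has expected value within $\eps/2$ of $\E f$ except with probability $\delta/2$) with a Bellare--Rompel moment tail bound for $\ell$-wise independent sums applied to the $t$ pseudorandom seeds.

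The main obstacle is ensuring that \cref{thm:general_sampler} is stated -- or can be reproved -- with explicit polynomial (not exponential) dependence on $1/\alpha$, and that the seed length of its extractor component does not secretly carry a $\log n = \log(m + \log(1/\delta))$ factor that, after multiplication by $\ell = \Theta(\log(1/\delta))$, would introduce a spurious $\log m \cdot \log(1/\delta)$ term into the randomness. The cleanest fix is to instantiate the extractor piece with one tailored to nearly full-entropy sources, whose seed length is $O(\log(1/\eps) + \log\log(1/\delta))$ independent of $m$; checking that such an instantiation is compatible with the rest of the construction is the main technical step, and once this is done the bound on \cref{lem:optimal_sample_sampler} follows by direct parameter substitution.
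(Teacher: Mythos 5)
Your proposal is correct and amounts to the same approach as the paper: the paper simply sets $s=2$ in \cref{thm:general_sampler}, and your choice $\alpha = c/(\log(1/\eps)+\log\log(1/\delta))$ translated through $s=\eps^{-2\alpha}\log^\alpha(1/\delta)$ makes $s$ a constant, so the two instantiations coincide. Your worry about a hidden $\log m$ in the base-extractor seed is already resolved in \cref{lem:one_step_reduction}, where the seed length is $O(\log(1/\eps)+\log\log(1/\delta))$ independent of $m$, and the polynomial dependence on $1/\alpha$ (equivalently on $1/\log s$) is made explicit in \cref{thm:general_sampler}.
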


\subsection{Matrix Samplers}
A natural generalization of the classic Chernoff bound is the Matrix Chernoff Bound~\cite{Rud99, AW02, Tro12}. Suppose we wish to estimate $\E f$ for a matrix-valued function $f: \zo^m \to \C^{d \times d}$ satisfying $\norm{f(x)} \le 1$.
By drawing $t = O(\log(d/\delta)/\eps^2)$ independent random samples $Z_1, \dots, Z_t \in \zo^m$, the
Matrix Chernoff Bound guarantees that \[
    \Prb{\norm{\frac{1}{t} \sum_{i=1}^t f(Z_i) - \E f} \le \eps} \ge 1 - \delta,
\]
where $\norm{\cdot}$ denotes the spectral norm.
As in the real-valued case, we wish to derandomize this process without increasing the sample complexity too much. 
To address this, Wigderson and Xiao~\cite{WX05} initiated the study of randomness-efficient matrix samplers:
\begin{definition}
    A function $\Smp : \zo^n \to (\zo^m)^t$ is a $d$-dimensional $(\delta, \eps)$ matrix sampler with $t$ samples using $n$ random bits if the following holds: For any function $f : \zo^m \to \C^{d \times d}$ such that $\|f(x)\| \leq 1$ for all $x \in \zo^m$, we have
    \[
    \Pr_{(Z_1, \dots, Z_t) \sim \Smp(U_n)}
    \left[ \norm{\frac{1}{t}  \sum_i f(Z_i) - \E f} \leq \varepsilon \right]
    \geq 1 - \delta.
    \]
\end{definition}

% The lower bound for matrix samplers follows directly from the bound for real-valued samplers by considering diagonal matrices. 
% Specifically, any matrix must use at least $t = \Omega(\log(d/\delta) / \eps^2)$ samples and $m + \log(d / \delta) - \log(O(t))$ random bits. 
% In addition, we can non-explicitly construct a matrix sampler that matches the lower bound:
% Extending the construction of non-explicit standard averaging samplers, we can show that, there exists a non-explicit matrix sampler that only costs an additional $2 \log d$ bits of randomness comparing to averaging samplers and with asymptotically optimal sample complexity.

Extending the construction of non-explicit standard averaging samplers~\cite{CEG95}, we can show that there exists a non-explicit matrix sampler that requires only an additional $2 \log d$ bits of randomness compared to averaging samplers while achieving asymptotically optimal sample complexity.
\begin{restatable}{proposition}{nonExplicit}
    \label{prop:non_explicit}
    There exists a (non-explicit) $d$-dimensional $(\delta, \eps)$ matrix sampler for domain $\zo^m$ using $O(\frac{1}{\eps^2}\log\frac{d}{\delta})$ samples and $m + 2\log\frac{1}{\delta} + 2 \log d + \log \log \frac{d}{\eps}$ random bits.
\end{restatable}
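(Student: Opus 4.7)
The plan is to mimic the CEG95 probabilistic construction of non-explicit averaging samplers, substituting the matrix Chernoff bound for the scalar Chernoff bound and enlarging the union bound to cover an $\eps$-net of matrix-valued functions.

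First, I would pick $\Smp: \zo^n \to (\zo^m)^t$ uniformly at random, so that the tuples $\Smp(s)$ at different seeds $s$ are independent, each being $t$ i.i.d.\ uniform samples from $\zo^m$. For a fixed $f: \zo^m \to \C^{d\times d}$ with $\|f(x)\|\le 1$ and a fixed seed $s$, the matrix Chernoff bound (applied to the Hermitian dilation when $f$ is not Hermitian) gives
\[
\Pr_{\Smp(s)}\!\left[\left\|\tfrac{1}{t}\sum_{i=1}^t f(Z_i) - \E f\right\| > \eps/3\right] \le 2d\,e^{-c t \eps^2} =: \delta_0,
\]
so setting $t = O(\eps^{-2}\log(d/\delta_0))$ makes the per-seed failure probability at most $\delta_0$.

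To extend this from a single $f$ to all matrix-valued functions simultaneously, I would discretize the codomain. The unit spectral-norm ball in $\C^{d\times d}$ sits inside the Frobenius ball of radius $\sqrt d$ in $\R^{2d^2}$, so a standard volume argument yields an $(\eps/3)$-net $\mathcal N$ of size $(Cd/\eps)^{2d^2}$. Taking $\mathcal F = \mathcal N^{\zo^m}$ gives $\log|\mathcal F| = O(2^m d^2\log(d/\eps))$, and a triangle-inequality argument reduces the task to showing that $\Smp$ is an $(\eps/3)$-sampler for every $\tilde f \in \mathcal F$: rounding any $f$ pointwise to its nearest $\tilde f \in \mathcal F$ perturbs both $\|\E f - \E \tilde f\|$ and $\|\tfrac{1}{t}\sum_i (f(Z_i)-\tilde f(Z_i))\|$ by at most $\eps/3$.

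Finally, for each fixed $\tilde f \in \mathcal F$ the number of bad seeds is distributed as $\mathrm{Bin}(N,\delta_0)$ with $N=2^n$, since seed outputs are independent. A standard Chernoff upper tail then gives $\Pr[\text{bad fraction exceeds } \delta] \le (e\delta_0/\delta)^{\delta N}$, and a union bound over $\mathcal F$ bounds the total failure probability by $|\mathcal F|\cdot(e\delta_0/\delta)^{\delta N}$. Choosing $\delta_0$ polynomially smaller than $\delta$ (which only changes $t$ by a constant, preserving $t = O(\eps^{-2}\log(d/\delta))$) and taking $n$ just large enough that $\delta N$ absorbs $\log|\mathcal F|$ and the extra $\log(1/\delta)$ arising from the Chernoff tail yields a random $\Smp$ that is a sampler with positive probability. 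The main obstacle is this parameter balancing: one must simultaneously keep the sample count optimal and drive the randomness count down to $m + 2\log(1/\delta) + 2\log d + \log\log(d/\eps)$, which requires careful tracking of how $\delta_0$ enters $t$ and the net-union-bound Chernoff tail together.
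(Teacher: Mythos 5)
Your proof is correct and takes essentially the same route as the paper's: discretize the codomain to a finite function class, use matrix Chernoff for per-seed success, and apply a tail bound plus a union bound over the net to certify that a random choice of $\Smp$ works. The one substantive difference is the tail bound on the fraction of bad seeds. The paper (its Lemma~\ref{lem:non_explicit_reduction}) keeps $\delta_0 = \delta$ and uses Hoeffding's additive bound $\Pr[\text{bad count} > 2\delta k] \le 2e^{-2\delta^2 k}$, which forces $k \gtrsim \ln|F|/\delta^2$ and hence the $2\log(1/\delta)$ term. You instead take $\delta_0$ polynomially smaller than $\delta$ and use the multiplicative binomial upper tail $(e\delta_0/\delta)^{\delta N}$; the condition $|\mathcal F|\,(e\delta_0/\delta)^{\delta N} < 1$ then needs only $N \gtrsim \ln|\mathcal F| / (\delta\log(\delta/\delta_0))$, i.e.\ $\log N \approx m + 2\log d + \log\log(d/\eps) + \log(1/\delta)$. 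Your worry about ``driving the randomness count down to $m + 2\log(1/\delta) + \dots$'' is therefore unfounded: the multiplicative-Chernoff route actually undershoots the paper's $2\log(1/\delta)$ by a full $\log(1/\delta)$, so the stated proposition follows with room to spare (and the sample count stays $O(\eps^{-2}\log(d/\delta))$ since $\delta_0$ is only a polynomial in $\delta$). The remaining differences are cosmetic—the paper factors its argument through a freestanding reduction lemma and discretizes entrywise on a grid rather than via a volumetric net, but both yield $\log|\mathcal F| = \Theta(2^m d^2 \log(d/\eps))$.
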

However, explicitly constructing randomness-efficient matrix samplers turns out to be very challenging. While a union bound over matrix entries suggests that a randomness-optimal averaging sampler directly implies a randomness-optimal matrix sampler (see~\cref{lem:sampler_implies_matrix_sampler}), this method incurs an unavoidable $d^2$ factor in sample complexity, making the dependence on $d$ exponentially worse than optimal.
This raises an open question: can we construct a matrix sampler with (nearly) optimal randomness complexity and polynomial sample complexity, analogous to the averaging samplers in~\cite{bellare1994randomness} and~\cite{zuckerman1997randomness}? 

\begin{problem}
    \label{problem2}
    Can we explicitly design a $d$-dimensional $(\delta, \eps)$ matrix sampler for domain $\zo^m$ that uses $O(m + \log(d / \delta))$ random bits and $\poly(1/\eps, \log(1 / \delta), \log d)$ samples?
\end{problem}

\begin{table}[t]
\centering
\makebox[\textwidth][c]{
\begin{tabular}{|c|c|c|c|c|}
\hline
Reference & Method & Random Bits & Sample Complexity \\
\hline
\hline
% \cite{CEG95} & Lower Bound & $ m + \log (d / \delta) - \log(O(t)) $ & $ \Omega( \log (d / \delta) / \eps^2) $ \\
% \hline
\cref{prop:non_explicit} & Non-Explicit & $ m + 2\log (1 / \delta) + 2 \log d $ & $ O (\log (d / \delta) / \eps^{2}) $ \\
\hline
\hline
\cite{AW02} & Matrix Chernoff Bound & $ O(m \log (d / \delta) / \eps^2) $ & $ O( \log (d / \delta) / \eps^{2}) $ \\
\hline
\cite{WX05} & Union Bound Over Entries & $m + O(\log(d / \delta))$  & $O( (d / \eps)^{2 + \alpha} \cdot \log^{1 + \alpha} (1 / \delta))$ \\
\hline
\cite{GLSS18} & Expander Walks &  $ m +O( (1 / \eps^2) \cdot \log (d / \delta)) $ & $ O(  \log (d / \delta)  / \eps^2 ) $ \\
\hline
\hline
\cref{thm:sampler_with_replacement} &  Iterated Sampler Composition &  $ m + O( \log (1 / \delta) + \log d \log\log d) $ & $ O((\log (d / \delta) / \eps^{2})^{1 + \alpha}) $ \\
\hline
\end{tabular}
}
\caption{Comparison of matrix samplers, $\alpha$ any positive constant, $\eps = 1 / \poly(m)$, $\delta = \exp(-\poly(m))$, ignoring lower order terms. The complexity of the union bound sampler depends on the complexity of the ``base'' averaging sampler, and we use the bound in \cref{cor:domain_sampler} here.}
\label{table:matrix_samplers}
\end{table}

We summarize prior matrix sampler constructions  in~\cref{table:matrix_samplers}.
The best existing construction, a matrix analog of the expander walks sampler, was provided by Garg, Lee, Song, and Srivastava~\cite{GLSS18}. 
Similar to expander walks for real-valued sampling, this construction gives asymptotically optimal sample complexity, but the randomness complexity is worse than optimal by a $\poly(1 / \eps)$ factor. We note that even if we allow the the matrix sampler to be non-averaging, no known better construction is currently known.

In this work, we construct a polynomial-time computable $(\delta, \eps)$ matrix sampler with near-optimal randomness and sample complexity. The randomness complexity is optimal up to a logarithmic factor, and the sample complexity is within a $(\frac{1}{\eps^2} \log \frac{d}{\delta})^\alpha$ factor of optimal for arbitrarily small constant $\alpha > 0$. This brings us close to resolving~\cref{problem2}.

\begin{restatable}{mtheorem}{samplerWithReplacement}
 \label{thm:sampler_with_replacement}
    For any constant $\alpha > 0$:
    There exists an efficient $d$-dimensional $(\delta, \eps)$ matrix sampler for domain $\zo^m$ that uses $m + O(\log(1 / \delta) + \log(d / \eps) \log\log d)$ random bits and $O((\frac{1}{\eps^2}\log\frac{d}{\delta})^{1 + \alpha})$ samples.   
\end{restatable}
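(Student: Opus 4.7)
The plan is to construct the matrix sampler by iterating a composition between a base matrix sampler and the strong averaging sampler of \cref{cor:domain_sampler}. The base sampler provides matrix-Chernoff-type concentration with moderate parameters, while the outer averaging sampler amplifies confidence at an additive cost of only $O(\log(1/\delta))$ random bits.

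For the \emph{base matrix sampler}, I would prove a matrix Chernoff bound for block (almost) $\ell$-wise independent samples by the standard trace-moment method: if $Z_1,\ldots,Z_t\in\zo^m$ are block $\ell$-wise independent with marginally uniform blocks and $f:\zo^m\to\C^{d\times d}$ with $\norm{f(x)}\le 1$, then
\[
    \Pr\Bigl[\,\bigl\|\tfrac{1}{t}\sum_i f(Z_i) - \E f\bigr\| \ge \eps\Bigr] \;\le\; d \cdot O\!\bigl(\ell/(\eps^2 t)\bigr)^{\ell/2}.
\]
Choosing $\ell=\Theta(\log(d/\delta_0))$ and $t=O(\ell/\eps^2)$ yields a matrix sampler with failure probability $\delta_0$ and near-optimal sample complexity. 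For the \emph{seed-amplification composition}, define $g(y)=\1[\,\norm{\hat M_y-\E f}\le\eps/2\,]$ on the base sampler's seed space, where $\hat M_y$ is its estimate from seed $y$. Taking $\delta_0\le\eps/8$ gives $\E g\ge 1-\eps/8$, and applying the strong averaging sampler of \cref{cor:domain_sampler} with accuracy $\eps/8$ and failure $\delta$ ensures that, except with probability $\delta$, at most an $\eps/4$ fraction of outer seeds are bad. Each estimate has spectral distance at most $2$ from $\E f$, so the final averaged estimate deviates by at most $\eps/2+2\cdot\eps/4=\eps$. The composed object remains a plain averaging sampler.

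To drive the randomness down to $m+O(\log(1/\delta)+\log(d/\eps)\log\log d)$, I would iterate this composition $O(\log\log d)$ times: at each level, the composed sampler from the previous level plays the role of the inner sampler, with its seed space compressed via an extractor-based application of \cref{cor:domain_sampler} so that each level adds only $O(\log(d/\eps))$ random bits while shrinking the dominant ``excess'' inner randomness by a multiplicative factor. After $O(\log\log d)$ levels the initial $\ell\cdot m$-style overhead of block $\ell$-wise independence has been absorbed, leaving the $\log(d/\eps)\log\log d$ term. A final outer amplification then contributes the $O(\log(1/\delta))$ confidence term.

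The main obstacle is the block-independent base sampler: generating $t$ samples in $\zo^m$ that are even approximately $\ell$-wise independent at the block level costs $\Omega(\ell m)$ random bits by direct constructions, whereas the target budget is $m+\Ot(\log(d/\eps))$. Eliminating this $\ell m$ factor---by recursively applying the averaging sampler of \cref{cor:domain_sampler} to compress the inner seed space while preserving matrix Chernoff-type concentration at each level---is where the $\log\log d$ iteration depth arises and where the bulk of the technical work of the proof lies.
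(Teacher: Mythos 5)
Your seed-amplification composition (define $g(y)=\1[\norm{\hat M_y-\E f}\le\eps/2]$ on the base sampler's seed space, sample its seeds with the strong averaging sampler, and tolerate an $O(\eps)$ fraction of bad seeds) is exactly the paper's \cref{lem:another_matrix_composition}, and your trace-moment bound for almost $\ell$-wise independent matrix sums matches \cref{lem:matrix_k_wise_concentration}. The difficulty is in how you propose to start and how you propose to iterate.

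Your base sampler uses block $\ell$-wise independence directly over $\zo^m$, costing $\Omega(\ell m)$ random bits, and you propose to drive this down by iterating a composition in which ``the composed sampler from the previous level plays the role of the inner sampler, with its seed space compressed via an extractor-based application of \cref{cor:domain_sampler}.'' This compression step does not exist: every composition in the paper (sub-sampling via \cref{lem:matrix_composition}, or seed-sampling via \cref{lem:another_matrix_composition}) strictly \emph{adds} randomness, since to produce a pseudorandom seed in $\zo^{n_1}$ an averaging sampler or extractor needs at least $n_1+\Omega(\log(1/\delta))$ input bits. There is no mechanism by which $O(\log\log d)$ rounds of any such composition can absorb an initial $\ell m \approx m\log(d/\delta)$ overhead down to $m+\Ot(\log(d/\eps))$ --- you simply cannot shrink a seed below $m$ bits. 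The paper avoids this by \emph{never} running $\ell$-wise independence over $\zo^m$: the base matrix sampler is \cref{lem:ours_based_matrix_sampler}, obtained by a union bound over matrix entries from the averaging sampler of \cref{cor:domain_sampler}, which costs only $m+O(\log(d/\delta))$ bits but has sample complexity $\poly(d,1/\eps,\log(1/\delta))$. The $\ell$-wise-independence tool (\cref{lem:matrix_ell_wise_sampler}) is then applied only on small seed domains $[t_1]$ (of size $\poly(d,1/\eps)$), where its cost is $O(\log(d/\delta))$ bits, to \emph{reduce sample complexity} by a square-root factor; iterating this $O(\log\log d)$ times (\cref{lem:induction_lemma}) brings $d^{O(1)}$ down to $d^{O(1)}$-free while adding $O(\log(d/\delta)\log\log d)$ bits. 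Only then is your (correct) seed-amplification step applied to replace the $\delta$-dependence of the iterated sampler with the cheap $O(\log(1/\delta))$ of a single averaging sampler, followed by one more sub-sampling round to restore the sample count. The $\log\log d$ iterations are thus a sample-reduction ladder built on top of a randomness-optimal base, not a randomness-compression ladder built on top of an $\ell m$-expensive one.
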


Additionally, we construct a matrix sampler achieving asymptotically optimal randomness complexity, though at the cost of increased sample complexity. This breaks the $d^2$ barrier in sample complexity for randomness-optimal matrix samplers.

\begin{restatable}{mtheorem}{polydSampler}
    \label{thm:polydSampler}
        For any constant $\alpha > 0$,  there exists an efficient $d$-dimensional $(\delta, \eps)$ matrix sampler for domain $\zo^m$ that uses $m + O(\log (d / \delta))$ random bits and $O(\frac{d^\alpha}{\eps^{2+\alpha}}\log^{1+\alpha}\frac{1}{\delta})$ samples.
\end{restatable}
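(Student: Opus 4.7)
The plan is to apply the reduction \cref{lem:sampler_implies_matrix_sampler} (the union-bound trick over $d^2$ matrix entries) to the randomness-optimal averaging sampler of \cref{cor:domain_sampler}. This immediately delivers the randomness bound: invoking \cref{cor:domain_sampler} at entrywise accuracy $\Theta(\eps/d)$ and failure $\Theta(\delta/d^2)$ costs $m + O(\log(d^2/\delta)) = m + O(\log(d/\delta))$ random bits, matching the theorem's first parameter.

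The delicate task is obtaining sample complexity $O(d^\alpha/\eps^{2+\alpha}\cdot\log^{1+\alpha}(1/\delta))$. A naive substitution into \cref{cor:domain_sampler} with the free parameter $\alpha'$ yields $((d^2/\eps^2)\log(d/\delta))^{1+\alpha'}$ samples, whose $d$-dependence is $d^{2+O(\alpha')}$, which is worse than $d^\alpha$ when $\alpha$ is small. To improve the $d$-exponent without sacrificing randomness, I would not apply the union-bound reduction in a single shot. Instead, I would compose two layers: an inner sampler obtained from \cref{lem:sampler_implies_matrix_sampler} run at only a constant failure probability and accuracy $\Theta(\eps)$ (so the internal union bound pays only a modest factor in $d$), followed by an outer application of \cref{cor:domain_sampler} acting on the seed space of the inner sampler with target failure $\delta$ and accuracy $\Theta(\eps)$. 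The outer layer contributes its $T = O((\eps^{-2}\log(1/\delta))^{1+\alpha_{\mathrm{out}}})$ samples and adds only $O(\log(1/\delta))$ to the randomness, while the inner layer contributes a $d$-polynomial multiplicative factor in samples but keeps randomness at $m + O(\log d)$. Choosing $\alpha_{\mathrm{out}}$ and the inner $\alpha'$ as specific functions of the target $\alpha$ is what collapses the overall $d$-exponent down toward $d^\alpha$.

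The main obstacle I expect is arranging the composition so that the inner $d^{\Omega(1)}$ cost in samples is actually driven down to $d^\alpha$ rather than $d^{2+O(\alpha)}$; this requires that the inner sampler be run at parameters where the union-bound loss is absorbed into the already-present $(1+\alpha)$-exponent rather than multiplied into it. A secondary issue is bookkeeping: replacing $\log(d/\delta)$ by $\log(1/\delta)$ inside the sample bound, which should follow from the standing hypothesis $\delta\le\eps$ together with the fact that any surplus $\log d$ factor in the log term can be absorbed into the $d^\alpha$ factor using $\log d \le d^\alpha/C$ for large $d$ and a constant-factor slack in $\alpha$. Once the parameters are balanced, the composition analysis (bounding bad seeds of the inner sampler via the outer averaging sampler's real-valued accuracy guarantee, then applying the triangle inequality in spectral norm) proceeds in the same way as in the proof of \cref{thm:sampler_with_replacement}.
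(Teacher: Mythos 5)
Your composition scheme is miswired for this theorem. In the plan you describe, the outer averaging sampler produces seeds for the inner union-bound matrix sampler, and \emph{all} $t_1$ of the inner sampler's outputs are then evaluated. That is precisely the composition of \cref{lem:another_matrix_composition}, whose total sample count is the \emph{product} $t_1\,t_2$. No choice of $\alpha'$ or $\alpha_{\mathrm{out}}$ can then pull the $d$-exponent below the $d^{2(1+\alpha')}$ already baked into $t_1$: the union bound forces the base averaging sampler to run at entrywise accuracy $\Theta(\eps/d)$, hence sample complexity $\Omega(d^2/\eps^2)$, irrespective of the inner failure probability. Running the inner layer at constant $\delta_1$ only shaves a $\log$ factor (and is in any case disallowed: in \cref{lem:another_matrix_composition} the inner failure $\delta_1$ enters the combined accuracy additively as $2\delta_1$, so you need $\delta_1 = O(\eps)$, not $\Theta(1)$).

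The ingredient you are missing is \emph{sub-sampling}. The paper proves \cref{thm:polydSampler} by feeding \cref{lem:ours_based_matrix_sampler} into \cref{lem:matrix_sample_size_reduction}, which (via \cref{lem:matrix_composition}) composes the union-bound matrix sampler with an almost $\ell$-wise independent matrix sampler on the $\log t$-bit \emph{sample-index} domain (\cref{lem:matrix_ell_wise_sampler} with $s = (t/\eps_2)^\alpha$). That inner sampler queries only $O\bigl(t^{\alpha}/\eps_2^{2+\alpha}\cdot\log(d/\delta_2)\bigr)$ of the $t$ index positions, so the composed sample count is $O\bigl(t^{\alpha}/\eps^{2+\alpha}\log(d/\delta)\bigr)$ rather than $t$ times something. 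Since $t = O\bigl((d^2/\eps^2\log(1/\delta))^{1+\alpha_1}\bigr)$, the $d$-exponent in $t^{\alpha}$ is $2\alpha(1+\alpha_1)$, which can be driven to any target by shrinking the constants. The seed-resampling composition you propose is the one the paper uses in the proof of \cref{thm:sampler_with_replacement}, and its role there is to reduce $\delta$, not to cut samples.
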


\subsection{Samplers for General Normed Vector Space}
Apart from spectral norms of matrices, it is natural to study the averaging-sampling problem in other normed vector spaces \(V\).

\begin{definition}
    A function \(\Smp : \zo^n \to (\zo^m)^t\) is a \((V,\mathcal{F})\)-sampler for a normed space \(V\) and a class of functions \(\mathcal{F} \subseteq \{f:\zo^m \to V\}\) if, for every \(f \in \mathcal{F}\),
    \[
    \Pr_{(Z_1, \dots, Z_t) \sim \Smp(U_n)}
    \left[ \norm{\frac{1}{t}  \sum_i f(Z_i) - \E f} \leq \varepsilon \right]
    \geq 1 - \delta.
    \]
    We call \(\Smp\) a \(V\)-sampler when \(\mathcal{F} = \{f:\zo^m \to V \mid \|f(x)\|\le 1 \text{ for all } x\}\).
\end{definition}

% By the definition of $V$-samplers, the $d$-dimensional matrix samplers are therefore $(\C^{d \times d}, \| \cdot \|_2)$-samplers. Previous works have also studied samplers for a broader class of $\mathcal{F}$, for example, $\mathcal{F}$ being the class of subgaussian or subexponential real functions \cite{Bla19, Agr19}. How can we generalize our techniques of constructing matrix samplers on other normed vector spaces and other class of functions? We list it as a future work.

Under this definition, a \(d\)-dimensional matrix sampler is precisely a \((\C^{d\times d},\|\cdot\|_2)\)-sampler. Previous work also studied $(\R, \mathcal{F})$-samplers for a broader class of $\mathcal{F}$, such as subgaussian or subexponential real-valued functions \cite{Bla19,Agr19}. Extending our construction to other normed spaces and broader function classes remains an interesting direction for future research.

 \subsection{Randomness Extractors}

Our sampler construction has implications for randomness extractors.
A randomness extractor is a function that extracts almost-uniform bits from a low-quality source of randomness. We define the quality of a random source as its min-entropy.
\begin{definition}
    The min-entropy of a random variable $X$ is 
    
    \[ H_{\infty}(X) := \min_{x \in \supp(X)} \log\left(\frac{1}{\Pr[X = x]}\right). \]

An $(n,k)$-source is a random variable on $n$ bits with min-entropy at least $k$.
\end{definition}

Then a randomness extractor is defined as:

\begin{definition}[\cite{Nisan_Zuckerman}]
A function $\Ext : \{0,1\}^n \times \{0,1\}^d \rightarrow \{0,1\}^m$ is a $(k,\varepsilon)$ extractor if for every $(n,k)$-source $X$, the distribution $\Ext(X,U_d) \approx_{\varepsilon} U_m$. We say $\Ext$ is a \emph{strong} $(k,\varepsilon)$ extractor if for every $(n,k)$-source $X$, the distribution $(\Ext(X,Y), Y) \approx_{\varepsilon} U_{m+d}$, where $Y$ is chosen from $U_d$.
\end{definition}
 
Randomness extractors are essential tools in theoretical computer science. However, there has been little study of explicit extractors with the right dependence on $\eps$ for vanishing $\eps$. 
This is a particular concern in cryptography, where extractors are widely used as building blocks and security requirements demand superpolynomially small $\eps$~\cite{Lu2002, Vadhan2003, Canetti2000, Dodis2002, Kalai2008, Kalai2009, Dodis2009}. 
Existentially, there are extractors with seed length $d = \log(n - k) + 2 \log(1 / \eps) + O(1)$, and there is a matching lower bound \cite{RT00}. 

Zuckerman~\cite{zuckerman1997randomness} showed that averaging samplers are essentially equivalent to extractors. Specifically, an extractor $\Ext: \zo^n \times [2^d] \to \zo^m$ can be seen as a sampler that generates $\Ext(X, i)$ as its $i$-th sample point using the random source $X$. Using this equivalence, we give the first extractor construction with optimal seed length up to an arbitrarily small constant factor bigger than 1, when the min-entropy $k = \beta n$ for a large enough constant $\beta < 1$.

\begin{restatable}{mtheorem}{ourExtractor}
    \label{thm:our_extractor}
    For every constant $\alpha > 0$, there exists constant $\beta < 1$ such that for all $\eps > 0$ and $k \ge  \beta n$, there is an efficient strong $(k, \eps)$ extractor $\Ext : \zo^n \times \zo^d \to \zo^m$ with $m = \Omega(k) - \log(1 / \eps)$ and $d = (1 + \alpha)\log (n-k) + (2 + \alpha) \log (1 / \eps) + O(1)$.
\end{restatable}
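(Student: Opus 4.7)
The plan is to invoke the sampler--extractor equivalence due to Zuckerman: from any strong $(\delta, \eps)$ averaging sampler $\Smp : \zo^n \to (\zo^m)^t$, the function $\Ext(x, y) := \Smp(x)_y$ for $y \in [t]$ is a strong $(k, \eps + \delta \cdot 2^{n-k})$ extractor of seed length $d = \log_2 t$. Thus \cref{cor:domain_sampler} directly yields the desired extractor, after a short parameter translation.

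Given target $(k, \eps)$ with $k = \beta n$, I would set the sampler error to $\eps/2$ and failure probability $\delta := (\eps/2) \cdot 2^{-(n-k)}$, so that $\log(1/\delta) = (n-k) + \log(1/\eps) + O(1)$ and $\delta \cdot 2^{n-k} \leq \eps/2$, matching the target extractor error. Applying \cref{cor:domain_sampler} with the small constant $\alpha' := \alpha/3$ yields a strong $(\delta, \eps/2)$ sampler on domain $\zo^m$ using $m + O(\log(1/\delta))$ random bits and $t = O((\log(1/\delta)/\eps^2)^{1+\alpha'})$ samples. Equating the sampler's randomness with the extractor source length $n$ gives $m = n - O((n-k) + \log(1/\eps))$; taking $\beta$ sufficiently close to $1$ so that the constant hidden in $O(n-k) = O((1-\beta)n)$ is strictly less than $1$ yields $m \geq \Omega(n) - O(\log(1/\eps)) = \Omega(k) - \log(1/\eps)$ (the coefficient of $\log(1/\eps)$ can be reduced to $1$ by trimming the output by a constant factor).

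The heart of the equivalence is the following short argument that I would spell out. For a fixed $(n,k)$-source $X$ and each seed $y$, take $T_y := \{z : \Pr[\Smp(X)_y = z] > 2^{-m}\}$, so that $\Delta(\Smp(X)_y, U_m) = \E_X[f_y(\Smp(X)_y)] - \E f_y$ with $f_y := \mathbbm{1}_{T_y}$. Summing over $y$, the strong extractor error equals $\E_X[\frac{1}{t}\sum_y (f_y(\Smp(X)_y) - \E f_y)]$. Crucially, the $f_y$'s are determined solely by the distribution of $X$ (not by individual realizations), so they form a fixed tuple of $[0,1]$-valued test functions, and the strong sampler guarantee applies: the bad set $B := \{x : \frac{1}{t}\sum_y (f_y(\Smp(x)_y) - \E f_y) > \eps/2\}$ has size at most $\delta \cdot 2^n$, giving $\Pr_X[X \in B] \leq \delta \cdot 2^{n-k} \leq \eps/2$. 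Combining yields strong extractor error at most $\eps/2 + \eps/2 = \eps$.

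It remains to verify the seed length bound. Writing $N := n - k$ and $L := \log(1/\eps)$, a direct calculation gives $d = \log_2 t = (1+\alpha')\log((N+L)/\eps^2) + O(1) = (1+\alpha')\log(N+L) + (2+2\alpha')L + O(1)$. Splitting into the cases $N \geq L$ and $N < L$ bounds $\log(N+L)$ by $\log N + O(1)$ or $\log L + O(1)$ respectively, and in either case $\alpha' = \alpha/3$ absorbs the residual $\log L$ term into the available $(\alpha - 2\alpha') L = (\alpha/3) L$ slack (or into $O(1)$ when $L$ is bounded), yielding $d \leq (1+\alpha)\log N + (2+\alpha)L + O(1)$, matching the theorem. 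The proof has no substantial obstacle: \cref{cor:domain_sampler} does the heavy lifting, and the rest is a clean application of the standard sampler-to-extractor reduction together with careful parameter accounting.
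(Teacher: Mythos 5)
Your proof is correct and takes essentially the same route as the paper: set $\delta \approx \eps\cdot 2^{-(n-k)}$, instantiate \cref{cor:domain_sampler}, and pass through the Zuckerman sampler-to-extractor equivalence (\cref{lem:sampler_implies_extractor}), with the paper achieving the exact $-\log(1/\eps)$ output loss directly by writing the sampler's randomness as $\lambda(m+\log(1/\delta))$ whereas you recover it post hoc by observing that the claim is vacuous once $\log(1/\eps)=\Omega(n)$. The one addition is that you spell out the proof of the equivalence itself (which the paper simply cites), and that sketch is correct.
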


Prior to our work, extractors with a seed length dependence on $\eps$ achieving $2\log(1/\eps)$ or close to it were based on the leftover hash lemma~\cite{ILL89, BBR, IZ, HILL} and expander random walks~\cite{Gil93, Zuc07}. Extractors using the leftover hash lemma have a seed length of $n + 2 \log(1 / \eps)$, which is far from optimal. Expander random walks give a $(k, \eps)$ extractor with $k > (1 - \Omega(\eps^2))n$ and an optimal seed length of $\log(n - k) + 2\log(1 / \eps) + O(1)$. Our extractor is better than expander walks for all vanishing $\eps$ by allowing smaller entropy $k$.

In fact, if we aim to remove the $\alpha$ and achieve the optimal seed length of $\log(n - k) + 2\log(1 / \eps) + O(1)$ to match expander random walks, we can set $s = 1$ in \cref{thm:general_sampler} and get the following extractor for entropy rate $1-O(1/\log n)$ for $\eps \ge 1/\poly(n)$:

\begin{restatable}{mtheorem}{OurOptimalExtractor}
    \label{thm:our_optimal_extractor}
    There exists constant $\beta < 1$ such that for all $\eps > 0$ and $k \ge ( 1 - \frac{\beta}{\log n + \log(1/\eps)} ) n$, there is an efficient strong $(k, \eps)$ extractor $\Ext : \zo^n \times \zo^d \to \zo^m$ with $m = \Omega(k) - \log^2(1 / \eps)$ and $d = \log (n-k) + 2\log (1 / \eps) + O(1)$.
\end{restatable}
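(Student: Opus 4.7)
My plan is to apply the sampler-to-extractor equivalence of \cite{zuckerman1997randomness} to the sample-optimal averaging sampler given by \cref{lem:optimal_sample_sampler} (the $s=1$ specialization of \cref{thm:general_sampler} alluded to in the excerpt), using an appropriate choice of failure probability $\delta$.

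First, I would invoke the standard equivalence: if $\Smp : \zo^n \to (\zo^m)^t$ is a strong $(\delta,\eps)$ averaging sampler, then setting $\Ext(x,i) := \Smp(x)_i$ produces a strong $(k, O(\eps))$ extractor with $k = n - \log(1/\delta) + O(1)$ and seed length $d = \log t$. Choosing $\delta = 2^{-(n-k)}$ and invoking \cref{lem:optimal_sample_sampler} gives a sampler with $t = O((n-k)/\eps^2)$ samples, hence
\[
d = \log t = \log(n-k) + 2\log(1/\eps) + O(1),
\]
matching the claimed seed length exactly.

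Next I would verify the output length bound $m = \Omega(k) - \log^2(1/\eps)$ by checking the sampler's randomness budget. \cref{lem:optimal_sample_sampler} consumes $m + O\bigl((n-k)(\log(1/\eps) + \log(n-k))\bigr)$ random bits, which must be at most $n$. Under the entropy-rate hypothesis $n - k \leq \beta n / (\log n + \log(1/\eps))$, each of $(n-k)\log(1/\eps)$ and $(n-k)\log n$ is bounded by $\beta n$, so the total overhead is $O(\beta n)$. Choosing $\beta$ a sufficiently small constant yields $m \geq (1 - O(\beta))n = \Omega(n) \geq \Omega(k)$. The subtracted $\log^2(1/\eps)$ absorbs boundary effects and the $O(1)$ slack in the sampler-to-extractor conversion, in particular the regime where $\log(1/\eps) \gg \log n$ and the term $(n-k)\log(1/\eps)$ can itself be as large as $\log^2(1/\eps)$.

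The main obstacle is the bookkeeping: achieving the exact constant $2$ in front of $\log(1/\eps)$ (rather than the $2 + \alpha$ of \cref{thm:our_extractor}) requires that the sampler-to-extractor reduction preserve $\log t$ as the seed length up to an additive constant, and that the \emph{strong} sampler guarantee (with distinct test functions $f_i$ per coordinate) translate cleanly into the \emph{strong} extractor guarantee (near-uniformity of $(\Ext(X,Y),Y)$). Both are standard in Zuckerman's framework, but care is needed because the per-source randomness overhead of \cref{lem:optimal_sample_sampler} is relatively large, so the entropy-rate hypothesis is tight rather than comfortably slack, and any multiplicative loss in the reduction would force a strictly worse hypothesis on $k$.
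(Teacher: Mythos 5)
Your high-level route is the same as the paper's: pick $\delta$ as a function of the target entropy gap, invoke \cref{lem:optimal_sample_sampler} to get a strong sampler with $O(\log(1/\delta)/\eps^2)$ samples, and push it through \cref{lem:sampler_implies_extractor}. But there is a concrete error in the translation step that breaks the argument as written.

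You quote the sampler-to-extractor reduction as giving a $(k,O(\eps))$ extractor with $k = n - \log(1/\delta) + O(1)$. The correct threshold (see \cref{lem:sampler_implies_extractor}) is $k = n - \log(1/\delta) + \log(1/\eps)$: the ``bad'' set from the sampler has measure $\delta$, so its mass under a $k$-source is $\delta 2^{n-k}$, and one needs $\delta 2^{n-k} \leq \eps$, which costs an extra $\log(1/\eps)$ of entropy, not an additive constant. With your choice $\log(1/\delta) = n-k$, the resulting object is a strong $(k + \log(1/\eps), 2\eps)$ extractor, not a $(k,\eps)$ extractor — it simply does not satisfy the claimed entropy hypothesis. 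The fix is to set $\log(1/\delta) = (n-k) + \log(1/\eps)$, as the paper does. But this changes your two subsequent calculations: the sample complexity becomes $t = O\bigl(((n-k) + \log(1/\eps))/\eps^2\bigr)$, so the seed length is $\log\bigl((n-k)+\log(1/\eps)\bigr) + 2\log(1/\eps) + O(1)$, and you now have to argue separately that $\log(1/\eps) = O(n-k)$ in the relevant regime to recover the claimed $\log(n-k) + 2\log(1/\eps) + O(1)$ form. Likewise the randomness overhead becomes $O\bigl(((n-k)+\log(1/\eps))(\log(1/\eps)+\log n)\bigr)$, and it is the cross terms $\log^2(1/\eps) + \log(1/\eps)\log n$ introduced by the corrected $\delta$ — not a generic ``$O(1)$ slack'' — that produce the $-\log^2(1/\eps)$ in the output length. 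Your instinct that $\log^2(1/\eps)$ is absorbing something is right, but you have misattributed what it absorbs, and more importantly the $\delta$ you chose yields an extractor with the wrong entropy threshold.
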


This is better than expander random walks' entropy rate of $1 - O(\eps^2)$ for all $\eps \le o(1 / \sqrt{\log n})$.

\subsection{List-Decodable Codes}

Another perspective on averaging samplers is its connection to error-correcting codes. 
Ta-Shma and Zuckerman~\cite{ExtractorCodes} showed that strong randomness extractors are equivalent to codes with good soft-decision decoding, which is related to list recovery.
From this perspective, the composition scheme in our construction is similar to code concatenation. 

For codes over the binary alphabet, soft decision decoding amounts to list decodability, which we focus on here.

We give good list-decodable codes without using the composition. That is, by just applying our almost $\ell$-wise independence sampler on the binary alphabet, we can get a binary list-decodable code with rate $\Omega(\eps^{2 + \alpha})$ and non-trivial list size, although the list size is still exponential.

\begin{restatable}{mtheorem}{OurCode}
    \label{thm:our_code}
    For every constant $\alpha > 0$: there exists an explicit binary code with rate $\Omega(\eps^{2 + \alpha})$ that is $(\rho = \frac{1}{2} - \eps, L)$ list-decodable with list size $L = 2^{(1 - c)n}$ for some constant $c = c(\alpha) > 0$.    
\end{restatable}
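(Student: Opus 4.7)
The plan is to exploit the direct equivalence between strong averaging samplers on the binary domain and binary list-decodable codes, applied to the almost $\ell$-wise independence sampler (without the Reingold--Vadhan--Wigderson composition used for \cref{cor:domain_sampler}). Given a strong $(\delta, \eps)$ averaging sampler $\Smp : \zo^n \to \zo^t$ with sample space $\zo$, define the code $C(x) := \Smp(x)$; it has dimension $n$ and block length $t$. For any received word $w \in \zo^t$, choosing $f_i(b) := \1[b = w_i]$ (so $\E f_i = 1/2$) makes the event ``$C(x)$ lies within Hamming radius $(1/2 - \eps)t$ of $w$'' identical to the sampler-failure event $\frac{1}{t}\sum_i f_i(\Smp(x)_i) \ge \frac{1}{2} + \eps$. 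Hence the list size is at most $\delta \cdot 2^n$, and to achieve $L = 2^{(1-c)n}$ it suffices to produce $\delta \le 2^{-cn}$.

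For the sampler I take $\Smp = \phi$, where $\phi : \zo^n \to \zo^t$ is an $\eps_b$-biased generator (e.g.\ AGHP). The induced distribution is automatically $\eps_b \cdot 2^{\ell/2}$-almost $\ell$-wise independent on $\zo^t$, and each $Z_i$ is $\eps_b$-close to uniform. Writing $Y_i := (-1)^{Z_i + w_i}$, the Bellare--Rompel $\ell$-th moment method gives
\[
\E\bigl[(\textstyle\sum_i Y_i)^\ell\bigr] \le (O(\ell t))^{\ell/2} + \eps_b \cdot 2^{\ell/2} \cdot t^\ell,
\]
where the first term counts the pair-matched index tuples surviving in the fully-independent case and the second absorbs the almost-independence slack. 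Markov at threshold $2\eps t$ then yields
\[
\delta \le \left( \frac{O(\ell)}{t \eps^2} \right)^{\ell/2} + \frac{\eps_b \cdot 2^{\ell/2}}{(2\eps)^\ell}.
\]

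The crux is balancing $t, \ell, \eps_b$. I set $t = \Theta(n/\eps^{2+\alpha})$, which immediately delivers the rate $n/t = \Omega(\eps^{2+\alpha})$. Forcing the first error term to be $\le 2^{-cn}$ requires $\ell \ge 2cn/(\alpha \log(1/\eps))$, since $\log(\eps^2 t/\ell) \ge \alpha \log(1/\eps) - O(1)$ for this $t$. Forcing the second term to be $\le 2^{-cn}$ requires $\log(1/\eps_b) \ge cn + \ell(\log(1/\eps) - O(1))$; plugging this into the AGHP seed bound $n \ge \log t + 2\log(1/\eps_b) + O(1)$ caps $\ell \le n(1-2c)/(2\log(1/\eps)) + O(1)$. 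The two windows on $\ell$ overlap whenever $c(4 + 2\alpha) \le \alpha$, so I take $c = c(\alpha) := \alpha / (5 + 2\alpha)$ and pick any $\ell$ in the overlap.

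The main obstacle is exactly this parameter pinch: as $\alpha \to 0$ the upper and lower bounds on $\ell$ collide, which is why the construction cannot reach the ideal rate $\Omega(\eps^2)$ and must settle for the $(2+\alpha)$ exponent, and also why the list-size saving $c$ must shrink with $\alpha$. Everything else is routine: AGHP is polynomial-time computable so the code is explicit, and standard biased generators have $\phi$ injective in the regime $n \ll t$ we use, so the code genuinely has $2^n$ distinct codewords and dimension $n$.
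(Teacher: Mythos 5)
Your proposal follows essentially the same route as the paper: build the code from a strong averaging sampler on the binary domain via the Ta-Shma--Zuckerman correspondence (\cref{lem:sampler_implies_codes}), and instantiate that sampler with an almost-$\ell$-wise-independent / small-bias sequence, tuning $t$, $\ell$, and the bias so that $\delta\le 2^{-cn}$ while the rate $n/t$ stays $\Omega(\eps^{2+\alpha})$ --- exactly what the paper does in \cref{lem:code_sampler} (which is \cref{lem:general_ell_wise_sampler} at $m=1$, $s=1/\eps^{\alpha}$, whose proof is the moment bound of \cref{lem:k_wise_concentration}). The only difference is stylistic: you re-derive the moment bound and the seed budget directly from an AGHP-style $\eps_b$-biased generator, rather than citing \cref{lem:k_wise_complexity} and \cref{lem:k_wise_concentration} as black boxes, and your explicit parameter window $c(4+2\alpha)\le\alpha$ is the same bottleneck the paper hides inside the constant $C$ of \cref{lem:code_sampler}.
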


Prior to our work, the best known code rate was $\Omega(\eps^3)$ by Guruswami and Rudra~\cite{GR08}. We emphasize that their code achieved a list size of $L = \poly(n)$, while our list size is exponentially large, making our code unlikely to be useful.

\subsection{Techniques}
\subsubsection{Averaging Samplers}
Our construction of the averaging sampler is very simple, and is based on two observations:

\begin{enumerate}
    \item Rather than querying every sample point produced by a sampler $\Smp$, we can use an inner sampler $\Smp_{in}$ to select a subset of samples for querying.  This sub-sampling approach has been utilized in previous sampler constructions~\cite{bellare1994randomness, oded_survey}. Although $\Smp_{in}$ incurs an additional randomness cost, the final sample complexity depends only on $\Smp_{in}$, leading to reduced overall sample complexity. Since the domain of $\Smp_{in}$ is much smaller than the original domain, we can leverage more efficient sampling strategies.
    
\item  The bottleneck of generating an almost $\ell$-wise independent  sequence over a large domain $\zo^m$ lies in sampling $\ell$ independent random points, which costs $\ell m$ random bits. Since we can only afford $O(m)$ random bits, we are restricted to generating constant-wise independent  samples. However, for a much smaller domain, we can use few random bits to generate an almost $\ell$-wise independent  sequence for large~$\ell$. \label{o2}
\end{enumerate}

Our construction is outlined as follows. 
 Let $\Smp_{E}: \zo^n \times [t'] \to \zo^m$ be the extractor-based sampler in \cite{reingold2000entropy}.
 Let $Y_1, \dots, Y_{t}$ be an almost $\ell$-wise independent  sequence over domain $[t']$, thinking of $t \ll t'$. Our sampler is then defined by \[
    \Smp := (\Smp_{E}(X, Y_1), \Smp_{E}(X, Y_2), \dots, \Smp_{E}(X, Y_{t})).
 \]
In this construction, we use the almost $\ell$-wise independent  sequence to sub-sample from the extractor-based sampler. This can be viewed as a composition, similar to other cases such as Justesen codes~\cite{Jus72} and the first PCP theorem~\cite{PCP}, where the goal is to optimize two main parameters simultaneously by combining two simpler schemes, each optimizing one parameter without significantly compromising the other.

Previous works have applied almost $\ell$-wise independence in extractor constructions. Srinivasan and Zuckerman~\cite{SZ99} proved a randomness-efficient leftover hash lemma by sampling an almost $\ell$-wise independent function $f$ using uniform seeds $U_d$ and output $f(X)$, where $X$ is the weak random source. From an extractor perspective, our inner sampler takes an inverse approach: we use $X$ to pick a function $f$ in the space of almost $\ell$-wise independent functions, and then output $f(U_d)$. Furthermore, Raz's two-source extractor~\cite{Raz05} follows a more general framework, where two weak random sources to sample are used -- one to sample an almost $\ell$-wise independent function and the other as its input. However, directly applying Raz's error bound in our analysis~(\cref{lem:general_ell_wise_sampler}) results in a sample complexity that is off by a $\log (1 / \delta)$ factor.

\subsubsection{Matrix Samplers}
Using the connection between averaging samplers and matrix samplers (see \cref{lem:sampler_implies_matrix_sampler}), our averaging sampler directly implies a $(\delta, \eps)$ matrix sampler using $m + O(\log(d / \delta))$ random bits and $O((\frac{d^2}{\eps^2} \log \frac{1}{\delta})^{1 + \alpha})$ samples. This already gives the best randomness-optimal matrix sampler to date; however, its sample complexity has exponentially worse dependence on $d$ than optimal. 

Our sub-sampling technique using almost $\ell$-wise independence offers a way to further reduce sample complexity.
The composition of samplers only depends on the triangle inequality, which also applies to spectral norms.
The remaining task is to verify that almost $\ell$-wise independence also provides good concentration bounds for matrix sampling, which is straightforward given the extensive literature on moment inequalities for random matrices~\cite{CGT12, LT13, Tro15}.

Applying this composition, we get a $(\delta, \eps)$ matrix sampler using $m + O(\log(d / \delta))$ random bits and $O((\frac{d^\alpha}{\eps^{2 + \alpha}} \log^{1 + \alpha} \frac{1}{\delta}))$ samples, as described in \cref{thm:polydSampler}. This is close to optimal for cases where $d < \poly(1 / \eps, \log(1 / \delta))$, though it is not yet sufficient for 
larger $d$.

However, we can apply composition recursively. By repeating the composition $O(\log\log d)$ times, the dependence on $d$ becomes $d^{\alpha^{O(\log\log d)}} = O(1)$. Each round of composition costs an additional $O(\log (d / \delta))$ random bits, resulting in a $(\delta, \eps)$ matrix sampler using $m + O(\log(d /\delta) \log \log d)$ random bits and $O((\frac{1}{\eps^2}\log\frac{d}{\delta})^{1 + \alpha})$ samples. This already gives a  matrix sampler using $m + \wt{O}(\log(d / \delta))$ random bits and near-optimal sample complexity.

To further improve the dependence on $\delta$ in randomness complexity and achieve the bound in \cref{thm:sampler_with_replacement}, we introduce an alternative way of composing samplers:

\begin{restatable}{proposition}{anotherComposition}
\label{lem:another_matrix_composition}
    Suppose we are given two efficient matrix samplers:
\begin{itemize}
    \item  Let $\Smp_{out}: \zo^{n_1} \times [t_1] \to \zo^m$ be a $(\delta_1, \eps_1)$ matrix sampler.
    \item Let $\Smp_{in}: \zo^{n_2} \times [t_2] \to \zo^{n_1}$ be a $(\delta_2, \eps_2)$ \textbf{averaging} sampler.
\end{itemize}
 Then, for uniformly random sources  $X \sim U_{n_2}$, \[
        \Smp(X):= (\Smp_{out}(\Smp_{in}(X, i), j))_{i \in [t_2], j \in [t_1]}
    \]
    is an efficient $(\delta_2, 2\delta_1 + 2\eps_2 + \eps_1)$ matrix sampler for domain $\zo^m$ with $t_1 \cdot t_2$ samples using $n_2$ random bits. 
\end{restatable}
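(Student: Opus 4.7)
The plan is to reduce the matrix-valued concentration to a scalar concentration by quantifying ``good'' versus ``bad'' seeds for the outer sampler, so that only the $\eps_2$-averaging property of $\Smp_{in}$ is needed even though $f$ is matrix-valued. Fix an arbitrary $f: \zo^m \to \C^{d\times d}$ with $\|f(x)\|\le 1$ and write $\mu = \E f$. Define the partially averaged function
\[
g(y) := \frac{1}{t_1}\sum_{j=1}^{t_1} f(\Smp_{out}(y,j)), \qquad y \in \zo^{n_1},
\]
which satisfies $\|g(y)\|\le 1$ by the triangle inequality. The quantity we want to control is precisely $\frac{1}{t_2}\sum_{i=1}^{t_2} g(\Smp_{in}(X,i)) - \mu$, since a re-bracketing shows this equals $\frac{1}{t_1 t_2}\sum_{i,j} f(\Smp_{out}(\Smp_{in}(X,i),j)) - \mu$.

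The apparent obstacle is that $g$ is matrix-valued while $\Smp_{in}$ is guaranteed only to be an averaging sampler, not a matrix sampler. The key idea is to compress the relevant information about $g$ into a single scalar indicator. Since $\Smp_{out}$ is a $(\delta_1,\eps_1)$ matrix sampler applied to $f$, the set of ``good'' outer seeds
\[
G := \set{y \in \zo^{n_1} : \norm{g(y) - \mu} \le \eps_1}
\]
has density $\Pr_{Y \sim U_{n_1}}[Y \in G] \ge 1-\delta_1$. Let $h := \1_{\zo^{n_1}\setminus G}$, which is a $[0,1]$-valued function with $\E h \le \delta_1$. Now apply the $(\delta_2,\eps_2)$ averaging sampler property of $\Smp_{in}$ to $h$: with probability at least $1-\delta_2$ over $X \sim U_{n_2}$,
\[
\frac{1}{t_2}\sum_{i=1}^{t_2} h(\Smp_{in}(X,i)) \le \E h + \eps_2 \le \delta_1 + \eps_2.
\]
In other words, except on a $\delta_2$-fraction of $X$'s, the fraction of indices $i$ with $\Smp_{in}(X,i) \notin G$ is at most $\delta_1 + \eps_2$.

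Conditional on this event, split the $t_2$ indices into good ones (where $\Smp_{in}(X,i)\in G$) and bad ones. For each good index, $\norm{g(\Smp_{in}(X,i)) - \mu}\le \eps_1$, and for each bad index the same quantity is at most $\norm{g(\cdot)} + \norm{\mu} \le 2$. By the triangle inequality on the spectral norm,
\[
\norm{\frac{1}{t_2}\sum_{i=1}^{t_2} g(\Smp_{in}(X,i)) - \mu} \le (1-\delta_1-\eps_2)\eps_1 + (\delta_1+\eps_2)\cdot 2 \le \eps_1 + 2\delta_1 + 2\eps_2,
\]
which yields exactly the claimed $(\delta_2, 2\delta_1 + 2\eps_2 + \eps_1)$ matrix sampler guarantee. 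The sample count $t_1 t_2$, the domain $\zo^m$, and the randomness budget $n_2$ are immediate from the construction, and efficiency follows from the efficiency of $\Smp_{in}$ and $\Smp_{out}$. The only nontrivial step is the scalar compression via $h$; everything else is a triangle-inequality bookkeeping argument.
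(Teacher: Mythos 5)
Your proof is correct and follows the same strategy as the paper: compress the matrix-valued information into a scalar indicator of ``bad'' outer seeds, apply the averaging-sampler guarantee to that indicator, and finish with the triangle inequality on the spectral norm. The only cosmetic difference is that you split indices into good and bad explicitly, whereas the paper phrases the same bound via the pointwise inequality $\|g(x)-\mu\| \le 2h_f(x) + \eps_1$; the arithmetic and the resulting error $\eps_1 + 2\delta_1 + 2\eps_2$ are identical.
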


This essentially says, composing a good $(\eps, \eps)$ matrix sampler $\Smp_{out}$ and a good $(\delta, \eps)$ standard averaging sampler $\Smp_{in}$ would give a  good $(\delta, O(\eps))$ matrix sampler. Although this slightly increases the sample complexity, we can use our sub-sampling technique to reduce it later.

Unlike sub-sampling, in the composition of~\cref{lem:another_matrix_composition}$, \Smp_{in}$ generates multiple random seeds for $\Smp_{out}$, and we query all the samples it produces. This approach effectively reduces the error probability of $\Smp_{out}$ from $\eps$ to $\delta$. The key idea is that only an \(O(\eps)\) fraction of the seeds generated by \(\Smp_{in}\) lead to failure in \(\Smp_{out}\), contributing only a tolerable \(O(\eps)\) additive error in the estimate of \(\E f\).  The reasoning is straightforward: at most an \(\eps\) fraction of all possible seeds for \(\Smp_{out}\) cause failure, and with probability \(1 - \delta\), \(\Smp_{in}\) does not oversample these failure seeds by more than an additional \(\eps\) proportion. As a result, the final proportion of failure seeds remains bounded by \(O(\eps)\).

\begin{remark}
We can also define \emph{strong matrix samplers} as a matrix analog of strong averaging samplers. All results for matrix samplers in this paper would hold for strong matrix samplers as well, with proofs following similar arguments. However, for simplicity, we present our results in the non-strong case only.
\end{remark}

\section{Preliminaries}

\paragraph{Notation.}  We use $[t]$ to represent set $\{1, \dots, t\}$. For integer $m$, $U_m$ is a random variable distributed uniformly over $\zo^m$. For random variables $X$ and $Y$, we use $X \approx_{\eps} Y$ to represent the statistical distance (total variation distance) between $X$ and $Y$ is at most $\eps$, i.e., \[
    \max_{T \subseteq \supp(X)} \abs{\Pr_{x \sim X}[x \in T] - \Pr_{y \sim Y}[y \in T]} \le \eps.
\]

We refer to an algorithm as ``efficient'' if it is polynomial-time computable.
For simplicity, we omit domain sizes for samplers and matrix dimensions when context permits. Unless otherwise specified, statements such as “there exists a \( (\delta, \eps) \) sampler” mean that for all \( 0 < \delta \le \eps < 1 \), there exists a \( (\delta, \eps) \) sampler with the stated properties.
\begin{remark}
    \label{rem:choice_of_parameters}
    The condition $\delta \le \eps$ is very mild and holds in nearly all applications.
    This requirement can be relaxed to \( \delta \le  \eps^\alpha \)  for averaging samplers, and to \( \delta \le d\eps^\alpha \) for matrix samplers, where $\alpha$ is an arbitrarily small positive constant. Such relaxations do not alter the results.

    In the extreme case where $\delta > \eps^\alpha$ for every constant $\alpha > 0$, pairwise independence is already a near-optimal averaging sampler (see~\cref{lem:pairwise_independence}). Specifically, this yields an efficient strong sampler with $O(1 / (\delta \eps^2)) \le O(1 / \eps^{2+\alpha})$ samples, using only $O(m + \log (1 / \eps))$ random bits. 
    Similarly, for matrix samplers under the condition \( \delta > d \eps^\alpha \) for all \( \alpha > 0 \), pairwise independence also achieves near-optimal efficiency with \( O(1 / \eps^{2+\alpha}) \) samples and \( O(m + \log (1 / \eps)) \) random bits.
\end{remark}

\subsection{Extractor-Based Sampler}
As mentioned above, averaging samplers are equivalent to extractors. We will introduce this in detail in \cref{sec:extractors}.
Reingold, Vadhan, and Wigderson used this equivalence to achieve the following:

\begin{theorem}[\protect{\cite[Corollary 7.3]{reingold2000entropy}}, see also \protect{\cite[Theorem 6.1]{oded_survey}}]
\label{thm:optimal_existing_sampler}
For every constant $\alpha > 0$, there exists an efficient $(\delta, \eps)$ averaging sampler over $\zo^m$ with $\poly(1 / \eps, \log(1/\delta))$ samples using $m + (1 + \alpha) \cdot \log_2(1/\delta)$ random bits.
\end{theorem}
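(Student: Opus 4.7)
I would prove this via the sampler--extractor equivalence, which is stated later in the paper: any strong $(k,\eps/2)$-extractor $\Ext:\zo^n\times\zo^d\to\zo^m$ with $k=n-\log(1/\delta)-1$ yields a $(\delta,\eps)$ strong averaging sampler of the form $\Smp(x)=(\Ext(x,i))_{i\in[2^d]}$, using $n$ random bits and $2^d$ samples. Under this reduction, the target randomness complexity $n=m+(1+\alpha)\log(1/\delta)$ corresponds to extracting $m$ bits from an $(n,k)$-source with entropy deficiency $n-k\le\log(1/\delta)+O(1)$, while the sample complexity $\poly(1/\eps,\log(1/\delta))$ corresponds to seed length $d=O(\log(1/\eps)+\log\log(1/\delta))$. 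So the entire task reduces to explicitly constructing a strong extractor in this near-lossless, high-min-entropy regime.

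To build this extractor, I would follow the RVW recipe based on the zigzag product. The core primitives are:
\begin{enumerate}
    \item An explicit condenser that transforms the $(n,k)$-source into a (distribution close to a) block source of the same total entropy but with entropy spread evenly across blocks of manageable size.
    \item An explicit block-source extractor obtained by recursively composing two smaller extractors via the zigzag product, which combines them without incurring the usual ``two independent sources'' cost in seed length.
\end{enumerate}
A base case for the recursion can be provided by a direct construction such as the leftover hash lemma applied at a scale where $n$ is polylogarithmic in the original parameters. Iterating the zigzag composition a constant number of times (depending on $\alpha$) collapses the overhead per step into an $\alpha$-fraction loss in entropy rate and an additive $O(\log(1/\eps))$ loss in seed length.

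The main obstacle, and the part where all the care is required, is extracting \emph{nearly all} of the min-entropy: we can afford to lose only an $\alpha$-fraction of $\log(1/\delta)$ in entropy, because the budgeted randomness is exactly $n=m+(1+\alpha)\log(1/\delta)$. A naive application of RVW's extractor (which typically outputs $(1-\alpha)k$ bits) would waste an $\alpha$-fraction of $k\approx m$ entropy, giving $(1+\Theta(\alpha))m$ random bits rather than $m+(1+\alpha)\log(1/\delta)$. Avoiding this requires exploiting the high-entropy regime ($k$ very close to $n$) to show that the extra entropy loss in each composition step can be paid out of the $\log(1/\delta)$ deficiency budget rather than out of $m$; quantitatively, each zigzag step should lose only an $O(\alpha)$ fraction of the current deficiency, and the recursion terminates in $O(\log^{*}$-many$)$ levels so that the accumulated loss is still an $O(\alpha)$ fraction of $\log(1/\delta)$.

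Finally, to turn this into a strong $(\delta,\eps)$ averaging sampler, I would invoke the equivalence, checking that the extractor's strong property corresponds precisely to the strong sampler property defined earlier, and track that the seed length $d$ remains $O(\log(1/\eps)+\log\log(1/\delta))$ throughout, so that $2^d=\poly(1/\eps,\log(1/\delta))$ as claimed.
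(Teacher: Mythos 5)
The paper does not prove \cref{thm:optimal_existing_sampler}; it imports it verbatim from Reingold--Vadhan--Wigderson (Corollary 7.3) and Goldreich's survey (Theorem 6.1), and uses it as a black box inside \cref{lem:one_step_reduction}. So there is no ``paper's own proof'' to compare against.

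Taken on its own terms, your sketch is a reasonable high-level reconstruction of what the cited work actually does: translate the desired sampler parameters through Zuckerman's equivalence into a strong extractor for an $(n,k)$-source with deficiency $n-k\approx\log(1/\delta)$, seed $d=O(\log(1/\eps)+\log\log(1/\delta))$, and output length $m\approx k$, and then build that extractor by the RVW block-source/zigzag machinery. The stated target parameters are arithmetic-correct, and the identification of ``near-lossless extraction in the high-entropy regime'' as the crux is exactly right: that is the point of RVW's Section 7 rather than their earlier, simpler constructions. However, the sketch has an internal inconsistency you should resolve: you first say the zigzag composition is iterated ``a constant number of times (depending on $\alpha$),'' and a paragraph later that ``the recursion terminates in $O(\log^{*}\text{-many})$ levels.'' These are different claims and lead to different losses (a constant number of rounds gives a multiplicative $(1+\alpha)$ overhead per fixed $\alpha$; a $\log^{*}$ recursion needs the per-round loss to shrink geometrically for the total to stay $O(\alpha\log(1/\delta))$). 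More substantively, the plan is a plan, not a proof: to carry it out one needs the explicit condenser that produces the block source, the precise zigzag-product lemma for extractors, and the bookkeeping showing the seed lengths telescope into $O(\log(1/\eps)+\log\log(1/\delta))$ --- none of which is re-derived here, which is exactly why the paper cites RVW rather than reproving it.
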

For ease of presentation, we often denote an extractor-based averaging sampler by $\Smp_{E} : \zo^n \times \zo^{d} \to \zo^m$, where $\Smp_{E}(X, i)$ is the $i$-th output sample point of the sampler using randomness input $X$. Therefore, the sample complexity of $\Smp_{E}$ is $2^d$.

\subsection{Almost $\ell$-wise Independence}
A sequence $Z_1, \dots, Z_t$ is pairwise independent if the marginal distribution of every pair $(Z_{i_1}, Z_{i_2})$ is uniformly random.
Chor and Goldreich~\cite{CG89} proved that using pairwise independence, we can have a sampler using few random bits but with unsatisfying sample complexity.
\begin{lemma}[\cite{CG89}]
    \label{lem:pairwise_independence}
    For all $\delta, \eps > 0$, there exists an efficient strong $(\delta, \eps)$ averaging sampler for domain $\zo^m$ sampler with $O(1 / (\delta \eps^2))$ samples using $O(m + \log(1/\delta) + \log(1 / \eps))$ random bits.
\end{lemma}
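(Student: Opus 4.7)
The plan is to use a standard pairwise independent generator and apply Chebyshev's inequality. First I would construct a pairwise independent sequence $Z_1, \dots, Z_t$ over $\zo^m$ using $O(m + \log t)$ random bits, for example by identifying $\zo^m$ with the low-order bits of $\F_{2^k}$ where $k = \max(m, \lceil \log t \rceil)$ and setting $Z_i$ to be the truncation of $a \cdot i + b$ for uniformly random $a, b \in \F_{2^k}$. Since we will ultimately pick $t = O(1/(\delta \eps^2))$, this consumes $2k = O(m + \log(1/\delta) + \log(1/\eps))$ random bits overall.

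Next, for the strong sampler property, fix an arbitrary sequence of functions $f_1, \dots, f_t : \zo^m \to [0,1]$ and consider the deviation $S = \frac{1}{t} \sum_{i=1}^t (f_i(Z_i) - \E f_i)$. Because the $Z_i$ are pairwise independent, so are the bounded random variables $f_i(Z_i)$, hence all covariances vanish and
\[
\Var(S) \;=\; \frac{1}{t^2} \sum_{i=1}^t \Var(f_i(Z_i)) \;\le\; \frac{1}{t^2} \cdot t \cdot \frac{1}{4} \;=\; \frac{1}{4t},
\]
using that any $[0,1]$-valued random variable has variance at most $1/4$. Chebyshev's inequality then gives $\Pr[|S| > \eps] \le \Var(S)/\eps^2 \le 1/(4t\eps^2)$.

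Finally, I would set $t = \lceil 1/(4\delta\eps^2) \rceil = O(1/(\delta\eps^2))$ so that the failure probability is at most $\delta$, matching the claimed sample complexity; plugging this $t$ back into the seed-length bound $O(m + \log t)$ matches the claimed randomness complexity.

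There is no real obstacle here: the pairwise independence kills covariances uniformly, so the argument for the strong version (different $f_i$ per coordinate) is identical to the argument for the standard version (a single $f$). The only minor subtlety is ensuring the pairwise independent generator accommodates $t$ possibly larger than $2^m$, which is handled by taking the field size $k = \max(m, \lceil \log t \rceil)$; this adds at most $O(\log(1/\delta) + \log(1/\eps))$ bits beyond the $m$ needed to name a domain element.
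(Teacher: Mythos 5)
Your argument is correct and is exactly the standard pairwise-independence-plus-Chebyshev proof from the cited reference~\cite{CG89}; the paper states this lemma as an imported result and does not reproduce a proof. Your handling of the strong variant (per-coordinate functions $f_i$) and the seed-length bookkeeping via $k=\max(m,\lceil\log t\rceil)$ are both sound.
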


Generalizing pairwise independence, an almost $\ell$-wise independent  sequence is a sequence of random variables such that the marginal distribution of every $\ell$ of them is close to uniform. 

\begin{definition}[\cite{NN_almost_k_wise}]
A sequence of random variables $Z_1, \dots, Z_t \in \zo^m$  is said to be $\gamma$-almost $\ell$-wise independent  if for all subsets $S \subseteq [t]$ such that $|S| \le \ell$, \[
    (Z_i)_{i \in [S]} \approx_{\gamma} U_{m \times |S|}.
\]
\end{definition}

In particular, the pairwise independent sequence mentioned above is a $0$-almost $2$-wise independent  sequence. Naor and Naor proved that such sequences can be randomness-efficiently generated.

\begin{lemma}[\cite{alon92}]
    \label{lem:k_wise_complexity}
    There exists an efficient algorithm that uses 
    % $O(\ell m + \log (1 / \gamma) + \log\log t)$ 
    $(2 + o(1))(\frac{\ell m}{2} + \log\log t) + 2 \log \frac{1}{\gamma}$ 
    random bits to generate a $\gamma$-almost $\ell$-wise independent  sequence $z_1, \dots, z_t \in \zo^m$.
\end{lemma}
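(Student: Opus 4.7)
The plan is to construct the sequence by reducing $\gamma$-almost $\ell$-wise independence to the problem of constructing small-bias sample spaces and then invoking known efficient constructions. Throughout, I identify the output with a single concatenated string $W = Z_1 Z_2 \cdots Z_t \in \zo^{tm}$.

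First, I would apply Vazirani's XOR lemma to translate the independence requirement into a bias requirement. For any subset $S \subseteq [t]$ with $|S| \le \ell$, the marginal $(Z_i)_{i \in S}$ is a distribution on $|S|m$ bits. The XOR lemma states that a distribution on $k$ bits is $\gamma$-close to uniform in statistical distance provided every non-trivial Fourier character has magnitude at most $\gamma \cdot 2^{-k/2}$. Applied here with $k \le \ell m$ and setting $\eta := \gamma \cdot 2^{-\ell m/2}$, it suffices to generate $W$ so that $|\E_W[\chi_T(W)]| \le \eta$ for every non-empty character $\chi_T$ whose support $T$ is contained in the union of some $\ell$ out of $t$ blocks of length $m$.

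Second, I would invoke an explicit small-bias construction to sample such a $W$. A black-box application of the standard AGHP $\eta$-biased generator on $N = tm$ bits uses $2\log(N/\eta) + O(1)$ random bits via, e.g., the powering-in-$\F_{2^a}$ or Legendre-symbol construction, which after substituting $\eta = \gamma \cdot 2^{-\ell m/2}$ comes out to roughly $\ell m + 2\log t + 2\log(1/\gamma)$ bits. This already proves a qualitatively similar statement but with a $\log t$ overhead rather than the desired $\log\log t$.

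To achieve the tight bound $(2+o(1))(\ell m/2 + \log\log t) + 2\log(1/\gamma)$, I would use the refined construction of \cite{alon92} tailored to $\ell$-wise independence, which exploits the fact that only \emph{low-weight} characters (supported on $\le \ell m$ coordinates within any $\ell$-block window) need to be controlled. Concretely, they sample a random polynomial of degree $<\ell$ over an extension field $\F_{2^{m'}}$ with $m' = m + O(\log\log t + \log(1/\gamma))$, evaluate it at $t$ prescribed points, and extract an $m$-bit projection using a small-bias hash; the total seed length is then $\ell m' + O(\log\log t + \log(1/\gamma)) = (2+o(1))(\ell m/2 + \log\log t) + 2\log(1/\gamma)$. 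The main obstacle is preserving the sharp $(2+o(1))$ leading constant, which requires this hybrid scheme rather than a direct invocation of a generic small-bias generator; this is precisely the contribution of \cite{alon92}, so in a streamlined write-up one simply cites their theorem, verifies that polynomial-time implementability is evident from the algebraic construction, and checks that the parameters match.
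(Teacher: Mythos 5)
The paper does not prove this lemma; it cites \cite{alon92} (Alon--Goldreich--H\aa stad--Peralta) and uses the bound as a black box, which is exactly your fall-back position at the end. Your high-level framework is also the right one: concatenate the blocks into $W \in \zo^{tm}$, apply Vazirani's XOR lemma to reduce $\gamma$-almost $\ell$-wise independence on $m$-bit blocks to bias $\eta = \gamma \cdot 2^{-\ell m/2}$ against every nonzero character supported on at most $\ell$ blocks (hence of Hamming weight $\le \ell m$), and invoke an explicit construction of such a restricted small-bias space.

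The concrete gap is in the construction you propose to hit the stated seed length. You suggest sampling a random degree-$< \ell$ polynomial over $\F_{2^{m'}}$ with $m' = m + O(\log\log t + \log(1/\gamma))$, giving $\ell m'$ seed bits. But $\ell m' = \ell m + \ell \cdot O(\log\log t + \log(1/\gamma))$, and the $\ell$-multiplicative blowup on the auxiliary terms is strictly worse than the target $(2+o(1))\bigl(\tfrac{\ell m}{2} + \log\log t\bigr) + 2\log(1/\gamma)$ whenever $\ell = \omega(1)$ --- which is exactly the regime in which the paper applies the lemma ($\ell = \Theta(\log(1/\delta)/\log s)$ can be large). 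The equality you wrote down between $\ell m' + O(\log\log t + \log(1/\gamma))$ and the target bound simply does not hold. The actual AGHP scheme achieving the lemma is a two-stage composition: first take a \emph{linear} map $G_1 : \zo^{n_1} \to \zo^{tm}$ whose image is exactly $(\ell m)$-wise independent (e.g., the generator of a dual BCH code), with $n_1 \approx \tfrac{\ell m}{2}\log(tm)$; then feed in $X$ drawn from an $\eta$-biased source on $\zo^{n_1}$ rather than a uniform one. Linearity guarantees that every nonzero weight-$\le \ell m$ character of $G_1(X)$ pulls back to a nonzero character of $X$, so $G_1(X)$ inherits bias $\eta$ against all such tests, and the XOR lemma finishes. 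The $\eta$-biased source costs $2\log(n_1/\eta) \approx \ell m + 2\log(1/\gamma) + 2\log\log(tm) + O(\log(\ell m))$ seed bits --- crucially additive, not multiplied by $\ell$ --- which is what matches the quoted bound. Your polynomial-over-a-large-field sketch loses this because it enlarges the alphabet per evaluation point rather than shrinking the randomness feeding a fixed linear encoding.
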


Using standard techniques, we have the following concentration bound for almost $\ell$-wise independent  sequences (see \cref{sec:k-wise-proof} for the proof). Similar bounds for exact $\ell$-wise independent  sequences have been shown in \cite{bellare1994randomness, DodisThesis}. 

\begin{restatable}{lemma}{kWiseConcentration}
    \label{lem:k_wise_concentration}
    Let $Z_1, \dots, Z_t \in \zo^m$ be a sequence of $\gamma$-almost $\ell$-wise independent  variables for an even integer $\ell$. Then for every sequence of functions $f_1, \dots, f_t : \zo^m \to [0, 1]$, \[
        \Prb{\abs{\frac{1}{t}\sum_{i=1}^t (f_i(Z_i) - \E f_i)} \le \eps} \ge 1 - \tuple{\frac{25{\ell}}{\eps^2{t}}}^{\ell/2} - \frac{\gamma}{\eps^\ell}.
    \]
\end{restatable}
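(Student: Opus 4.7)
The plan is to bound the deviation probability via the $\ell$-th moment method. Write $Y_i = f_i(Z_i) - \E f_i$, where $\E f_i$ denotes the expectation under uniform $Z_i$, so that each $Y_i$ takes values in $[-1,1]$. Set $S = \sum_{i=1}^t Y_i$; since $\ell$ is even, $S^\ell \geq 0$ and Markov's inequality gives
\[
\Pr\!\left[|S| \geq \eps t\right] \leq \frac{\E[S^\ell]}{(\eps t)^\ell}.
\]
So the task reduces to bounding $\E[S^\ell]$ under the actual (almost $\ell$-wise independent) distribution of the $Z_i$'s.

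The next step is to expand
\[
\E[S^\ell] = \sum_{(i_1,\dots,i_\ell) \in [t]^\ell} \E\!\left[\prod_{j=1}^\ell Y_{i_j}\right],
\]
and compare each expectation to its counterpart under the fully uniform distribution. For any fixed tuple the product depends on at most $\ell$ of the $Z_i$'s, and the function $(z_{i_1},\dots,z_{i_\ell}) \mapsto \prod_j (f_{i_j}(z_{i_j}) - \E f_{i_j})$ is bounded in $[-1,1]$. The $\gamma$-almost $\ell$-wise independence hypothesis tells us that the joint distribution on these coordinates lies within total variation distance $\gamma$ of uniform, so each per-tuple expectation differs from its uniform counterpart by $O(\gamma)$, contributing a total additive error of $O(\gamma \cdot t^\ell)$ to $\E[S^\ell]$. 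Under the uniform distribution the $Z_i$'s are fully independent and each $Y_i$ has mean zero; hence any tuple in which some index appears only once contributes nothing, and the sum reduces to tuples in which every index appears at least twice. This is the classical moment-method setup for $\ell$-wise independent sums: standard counting of partitions of $[\ell]$ into $r \leq \ell/2$ blocks of size at least two, combined with $|Y_i| \leq 1$ and the $\binom{t}{r}$ ways to choose the distinct indices, yields a bound $\E_U[S^\ell] \leq (C \ell t)^{\ell/2}$ for an absolute constant $C$.

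Combining these two contributions gives $\E[S^\ell] \leq (C\ell t)^{\ell/2} + O(\gamma) \cdot t^\ell$, and dividing by $(\eps t)^\ell$ produces the claimed bound of the form $(25\ell/(\eps^2 t))^{\ell/2} + \gamma/\eps^\ell$ after tracking constants. The main technical step to execute carefully is the combinatorial bound on $\E_U[S^\ell]$: the estimate must keep the exponent of $t$ at $\ell/2$ rather than $\ell$, and the constants coming from summing over block-size patterns must be controlled tightly enough that the constant inside the parenthesis is $25$ (or smaller). This is precisely the point where the $\ell$-wise independence is used in an essential way. The rest, namely the passage from exact to $\gamma$-almost $\ell$-wise independence via a term-by-term TV bound and the final Markov step, is routine once the uniform moment bound is in place; any factor of $2$ arising from extending TV bounds to $[-1,1]$-valued functions can be absorbed into either the constant $25$ or the $\gamma$ term without changing the form of the statement.
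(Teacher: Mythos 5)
Your proposal follows essentially the same route as the paper: reduce via Markov on the $\ell$-th moment, expand $\E[S^\ell]$ over tuples, swap the almost-$\ell$-wise distribution for the truly independent one at a per-tuple cost of $O(\gamma)$ (hence $O(\gamma t^\ell)$ total), and then bound the $\ell$-th moment in the independent case by $(C\ell t)^{\ell/2}$. The one place you diverge is that you re-derive the independent moment bound by counting tuples in which every index appears at least twice, whereas the paper simply invokes the Marcinkiewicz--Zygmund inequality (\cref{prop:MZ_inequality}) with the sharp constant $C(\ell)\le(3\sqrt 2)^\ell\ell^{\ell/2}$, which immediately gives $(18\ell t)^{\ell/2}\le(5\sqrt{\ell t})^\ell$. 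These are the same argument at different levels of granularity, since M--Z for bounded variables is proved by exactly such a combinatorial moment expansion; your version is more self-contained, the paper's is shorter and makes the constant $25$ fall out with no work. One caution: the tuples you must count are those where every distinct index occurs at least twice (not necessarily an even number of times, so they are not covered by perfect pairings alone), and you defer the constant-tracking with ``after tracking constants.'' If you want the stated constant $25$, it is cleanest to just cite M--Z, or to do the symmetrization trick (as the paper does in its matrix analog, \cref{lem:trace_moment_no_rad}) to reduce to Rademacher sums where the surviving tuples are genuinely the pairable ones.
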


\subsection{Composition of Samplers}
The idea of composing samplers has been studied before. More specifically, Goldreich proved the following proposition.
\begin{proposition}[\protect{\cite{oded_survey}}]
\label{prop:concatenation}
Suppose we are given two efficient samplers:
\begin{itemize}
    \item A $(\delta, \eps)$ averaging sampler for domain $\zo^m$ with $t_1$ samples using $n_1$ random bits.
    \item A $(\delta', \eps')$ averaging sampler for domain $\zo^{\log t_1}$ with $t_2$ samples using $n_2$ random bits.
\end{itemize}
Then, there exists an efficient $(\delta + \delta', \eps + \eps')$ averaging sampler for domain $\zo^m$ with $t_2$ samples using $O(n_1 + n_2)$ random bits. 
\end{proposition}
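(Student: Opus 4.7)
The plan is to identify the inner sampler's domain $\zo^{\log t_1}$ with the index set $[t_1]$ and use the inner sampler to sub-sample from the outer sampler's output. Concretely, define the composed sampler on independent inputs $X \sim U_{n_1}$ and $Y \sim U_{n_2}$ by first running $\Smp_1(X) = (Z_1,\dots,Z_{t_1}) \in (\zo^m)^{t_1}$, then running $\Smp_2(Y) = (I_1,\dots,I_{t_2}) \in [t_1]^{t_2}$, and finally outputting $(Z_{I_1},\dots,Z_{I_{t_2}})$. The construction uses $n_1 + n_2 = O(n_1 + n_2)$ random bits, produces $t_2$ samples, and is polynomial time because both $\Smp_1$ and $\Smp_2$ are.

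For the sampling guarantee, fix any target function $f:\zo^m \to [0,1]$ and introduce the auxiliary function $g_X : [t_1] \to [0,1]$ defined by $g_X(i) := f(Z_i)$, where the $Z_i$'s depend on $X$. Note that $\E_{i \sim U_{[t_1]}}\, g_X(i) = \tfrac{1}{t_1}\sum_{i=1}^{t_1} f(Z_i)$ is exactly the empirical average that the outer sampler is supposed to concentrate around $\E f$. I define two events: event $E_1$ that $\bigl\lvert \tfrac{1}{t_1}\sum_i f(Z_i) - \E f \bigr\rvert \le \eps$, which holds with probability at least $1-\delta$ by the outer sampler's guarantee; and event $E_2$ that $\bigl\lvert \tfrac{1}{t_2}\sum_j g_X(I_j) - \E_i g_X(i) \bigr\rvert \le \eps'$.

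The main subtlety is that $g_X$ is itself random rather than fixed, so the inner sampler's guarantee does not immediately apply. I handle this by conditioning on $X$: since $Y$ is independent of $X$, for every realization of $X$ the function $g_X$ is a fixed function $[t_1]\to[0,1]$, and the inner sampler's guarantee yields $\Pr_Y[E_2 \mid X] \ge 1 - \delta'$, so averaging over $X$ preserves the bound. A union bound then gives $\Pr[E_1 \cap E_2] \ge 1 - \delta - \delta'$, and on this event the triangle inequality gives
\[
\left| \frac{1}{t_2}\sum_{j=1}^{t_2} f(Z_{I_j}) - \E f \right| \le \left| \frac{1}{t_2}\sum_j g_X(I_j) - \tfrac{1}{t_1}\sum_i f(Z_i) \right| + \left| \tfrac{1}{t_1}\sum_i f(Z_i) - \E f \right| \le \eps' + \eps,
\]
which is the desired guarantee.

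I do not anticipate any obstacle beyond the conditioning argument above; once one recognises that sub-sampling converts the outer sampler's output into a legitimate target function for the inner sampler, the remainder is a routine triangle-inequality and union-bound computation, and the randomness bound $n_1 + n_2$ is immediate from the construction.
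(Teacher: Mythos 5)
Your proposal is correct and follows essentially the same approach the paper uses: it is the non-strong specialization of the argument the paper gives for Lemma~\ref{lem:new_NZ}, where you replace the paper's $f_{\text{avg}}$ and $f_i(\Smp_{out}(x,\cdot))$ by the single function $f$ and your auxiliary $g_X$, respectively, and then apply the same condition-on-$X$, union-bound, and triangle-inequality steps. The conditioning argument you flag as the ``main subtlety'' is indeed the key point, and you handle it correctly; the paper's Lemma~\ref{lem:new_NZ} takes the identical route (with the additional bookkeeping needed to carry a sequence $f_1,\dots,f_{t_2}$ through for the strong guarantee).
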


\subsection{Averaging Samplers Imply Matrix Samplers}
When Wigderson and Xiao first introduced matrix samplers, they observed that an averaging sampler also functions as a matrix sampler with weaker parameters, though they did not provide a formal proof. We formalize this observation below:
\begin{restatable}{lemma}{samplerImpliesMatrxSampler}
\label{lem:sampler_implies_matrix_sampler}
    A $(\delta, \eps)$ averaging sampler is a $d$-dimensional $(2d^2 \delta, 2 d \eps)$ matrix sampler.
\end{restatable}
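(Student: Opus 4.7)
The approach is to reduce matrix sampling to scalar averaging sampling by decomposing the matrix-valued function entry-wise and bounding the spectral norm of the error matrix in terms of its entries. Fix any $f:\zo^m \to \C^{d\times d}$ with $\|f(x)\|\le 1$. Since for any matrix $M$ one has $|M_{ij}| = |e_i^* M e_j| \le \|M\|$, each scalar entry satisfies $|f_{ij}(x)|\le 1$, so writing $f_{ij}(x) = g_{ij}(x) + \sqrt{-1}\, h_{ij}(x)$ gives $2d^2$ real-valued functions $g_{ij},h_{ij}:\zo^m \to [-1,1]$. Rescale each to $[0,1]$ in the obvious way, e.g., $g'_{ij} := (g_{ij}+1)/2$, and likewise for $h'_{ij}$.

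Next, apply the averaging sampler property to each of these $2d^2$ rescaled functions. For each one, with probability at least $1-\delta$ the empirical average over $Z_1,\dots,Z_t \sim \Smp(U_n)$ is within $\eps$ of the true mean. Undoing the rescaling turns this into a $2\eps$ bound on the error for the original $[-1,1]$-valued function. Taking a union bound over the $2d^2$ scalar functions, with probability at least $1 - 2d^2\delta$ the error matrix $E := \frac{1}{t}\sum_{i=1}^t f(Z_i) - \E f$ satisfies $|\operatorname{Re} E_{ij}|,|\operatorname{Im} E_{ij}| \le 2\eps$ for every $(i,j)$, hence $|E_{ij}| \le O(\eps)$ entrywise.

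Finally, I would convert this entrywise bound into a spectral-norm bound. For any $d\times d$ matrix $A$ with $|A_{ij}|\le \eta$, an application of Cauchy--Schwarz (or equivalently the inequality $\|A\| \le \|A\|_F$) gives $\|A\| \le d\eta$: for any unit vector $v$, $\|Av\|^2 = \sum_i |\sum_j A_{ij}v_j|^2 \le \sum_i \bigl(\sum_j |A_{ij}|^2\bigr) \le d^2\eta^2$. Applied to $E$ with $\eta = O(\eps)$, this yields $\|E\| \le 2d\eps$ (after absorbing the small constants from the real/imaginary split into the choice of rescaling constants, or simply accepting the stated $2d\eps$ bound as an upper bound for the appropriate reformulation of the rescaling). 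This completes the reduction.

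The main step requiring care is the constant tracking: the factor $2$ in the rescaling from $[-1,1]$ to $[0,1]$ and the factor from combining real and imaginary parts interact with the entrywise-to-spectral conversion, and in principle yield something like $2\sqrt{2}\,d\eps$ rather than exactly $2d\eps$. I would therefore phrase the rescaling to directly apply the sampler to $(g_{ij}+1)/2$ and $(h_{ij}+1)/2$ with target accuracy $\eps$, yielding per-entry real-part and imaginary-part errors of at most $\eps$ in the rescaled scale, and then use $\|E\|\le d\cdot \max_{i,j}|E_{ij}|$ together with the triangle inequality $|E_{ij}|\le|\operatorname{Re}E_{ij}|+|\operatorname{Im}E_{ij}|$ to match the stated $(2d^2\delta, 2d\eps)$ bound. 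No step is genuinely hard; the whole argument is a clean reduction via union bound plus an elementary entrywise-to-spectral inequality.
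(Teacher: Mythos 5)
Your proof is correct and follows essentially the same route as the paper: decompose entrywise into real and imaginary parts, apply the scalar sampler guarantee to each of the $2d^2$ resulting $[0,1]$-valued functions, union bound, and then convert the entrywise error bound to a spectral-norm bound via $\|A\|\le\|A\|_F\le d\max_{i,j}|A_{ij}|$. The one place where you are \emph{more} careful than the paper is the constant tracking you flag at the end: the paper applies the $(\delta,\eps)$ guarantee directly to $\mathrm{Re}(f_{ij})$ and $\mathrm{Im}(f_{ij})$, which take values in $[-1,1]$ rather than $[0,1]$, and silently omits the factor of $2$ from rescaling; a fully pedantic accounting along your (or the paper's) lines actually gives roughly $4d\eps$ rather than $2d\eps$. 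This is immaterial since the lemma is used only as a crude base case where the $d^2$ loss is what matters, and one can always absorb constants by invoking the sampler at accuracy $\eps/2$, but your instinct that the stated constant does not literally follow from the naive rescaling is right.
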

The proof is presented in \cref{sec:WXproof}.

\section{Construction of Averaging Samplers}
Our construction is based on a reduction lemma that constructs a sampler for domain $\zo^m$ based on a sampler for domain $\zo^{O(\log (1 / \eps) + \log\log(1 / \delta))}$. 
We exploit the fact that when composing averaging samplers, the final sample complexity depends on only one of the samplers. Our strategy is:
\begin{itemize}
    \item Apply the extractor sampler in \cref{thm:optimal_existing_sampler} as a $(\delta / 2, \eps / 2)$ sampler over domain $\zo^m$. This uses $m + O(\log (1 / \delta))$ random bits and generates $\poly(1/\eps, \log (1 / \delta))$ samples.
    \item By~\cref{prop:concatenation}, we only need to design a $(\delta / 2, \eps / 2)$ averaging sampler over domain $\zo^{O(\log (1 / \eps) + \log\log(1 / \delta))}$ using $O(\log(1 / \delta))$ random bits. The total sample complexity will be equal to the sample complexity of this sampler. For this sampler, we use almost $\ell$-wise independence.     
\end{itemize}

Following the idea of \cref{prop:concatenation}, 
we first prove that composing samplers maintains the properties of a strong sampler.

\begin{lemma}[Strong Composition]
    \label{lem:new_NZ}
    Suppose we are given two efficient averaging samplers:
\begin{itemize}
    \item  Let $\Smp_{out}: \zo^{n_1} \times [t_1] \to \zo^m$ be a $(\delta, \eps)$ sampler.
    \item Let $\Smp_{in}: \zo^{n_2} \times [t_2] \to \zo^{\log t_1}$ be a strong $(\delta', \eps')$ sampler.
\end{itemize}
 Then, for uniformly random sources $X_{1} \sim U_{n_1}$ and $X_{2} \sim U_{n_2}$, \[
        \Smp(X_1 \circ X_2):= (\Smp_{out}(X_1, \Smp_{in}(X_2, i)))_{i \in [t_2]}
    \]
    is an efficient $(\delta + \delta', \eps + \eps')$ strong averaging sampler for domain $\zo^m$ with $t_2$ samples using $n_1 + n_2$ random bits. 
\end{lemma}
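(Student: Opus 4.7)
The plan is to reduce the problem to one application each of the two sampler guarantees, using the classical trick of passing to the average function for the outer (non-strong) sampler, while the strong property is reserved for the inner sampler which genuinely must handle a different function in each coordinate.

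Fix an arbitrary sequence $f_1, \dots, f_{t_2} : \zo^m \to [0,1]$ witnessing the strong-sampler guarantee for $\Smp$. I would first define the averaged function $\bar f := \frac{1}{t_2}\sum_{i=1}^{t_2} f_i$, which still maps $\zo^m \to [0,1]$, and apply the (non-strong) $(\delta,\eps)$ guarantee of $\Smp_{out}$ to $\bar f$. This gives that with probability at least $1-\delta$ over $X_1 \sim U_{n_1}$,
\[
\left|\frac{1}{t_1}\sum_{j=1}^{t_1}\bar f\bigl(\Smp_{out}(X_1,j)\bigr) - \E \bar f\right| \le \eps.
\]
Swapping the two averages and using $\E\bar f = \frac{1}{t_2}\sum_i \E f_i$, this inequality is exactly
\[
\left|\frac{1}{t_2}\sum_{i=1}^{t_2}\left(\frac{1}{t_1}\sum_{j=1}^{t_1} f_i(\Smp_{out}(X_1,j)) - \E f_i\right)\right| \le \eps. \qquad (\ast)
\]

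Next, condition on any fixed $X_1$ and define $g_i^{X_1} : [t_1]\to[0,1]$ by $g_i^{X_1}(j) := f_i(\Smp_{out}(X_1,j))$. Because $(g_i^{X_1})_{i\in[t_2]}$ is a sequence of $t_2$ functions on the domain $[t_1] \equiv \zo^{\log t_1}$, I can apply the \emph{strong} $(\delta',\eps')$ guarantee of $\Smp_{in}$ to this sequence. This yields that for every fixed $X_1$, with probability at least $1-\delta'$ over $X_2\sim U_{n_2}$,
\[
\left|\frac{1}{t_2}\sum_{i=1}^{t_2}\left(g_i^{X_1}(\Smp_{in}(X_2,i)) - \frac{1}{t_1}\sum_{j=1}^{t_1} g_i^{X_1}(j)\right)\right| \le \eps'. \qquad (\ast\ast)
\]
Since this holds for every fixed $X_1$, it also holds jointly over $(X_1,X_2)\sim U_{n_1}\times U_{n_2}$ with probability at least $1-\delta'$.

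Finally, a union bound makes both $(\ast)$ and $(\ast\ast)$ hold simultaneously with probability at least $1-\delta-\delta'$, and the triangle inequality combines them into
\[
\left|\frac{1}{t_2}\sum_{i=1}^{t_2}\bigl(f_i(\Smp_{out}(X_1,\Smp_{in}(X_2,i))) - \E f_i\bigr)\right| \le \eps+\eps',
\]
which is the strong $(\delta+\delta',\eps+\eps')$ guarantee for $\Smp$. Efficiency and the randomness count $n_1+n_2$ with $t_2$ samples are immediate from the construction. There is no real obstacle here; the only subtle point is noticing that the outer sampler need only be applied to the single function $\bar f$, which both explains why strongness is not required of $\Smp_{out}$ and makes the two error terms add cleanly rather than multiplying.
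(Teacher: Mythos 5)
Your proof is correct and follows exactly the same route as the paper's: apply the non-strong guarantee of $\Smp_{out}$ to the averaged function $\bar f$, apply the strong guarantee of $\Smp_{in}$ coordinatewise to the pulled-back functions $g_i^{X_1}$, then combine by triangle inequality and union bound. The observation you flag at the end—that strongness is only genuinely needed for the inner sampler—is precisely the point the paper's argument exploits as well.
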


\begin{proof}
    Let $f_1, \dots, f_{t_2} :  \zo^m \to [0, 1]$ be an arbitrary sequence of functions, and define $f_{\text{avg}} := \frac{1}{t_2} \sum_{i=1}^{t_2} f_{i}$.
    Since $\Smp_{out}$ is a $(\delta, \eps)$ averaging sampler and $f_{\text{avg}}$ is bounded in $[0, 1]$, we have \[
        \Prb[X_1 \sim U_{n_1}]{\abs{\E_{Y \sim U_{\log t_1}}{f_{\text{avg}}(\Smp_{out}(X_1, Y))} - \E f_{\text{avg}}} \le {\eps}} \ge 1 - \delta.
    \]
    Equivalently, we can express this as 
    \begin{equation}
        \label{eq:extractor_sampler_application1}
       \Prb[X_1 \sim U_{n_1}]{\abs{\frac{1}{t_2}\sum_{i=1}^{t_2} \E_{Y \sim U_{\log t_1}}{f_i(\Smp_{out}(X_1, Y))} - \frac{1}{t_2}\sum_{i=1}^{t_2} \E f_{i}} \le {\eps}} \ge 1 - \delta.
    \end{equation}
    
    For an arbitrary $x$, view $f_i(\Smp_{out}(x, \cdot))$  as a Boolean function on domain $\zo^{\log t_1}$. Therefore, since $\Smp_{in}(X_2, 1), \dots, \Smp_{in}(X_2, t_2)$ are generated by a strong $(\delta, \eps)$ sampler,
    \begin{equation}
        \label{eq:Smp(k-1)_guarantee1}
        \Pr_{X_2}\left[{\abs{\frac{1}{t_2}\sum_{i=1}^{t_2} \tuple{f_i(\Smp_{out}(x, \Smp_{in}(X_2, i))) - \E_{Y \sim U_{\log t_1}}{f_i(\Smp_{out}(x, Y))}}} \le {\eps'}}\right] \ge 1 - {\delta'}.
    \end{equation}
  
    By the triangle inequality and a union bound over equations~\eqref{eq:extractor_sampler_application1} and \eqref{eq:Smp(k-1)_guarantee1}, we have \[
    \Prb[X_1, X_2]{
       \abs{\frac{1}{t_2}\sum_{i=1}^{t_2}\tuple{ f_i(\Smp_{out}(x, \Smp_{in}(X_2, i)))  - \E f_i}} \le \eps' + \eps} \ge 1 - \delta' - \delta .
    \]
    This proves that the sampler we constructed is a strong $(\delta + \delta', \eps + \eps')$ averaging sampler.
\end{proof}

Instantiating \cref{lem:new_NZ} with the extractor-based sampler from \cref{thm:optimal_existing_sampler} gives:

\begin{lemma}[Reduction Lemma]
\label{lem:one_step_reduction}
    For any $\alpha > 0$: For a sufficiently large constant $C > 0$,  suppose there exists an efficient $({\delta'}, \eps')$ averaging sampler $\Smp_{\text{base}}$ for domain $\zo^{C(\log (1 / \eps) + \log\log(1 / \delta))}$ with $t$ samples using $n$ random bits. Then there exists an efficient $(\delta + \delta', \eps + \eps')$ averaging sampler $\Smp$ for domain $\zo^m$ with $t$ samples using 
        $m + (1 + \alpha)\log (1 / \delta) + n$
        random bits. Moreover, if $\Smp_{base}$ is a strong sampler, then $\Smp$ is also strong.
    % \begin{itemize}
    %     \item There exists an efficient $(\delta + \delta', \eps + \eps')$ averaging sampler $\Smp$ for domain $\zo^m$ with $t$ samples using 
    %     $m + (1 + \alpha)\log (1 / \delta) + n$
    %     random bits.
    %     \item If $\Smp_{base}$ is strong, then there exists an efficient $(\delta + \delta', \eps + \eps')$ averaging sampler $\Smp$ for domain $\zo^m$ with $t$ samples using 
    %     $m + (1 + \alpha)\log (1 / \delta) + n$
    %     random bits.
    % \end{itemize}
\end{lemma}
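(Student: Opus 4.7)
The plan is to take the extractor-based sampler of Theorem~\ref{thm:optimal_existing_sampler} as an outer sampler and feed its index through $\Smp_{\text{base}}$ as an inner sampler, exactly in the style of Lemma~\ref{lem:new_NZ}. Instantiating Theorem~\ref{thm:optimal_existing_sampler} with parameters $(\delta,\eps)$ produces an efficient $(\delta,\eps)$ averaging sampler $\Smp_E : \zo^{n_E} \times [t_E] \to \zo^m$ with $n_E = m + (1+\alpha)\log(1/\delta)$ random bits and $t_E = \poly(1/\eps, \log(1/\delta))$ samples. Since $\log t_E = O(\log(1/\eps) + \log\log(1/\delta))$, picking the constant $C$ larger than the degrees of this polynomial makes $\zo^{C(\log(1/\eps) + \log\log(1/\delta))}$ large enough to hold $[t_E]$, and (by tuning $t_E$ to a power of two within the polynomial bound, which Theorem~\ref{thm:optimal_existing_sampler} permits) the identification with $[t_E]$ can be made a bijection.

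The composed sampler is then defined by $\Smp(X_1 \circ X_2) := (\Smp_E(X_1, \Smp_{\text{base}}(X_2, i)))_{i \in [t]}$ with independent seeds $X_1 \sim U_{n_E}$ and $X_2 \sim U_n$. The randomness cost is $n_E + n = m + (1+\alpha)\log(1/\delta) + n$ and the sample count is $t$, matching the claim. The strong ``moreover'' clause follows by invoking Lemma~\ref{lem:new_NZ} verbatim with $\Smp_{out} = \Smp_E$ and $\Smp_{in} = \Smp_{\text{base}}$: the $(\delta,\eps)$ guarantee of $\Smp_E$ applied to the average test function combines by triangle inequality with the strong $(\delta',\eps')$ guarantee of $\Smp_{\text{base}}$ on the induced bounded functions $y \mapsto f_i(\Smp_E(x,y))$. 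For the non-strong main statement, the identical two-step argument works after collapsing the sequence $(f_i)$ to a single test function $f$, so only the regular---not strong---concentration of $\Smp_{\text{base}}$ is needed; this is the one place where the proof of Lemma~\ref{lem:new_NZ} must be lightly rerun rather than cited directly.

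The only delicate point is the domain matching between $\Smp_{\text{base}}$'s output range $\zo^{C(\log(1/\eps) + \log\log(1/\delta))}$ and the index set $[t_E]$ of $\Smp_E$. We need every output of $\Smp_{\text{base}}$ to correspond to a valid index of $\Smp_E$ and the natural test function $y \mapsto f(\Smp_E(x,y))$ on the inner sampler's domain to have the ``right'' expectation so that the inner sampler's guarantee applies without a scaling loss. Choosing $C$ large enough and taking $t_E$ to be exactly $2^{C(\log(1/\eps) + \log\log(1/\delta))}$ (which still lies within the polynomial bound supplied by Theorem~\ref{thm:optimal_existing_sampler}) handles this cleanly by turning the identification into a bijection. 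Beyond this bookkeeping step, the proof is plug-and-play and requires no new analytical ideas.
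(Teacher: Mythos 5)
Your proposal is essentially the same as the paper's proof: instantiate the extractor-based sampler of Theorem~\ref{thm:optimal_existing_sampler} with parameters $(\delta,\eps)$ as $\Smp_{out}$, feed $\Smp_{\text{base}}$ in as $\Smp_{in}$, and invoke the composition of Lemma~\ref{lem:new_NZ}; the randomness and sample counts add up exactly as you say. One small remark on the domain-matching bookkeeping: rather than tuning $t_E$ up to a power of two (which requires arguing that the sample count of Theorem~\ref{thm:optimal_existing_sampler} can be padded without harming its guarantee), the paper simply uses that $\log t_E \leq C(\log(1/\eps)+\log\log(1/\delta))$ and lets $\Smp_{\text{base}}$ operate on a domain of that size, noting that a sampler for a larger domain restricts to one for any smaller domain by projecting each output to its first $\log t_E$ bits (with the test function extended to ignore the remaining bits; this introduces no scaling loss). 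Both handlings are fine. You are also right, and more careful than the paper's terse write-up, that the non-strong case of the lemma uses the non-strong analog of Lemma~\ref{lem:new_NZ} (or equivalently Proposition~\ref{prop:concatenation}) rather than the strong composition verbatim.
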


\begin{proof}
    By \cref{thm:optimal_existing_sampler}, there exists an explicit $(\delta, \eps)$ averaging sampler $\Smp_{E}: \zo^{n'} \times \zo^d \to \zo^m$ with \[
        n' = {m + (1 + \alpha)(\log (1 / \delta))} \quad \text{and} \quad  d = \log (\poly(1 / \eps, \log (1 / \delta))) \le  C(\log\log (1 / \delta) + \log (1/\eps))\]
        when $C$ is large enough.
    Therefore, $\Smp_{base}$ can work for domain $\zo^d$.

    This enables us to apply~\cref{lem:new_NZ}. The total number of random bits used is $n' + n = m + (1 + \alpha)\log (1 / \delta) + n$, and a total of $t$ samples are used.

    % We only verify the strong case, and the non-strong case follows similarly.

    % Next, we analyze the number of random bits that $\Smp$ needs.
    % We need $n'$ random bits to generate $x$ and we need $n$ bits to generate $y_1, \dots, y_t$. 

    % Therefore,  The lemma then follows from .
\end{proof}

Next, we show that for domain $\zo^m$ with $m \le O(\log (1 / \eps) + \log\log (1 / \delta))$, we can use an almost $\ell$-wise independent sequence to design a strong averaging sampler with near-optimal sample complexity. 

\begin{lemma}
\label{lem:general_ell_wise_sampler}
    For any $2 \le s < 1 / \delta$, there exists an efficient strong $(\delta, \eps)$ averaging sampler for domain $\zo^m$ with $O(\frac{s}{\eps^{2}}\log \frac{1}{\delta})$ samples using 
    ${\frac{(2 + o(1))(m + 2\log(1 / \eps))\log(1 / \delta)}{\log s}} + 2 \log (1 / \delta)$
    random bits.
\end{lemma}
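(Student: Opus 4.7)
The plan is to take $\Smp$ to be simply the map outputting a $\gamma$-almost $\ell$-wise independent sequence $Z_1,\dots,Z_t\in\zo^m$ produced by the explicit construction of \cref{lem:k_wise_complexity}, with $\ell,\gamma,t$ tuned so that each of the two error terms in \cref{lem:k_wise_concentration} is at most $\delta/2$. Strongness comes for free because \cref{lem:k_wise_concentration} already accommodates an arbitrary sequence of functions $f_1,\dots,f_t$, so no extra argument is needed there.

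Concretely, I would set $\ell = 2\lceil\log(2/\delta)/\log s\rceil$; since $s < 1/\delta$ forces $\log(2/\delta)/\log s > 1$, this $\ell$ is an even integer at least $4$. Next, set $\gamma = \delta\eps^\ell/2$, which makes the second error term $\gamma/\eps^\ell$ exactly $\delta/2$. Finally, set $t = Cs\log(1/\delta)/\eps^2$ for a sufficiently large absolute constant $C$, already matching the target $O(s\log(1/\delta)/\eps^2)$ sample bound. For the first error term, the bound $\ell \le 2\log(2/\delta)/\log s + 2 = O(\log(1/\delta))$ (using $\log s \ge 1$) implies that for $C$ large enough the ratio satisfies $\eps^2 t/(25\ell) \ge 2s$, so
\[
(25\ell/(\eps^2 t))^{\ell/2} \le (2s)^{-\ell/2} \le 2^{-\ell/2}\cdot s^{-\log(2/\delta)/\log s} = 2^{-\ell/2}\cdot \delta/2 \le \delta/2,
\]
where I used $\ell/2 \ge \log(2/\delta)/\log s$ to bound $s^{-\ell/2}$ and $s^{-\log(2/\delta)/\log s} = 2^{-\log(2/\delta)} = \delta/2$. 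Summing the two error bounds yields total failure probability at most $\delta$.

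The randomness count follows by direct substitution into \cref{lem:k_wise_complexity}. Writing $2\log(1/\gamma) = 2\log(2/\delta) + 2\ell\log(1/\eps)$ and combining with the $(2+o(1))\ell m/2$ term gives a total of $(1+o(1))\ell(m+2\log(1/\eps)) + 2\log(1/\delta) + O(\log\log t)$. Substituting $\ell = (2+o(1))\log(1/\delta)/\log s$ produces exactly the claimed $\frac{(2+o(1))(m+2\log(1/\eps))\log(1/\delta)}{\log s} + 2\log(1/\delta)$ bound, with the $O(\log\log t) = O(\log\log(s\log(1/\delta)/\eps^2))$ overhead and the $O(1)$ rounding slack in $\ell$ folded into the $o(1)$ factor. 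The only part that requires any real care is this last bookkeeping step — checking that the lower-order terms genuinely fit inside the $o(1)$ uniformly over the allowed parameter ranges, especially in the boundary regime where $\log(1/\delta)/\log s$ is close to its minimum value $1$ — but no new construction or idea beyond \cref{lem:k_wise_complexity} and \cref{lem:k_wise_concentration} is required.
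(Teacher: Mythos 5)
Your proof is essentially the paper's proof: same construction (output a $\gamma$-almost $\ell$-wise independent sequence via \cref{lem:k_wise_complexity}), same concentration bound (\cref{lem:k_wise_concentration}), same choice $\gamma = \delta\eps^\ell/2$, and the same way of splitting the failure probability into two $\delta/2$ pieces. The paper picks $\ell = 2\log(2/\delta)/\log s$ directly (not necessarily an even integer — a small omission your $\lceil\cdot\rceil$ repairs) and $t = 25\ell s/\eps^2 = \frac{50 s \log(2/\delta)}{\eps^2\log s}$, which is slightly tighter than your $t = Cs\log(1/\delta)/\eps^2$ by a $\log s$ factor but still within the stated $O(\cdot)$. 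The boundary concern you flag is real: when $\log(2/\delta)/\log s$ is just above $1$ your ceiling can inflate $\ell$ by a factor of nearly $2$, and this slack is not $o(1)$ there, so the precise $(2 + o(1))$ leading constant in the randomness count is not uniformly achieved — but this same integrality issue is silently present in the paper's proof, and in every downstream use (\cref{thm:general_sampler}, \cref{lem:code_sampler}) the relevant term is absorbed into an $O(\cdot)$ anyway, so it costs nothing.
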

\begin{proof}
    We begin by setting $\ell = \frac{2\log(2 / \delta)}{\log s}$, $\gamma = \frac{\delta\eps^\ell}{2}$, and $t = \frac{50s \log (2 / \delta)}{  \eps^{2} \log s}$. We then define our sampler by outputting a $\gamma$-almost $\ell$-wise independent  sequence $Z_1, \dots, Z_t \in \zo^m$. 
    Taking the parameters of \cref{lem:k_wise_concentration}, observe \[
        \tuple{\frac{25{\ell}}{\eps^2 {t}}}^{\ell/2}
        = \tuple{\frac{1}{s}}^{\ell/2} = \tuple{\frac{1}{s}}^{\frac{\log(2/\delta)}{\log s}}
        =  \frac{\delta}{2},
    \]
    and \[
        \frac{\gamma}{\eps^\ell} = \frac{\delta}{2}.
    \]
     Therefore, for every sequence of functions $f_1, \dots, f_t : \zo^m \to [0, 1]$, \[
        \Prb{\abs{\frac{1}{t}\sum_{i=1}^t (f_i(Z_i) - \E f_i)} \le \eps} \ge 1 - \delta.
    \]
     Furthermore, \cref{lem:k_wise_complexity} shows that we have an efficient algorithm that uses only \[
     (2 + o(1))\tuple{\frac{\ell m}{2} + \log\log t} + 2 \cdot {\log\frac{1}{\gamma}}  = 
     {\frac{(2 + o(1))(m + 2\log(1 / \eps))\log(1 / \delta)}{\log s}} + 2 \cdot {\log\frac{1}{\delta}}
     \]
    %  \[
    %     O(\frac{(m + \log(1/\eps)) \log (1 / \delta)}{\log s} + \log (1 / \delta))
    % \]
    random bits to generate this $\gamma$-almost $\ell$-wise independent  sequence.
\end{proof}

We remark that the construction in \cref{lem:general_ell_wise_sampler} can be replaced with a perfectly \( \ell \)-wise independent sequence. This yields a slightly weaker sampler that uses  
$
O\left( \left( \frac{m + \log(1 / \eps) + \log \log(1 / \delta)}{\log s} + 1 \right) \log \frac{1}{\delta} \right)
$
random bits and  
$
O\left( \frac{s}{\eps^2} \cdot \log \frac{1}{\delta} \right)
$
samples.  
This looser construction is already sufficient to establish \cref{cor:domain_sampler}. However, our tighter construction in \cref{lem:general_ell_wise_sampler} applies more broadly, particularly when \( m \) is small—for example, in \cref{lem:code_sampler}.

% We remark that we can also replace the construction of \cref{lem:general_ell_wise_sampler} with  a perfectly \(\ell\)-wise independent sequence. 
% This would yield a slightly worse sampler that uses 
%     $O \tuple{\tuple{\frac{m + \log(1 / \eps) + \log\log (1 / \delta)}{\log s} + 1} \log\frac{1}{\delta}}$
%     random bits and $O(\frac{s}{\eps^{2}}\log \frac{1}{\delta})$ samples.
%     This looser construction is already sufficient to be used to establish \cref{cor:domain_sampler}. However, our tighter construction in \cref{lem:general_ell_wise_sampler} enables us to apply to a wider scenario, especially when $m$ is small, for example~\cref{lem:code_sampler}.

Combining \cref{lem:one_step_reduction} and \cref{lem:general_ell_wise_sampler}, we can prove our main result about averaging samplers:

\begin{restatable}{theorem}{generalSampler}
    \label{thm:general_sampler}
     For every constant $\alpha > 0$, and for any $m \ge 1$, $\delta, \eps > 0$, and  $2 \le s \le 1 / \delta$, there exists an efficient strong $(\delta, \eps)$ averaging sampler for domain $\zo^m$  that uses
    \[m + O\tuple{\frac{\log(1/\eps) + \log\log(1 / \delta)}{\log s} \cdot \log \frac{1}{\delta} } + (3 + \alpha) \log \frac{1}{\delta}\] random bits and $O( (s / \eps^2) \cdot \log (1 / \delta))$ samples.  
\end{restatable}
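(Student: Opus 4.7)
The plan is to combine the two main building blocks developed in this section: the Reduction Lemma (\cref{lem:one_step_reduction}) handles the large domain $\zo^m$ at the cost of $m + (1+\alpha)\log(1/\delta)$ extra random bits, and the almost $\ell$-wise independent sampler of \cref{lem:general_ell_wise_sampler} provides the strong base sampler on the small domain required by the reduction.

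First I would fix the constant $C$ promised by \cref{lem:one_step_reduction} and set the base-sampler domain to $\zo^{m'}$ with $m' = C(\log(1/\eps) + \log\log(1/\delta))$. With error parameters $(\delta/2, \eps/2)$ and the given $s$ (which satisfies $2 \le s \le 1/\delta < 2/\delta$, so the hypothesis of \cref{lem:general_ell_wise_sampler} is met), that lemma supplies an efficient strong $(\delta/2,\eps/2)$ averaging sampler $\Smp_{\text{base}}$ on $\zo^{m'}$ using
\[
    t = O\!\left(\frac{s}{\eps^2}\log\frac{1}{\delta}\right)
\]
samples and
\[
    n_{\text{base}} = \frac{(2+o(1))(m' + 2\log(2/\eps))\log(2/\delta)}{\log s} + 2\log\frac{2}{\delta}
    = O\!\left(\frac{(\log(1/\eps) + \log\log(1/\delta))\log(1/\delta)}{\log s}\right) + 2\log\frac{1}{\delta} + O(1)
\]
random bits, where in the last step I absorb the $\log(2/\eps)$ term into the $\log(1/\eps)$ factor already present in the numerator.

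Next I would plug $\Smp_{\text{base}}$ into \cref{lem:one_step_reduction}. The reduction composes $\Smp_{\text{base}}$ with the extractor-based sampler of \cref{thm:optimal_existing_sampler} and, by its ``Moreover'' clause, preserves strongness; the output is thus an efficient strong $(\delta/2 + \delta/2,\,\eps/2 + \eps/2) = (\delta,\eps)$ averaging sampler on $\zo^m$. The sample complexity remains $t$, since the outer extractor sampler contributes only randomness. The randomness budget is
\[
    m + (1+\alpha)\log\frac{2}{\delta} + n_{\text{base}}
    \;=\; m + O\!\left(\frac{\log(1/\eps) + \log\log(1/\delta)}{\log s}\cdot\log\frac{1}{\delta}\right) + (3+\alpha)\log\frac{1}{\delta},
\]
matching the claimed bound; the $(3+\alpha)$ coefficient arises as $(1+\alpha)$ from the reduction plus the $2$ from the $2\log(1/\gamma)$ term of \cref{lem:k_wise_complexity}.

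There is essentially no technical obstacle here, as the work has been done in the preceding lemmas; the proof is parameter bookkeeping. The only subtle check is that the hypothesis $s \le 1/\delta$ of the theorem statement is strong enough to feed \cref{lem:general_ell_wise_sampler} (which requires $s < 2/\delta$ after we halve $\delta$), and that the $\log(1/\eps)$ appearing in $n_{\text{base}}$'s additive term $2\log(2/\eps)$ is dominated by the $\log(1/\eps)\log(1/\delta)/\log s$ term in the final bound (since $\log s \le \log(1/\delta)$), so no extra additive $\log(1/\eps)$ term appears outside the displayed expression.
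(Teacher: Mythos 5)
Your proof is correct and follows exactly the same route as the paper: instantiate \cref{lem:one_step_reduction} with the almost $\ell$-wise independent sampler of \cref{lem:general_ell_wise_sampler} as $\Smp_{\text{base}}$ on the small domain $\zo^{C(\log(1/\eps)+\log\log(1/\delta))}$, with error parameters halved, and add up the randomness. The parameter bookkeeping (absorbing $\log(2/\eps)$ into the numerator, tracing the $(3+\alpha)$ coefficient to $(1+\alpha)$ from the reduction plus $2$ from the $\gamma$-term, and checking that $s \le 1/\delta$ still satisfies the base lemma's hypothesis after halving $\delta$) is exactly what the paper does, just spelled out in slightly more detail.
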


% \generalSampler*
\begin{proof}
 By \cref{lem:one_step_reduction}, our goal is to design an efficient strong $(\delta / 2, \eps / 2)$ averaging sampler $\Smp_{base}$ for domain $\zo^{C(\log(1 / \eps) + \log\log (1 / \delta))}$ for some large enough constant $C$. By~\cref{lem:general_ell_wise_sampler}, for any $2 \le s < 1 / \delta$, such sampler exists using $O(\frac{s}{\eps^{2}}\log \frac{1}{\delta})$ samples and   
 \[
 O\tuple{\frac{(\log (1 / \eps) + \log\log(1 / \delta))\log(1 / \delta)}{\log s}} + 2 \cdot {\log \frac{1}{\delta}}
 \]
 Taking these into \cref{lem:one_step_reduction} gives us the desired bounds.
 % $O((\frac{\log(1/\eps) + \log\log(1 / \delta)}{\log s} + 1) \log (1 / \delta))$ \[
 %     (2 + o(1))\tuple{\frac{(m + 2\log(1 / \eps))\log(1 / \delta)}{\log s} + \log\frac{1}{\delta}} = 
 % \]
 % \[
 %    O\tuple{\frac{(m + 2\log(1 / \eps))\log(1 / \delta)}{\log s}} +  + \log\frac{1}{\delta}
 % \]
 % random bits.
  % The almost $\ell$-wise independence sampler defined in
% for $m = C(\log(1 / \eps) + \log\log(1 / \delta))$ satisfies these conditions, and proves the theorem.
\end{proof}

% Taking in $s$ with different values give us 

For an arbitrarily small constant $\alpha$,  by setting $s = \eps^{-2\alpha}\log^{\alpha}(1 / \delta)$ in~\cref{thm:general_sampler}, we get~\cref{cor:domain_sampler} as a corollary:
\CorollaryTwo*

We can also set $s = 2$ in \cref{thm:general_sampler} and get the following sampler with asymptotically optimal sample complexity but a worse randomness complexity. 

\optimalSampleSampler*

% \begin{corollary}
%     \label{lem:optimal_sample_sampler}
%     There exists an efficient strong $(\delta, \eps)$ averaging sampler for domain $\zo^m$ with $O(\frac{1}{\eps^{2}} \log \frac{1}{\delta})$ samples using $m + O(\log\frac{1}{\delta} (\log \frac{1}{\eps} + \log\log\frac{1}{\delta} ))$ random bits.
% \end{corollary}

% \begin{remark}
% When $m \ge \Omega(\log \log (1 / \delta))$, we can get the same asymptotic bound in \cref{lem:general_ell_wise_sampler} by just using perfect $\ell$-wise independent sequence, and that will be enough to establish~\cref{cor:domain_sampler}. Our construction in \cref{lem:general_ell_wise_sampler} allows us to get a better sampler when applied to cases where $m$ is small, for example~\cref{lem:code_sampler}.
% \end{remark}

\section{Construction of Matrix Samplers}
Before moving further, we note that non-explicitly, 
a good matrix sampler exists.
This generalizes the non-explicit sampler given in~\cite{CEG95}, with the proof deferred to~\cref{sec:non_explicit}.

\nonExplicit*

Our improved averaging sampler directly implies the best randomness-optimal matrix sampler to date.
Applying \cref{lem:sampler_implies_matrix_sampler} to our sampler in \cref{cor:domain_sampler} gives:
\begin{lemma} 
\label{lem:ours_based_matrix_sampler}
    For every constant $\alpha > 0$, there exists an efficient $d$-dimensional $(\delta, \eps)$ matrix sampler for domain $\zo^m$ using $m + O(\log (d / \delta))$ random bits and $O((\frac{d^2}{\eps^2}\log\frac{1}{\delta})^{1+\alpha})$ samples.
\end{lemma}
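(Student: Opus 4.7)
The plan is to obtain the matrix sampler by a direct reduction: feed the averaging sampler from \cref{cor:domain_sampler} through the converter provided by \cref{lem:sampler_implies_matrix_sampler}. To get a target $d$-dimensional $(\delta,\eps)$ matrix sampler, I would set $\delta' = \delta/(2d^2)$ and $\eps' = \eps/(2d)$, so that the output parameters $(2d^2\delta',\,2d\eps')$ of the converter become exactly $(\delta,\eps)$. The standing assumption $\delta \le \eps$ (or the mild relaxations discussed in \cref{rem:choice_of_parameters}) yields $\delta' \le \eps'$, so \cref{cor:domain_sampler} is applicable at these parameters.

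Next, fix a constant $\alpha_0 > 0$ to be chosen at the end, and invoke \cref{cor:domain_sampler} at parameters $(\delta',\eps')$ with constant $\alpha_0$. It produces an efficient strong $(\delta',\eps')$ averaging sampler using $m + O(\log(1/\delta')) = m + O(\log(d/\delta))$ random bits and $O\bigl((\tfrac{1}{(\eps')^2}\log(1/\delta'))^{1+\alpha_0}\bigr) = O\bigl((\tfrac{d^2}{\eps^2}\log(d/\delta))^{1+\alpha_0}\bigr)$ samples. Applying \cref{lem:sampler_implies_matrix_sampler} to this sampler converts it into the required $d$-dimensional $(\delta,\eps)$ matrix sampler with the same randomness and sample counts, and the randomness bound $m + O(\log(d/\delta))$ already matches the target.

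The one bookkeeping nuisance is that the target sample bound is written as $O\bigl((\tfrac{d^2}{\eps^2}\log(1/\delta))^{1+\alpha}\bigr)$, whereas the computation above naturally produces $\log(d/\delta)$ inside the parenthesis. I would handle this by a short case split. If $\log d \le \log(1/\delta)$, then $\log(d/\delta) \le 2\log(1/\delta)$ and the bound is immediate already with $\alpha_0 = \alpha$. Otherwise $\log d > \log(1/\delta)$, so $\log(d/\delta) \le 2\log d$, and the extra $(\log d)^{1+\alpha_0}$ factor can be absorbed into $d^{2(1+\alpha_0)}$ by taking $\alpha_0 < \alpha$, since $(\log d)^{1+\alpha_0} \le d^{2(\alpha-\alpha_0)}$ for all sufficiently large $d$ (and small $d$ contribute only a constant). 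I do not anticipate any genuine obstacle: the lemma is advertised as an immediate consequence of \cref{cor:domain_sampler} and \cref{lem:sampler_implies_matrix_sampler}, and the only point deserving care is the mild exponent bookkeeping just described, which costs only an arbitrarily small additive amount in $\alpha$.
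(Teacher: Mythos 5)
Your proposal is correct and is exactly the argument the paper intends: apply \cref{lem:sampler_implies_matrix_sampler} to the averaging sampler of \cref{cor:domain_sampler}, invoked at the rescaled parameters $(\delta/(2d^2), \eps/(2d))$, noting $\delta \le \eps$ lifts to the rescaled pair so \cref{cor:domain_sampler} applies. The paper states the lemma directly as a consequence without writing out the details; your case split on whether $\log d$ dominates $\log(1/\delta)$, absorbing the residual $\text{polylog}(d)$ factor into $d^{O(\alpha-\alpha_0)}$ by choosing $\alpha_0$ slightly below $\alpha$, is a clean way to reconcile the $\log(d/\delta)$ that naturally appears with the $\log(1/\delta)$ in the statement.
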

However, compared to the optimal sample complexity given in the non-explicit construction, our dependence on $d$ is exponentially worse.
As $d$ is potentially very large, our goal is to utilize our composition to reduce the sample complexity while not increasing the randomness complexity too much.

\subsection{One-Layer Composition}
It is easy to verify that the composition lemma holds for matrices:
\begin{lemma}[Matrix  Composition]
\label{lem:matrix_composition}
    Suppose we are given two efficient matrix samplers:
\begin{itemize}
    \item  Let $\Smp_{out}: \zo^{n_1} \times [t_1] \to \zo^m$ be a $(\delta_1, \eps_1)$ matrix sampler.
    \item Let $\Smp_{in}: \zo^{n_2} \times [t_2] \to \zo^{\log t_1}$ be a $(\delta_2, \eps_2)$ matrix sampler.
\end{itemize}
 Then, for uniformly random sources $X_{1} \sim U_{n_1}$ and $X_{2} \sim U_{n_2}$, \[
        \Smp(X_1 \circ X_2):= (\Smp_{out}(X_1, \Smp_{in}(X_2, i)))_{i \in [t_2]}
    \]
    is an efficient $(\delta_1 + \delta_2, \eps_1 + \eps_2)$ matrix sampler for domain $\zo^m$ with $t_2$ samples using $n_1 + n_2$ random bits. 
\end{lemma}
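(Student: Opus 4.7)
The plan is to mimic the proof of \cref{lem:new_NZ} (the strong composition lemma for averaging samplers), replacing absolute values by spectral norms throughout. The crucial structural fact we need is that the spectral norm is a norm, so it obeys the triangle inequality just like the absolute value. Since the matrix composition statement only concerns a single test function $f$ (unlike \cref{lem:new_NZ}, which ranges over a sequence $f_1, \dots, f_{t_2}$), we do not need any ``strong'' variant of matrix samplers here; the ordinary definition suffices for both layers.

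Concretely, I would fix an arbitrary $f : \zo^m \to \C^{d \times d}$ with $\|f(x)\| \le 1$ for all $x$. First, since $\Smp_{out}$ is a $(\delta_1, \eps_1)$ matrix sampler applied to $f$, we have
\[
\Pr_{X_1 \sim U_{n_1}}\!\left[\,\bigl\|\E_{Y \sim U_{\log t_1}} f(\Smp_{out}(X_1, Y)) - \E f\bigr\| \le \eps_1\right] \ge 1 - \delta_1.
\]
Next, for each fixed $x_1$, define $g_{x_1} : \zo^{\log t_1} \to \C^{d \times d}$ by $g_{x_1}(y) = f(\Smp_{out}(x_1, y))$. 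Since $\|f(z)\| \le 1$ for every $z$, we have $\|g_{x_1}(y)\| \le 1$ for every $y$, so $g_{x_1}$ is a valid input to the $d$-dimensional matrix sampler $\Smp_{in}$. Applying the $(\delta_2, \eps_2)$ matrix sampler guarantee to $g_{x_1}$ yields, for every $x_1$,
\[
\Pr_{X_2 \sim U_{n_2}}\!\left[\,\bigl\|\tfrac{1}{t_2}\sum_{i=1}^{t_2} g_{x_1}(\Smp_{in}(X_2, i)) - \E_{Y} g_{x_1}(Y)\bigr\| \le \eps_2\right] \ge 1 - \delta_2.
\]

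Finally, I would combine these two estimates by a union bound over the independent sources $X_1$ and $X_2$ and apply the triangle inequality for the spectral norm to conclude that with probability at least $1 - \delta_1 - \delta_2$,
\[
\bigl\|\tfrac{1}{t_2}\sum_{i=1}^{t_2} f(\Smp_{out}(X_1, \Smp_{in}(X_2, i))) - \E f\bigr\| \le \eps_1 + \eps_2.
\]
Since $f$ was arbitrary, this establishes the claim, with $t_2$ samples and $n_1 + n_2$ random bits as stated. Efficiency of the composed sampler is immediate from efficiency of the two components.

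I do not expect any real obstacle: the argument is essentially the same three-line calculation as in \cref{lem:new_NZ}, and the only place where the scalar proof uses anything specific to $[0,1]$-valued functions is the triangle inequality, which holds verbatim for $\|\cdot\|$. The one small point to be careful about is verifying that $g_{x_1}$ is a legitimate test function for $\Smp_{in}$, namely that its spectral norm is bounded by $1$ pointwise; this follows directly from the hypothesis on $f$ together with the fact that $\Smp_{out}(x_1, y)$ is just an element of the domain $\zo^m$ at which $f$ is evaluated.
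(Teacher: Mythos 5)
Your proposal is correct and is exactly the argument the paper intends: the paper's ``proof'' is simply the remark that Lemma~\ref{lem:new_NZ} carries over verbatim with spectral norms replacing absolute values, using the two-event union bound you describe. You have merely written out the details that the paper leaves implicit, including the (correct) observation that $g_{x_1}$ is a valid test function for $\Smp_{in}$ because its pointwise spectral norm is inherited from $f$.
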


The proof is essentially the same as the proof of Lemma~\ref{lem:new_NZ}, since the triangle inequality applies to spectral norms, but since we are not dealing with the strong case we only have to do a union bound for two events, a bad sample from $\Smp_{out}$ and a bad sample from $\Smp_{in}$.

The following lemma is the matrix version of \cref{lem:general_ell_wise_sampler}, and we delay its proof to \cref{sec:matrix_lemma_proof}.

\begin{restatable}{lemma}{matrixEllWiseSampler}
\label{lem:matrix_ell_wise_sampler}
    For any $2 \le s < d / \delta$, there exists an efficient $d$-dimensional $(\delta, \eps)$ matrix sampler for domain $\zo^m$ with $O(\frac{s}{\eps^{2}}\log \frac{d}{\delta})$ samples using 
    $O(\frac{(m + \log(1/\eps)) \log (d / \delta)}{\log s} + \log (d / \delta))$
    random bits.    
\end{restatable}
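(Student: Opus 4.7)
The plan is to mirror the real-valued construction in \cref{lem:general_ell_wise_sampler}, replacing the scalar concentration bound (\cref{lem:k_wise_concentration}) with a matrix analog. Specifically, I will output a $\gamma$-almost $\ell$-wise independent sequence $Z_1, \dots, Z_t \in \zo^m$ via \cref{lem:k_wise_complexity}, and tune the three parameters $\ell$, $\gamma$, $t$ so the matrix tail probability is at most $\delta$. Guided by the scalar choices but paying an extra $\log d$ factor for spectral-norm concentration, I will set
\[
\ell = \Theta\!\tuple{\tfrac{\log(d/\delta)}{\log s}}, \qquad t = \Theta\!\tuple{\tfrac{s}{\eps^2}\log\tfrac{d}{\delta}}, \qquad \gamma = \tfrac{\delta \eps^\ell}{2d},
\]
so that $(C\ell/(\eps^2 t))^{\ell/2} \le \delta/(2d)$ and $\gamma d / \eps^\ell \le \delta/2$.

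The main ingredient I will need is a matrix analog of \cref{lem:k_wise_concentration}: for a single $f : \zo^m \to \C^{d \times d}$ with $\|f\| \le 1$ and a $\gamma$-almost $\ell$-wise independent sequence $Z_1,\dots,Z_t$, I want to show
\[
\Prb{\norm{\tfrac{1}{t}\sum_{i=1}^t (f(Z_i) - \E f(Z_i))} \ge \eps} \le d\cdot\tuple{\tfrac{C\ell}{\eps^2 t}}^{\ell/2} + \tfrac{\gamma d}{\eps^\ell}
\]
for an absolute constant $C$. The extra factor of $d$ (versus the scalar case) is the standard price of passing to the spectral norm. I will obtain this via matrix Markov $\Pr[\|S\| \ge \tau] \le \tau^{-\ell}\,\E\,\Tr(S^\ell)$ applied to the Hermitian dilation of $S = \sum_i (f(Z_i)-\E f(Z_i))$, and then bound $\E\,\Tr(S^\ell)$ using the matrix Khintchine/Rosenthal-type moment inequalities from \cite{CGT12, LT13, Tro15}, which give $\E\,\Tr(S^\ell)^{1/\ell} \le O(\sqrt{\ell t} + \ell)$ for independent centered contractions; the sub-Gaussian regime $\sqrt{\ell t}$ dominates because $\ell \le t$ in our parameter window. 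The almost-independence correction is handled by the standard fact that replacing exactly $\ell$-wise independent variables by $\gamma$-almost $\ell$-wise independent ones perturbs any $\ell$-th order trace moment by at most $\gamma$ times the maximum value of $\Tr(S^\ell)$, which is bounded crudely by $d \cdot t^\ell$; this contributes the $\gamma d / \eps^\ell$ term after scaling by $(t\eps)^{-\ell}$.

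Once the matrix concentration inequality is in hand, the lemma follows by a union bound over the single function $f$ (no sequence of functions here, since we are not in the strong case), and the randomness count comes directly from \cref{lem:k_wise_complexity}: the cost is
\[
(2+o(1))\tuple{\tfrac{\ell m}{2} + \log\log t} + 2\log\tfrac{1}{\gamma} = O\!\tuple{\tfrac{(m + \log(1/\eps))\log(d/\delta)}{\log s}} + O\!\tuple{\log\tfrac{d}{\delta}},
\]
matching the claimed bound after using $\log(1/\gamma) = \log(2d/\delta) + \ell\log(1/\eps)$.

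The main obstacle I anticipate is cleanly deriving the matrix moment inequality for \emph{almost} $\ell$-wise independence with the correct constants, so that the $\log(d/\delta)$ factor appears linearly (and not, say, as $\log^2(d/\delta)$) in both the sample count and the randomness count. The scalar proof of \cref{lem:k_wise_concentration} expands $\E\bigl(\sum(f_i(Z_i)-\E f_i)\bigr)^\ell$ term by term; in the matrix case the combinatorial expansion of $\Tr(S^\ell)$ is more delicate, and one must be careful that the noncommutative cross terms do not accumulate an extra dimensional factor. I plan to handle this by reducing to the exactly $\ell$-wise independent case via a signed-measure decomposition of the almost $\ell$-wise independent distribution, then invoking the matrix Rosenthal inequality as a black box, which keeps the dimension dependence clean and isolates all error from the almost-independence in the additive $\gamma d / \eps^\ell$ term.
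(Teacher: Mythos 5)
Your proposal follows the paper's proof closely: same parameter choices (up to constants and normalization), the same reduction to the Hermitian case via dilation, the same trace-moment Markov step $\Pr[\|S\|\ge\tau]\le\tau^{-\ell}\,\Ex{\Tr(S^\ell)}$, and the same term-by-term comparison of the $\gamma$-almost $\ell$-wise independent moments against fully independent ones, paying $\gamma\,d\,t^{\ell}$ in total and hence $\gamma d/\eps^{\ell}$ after normalization. The only real difference is that where you cite a matrix Khintchine/Rosenthal moment inequality as a black box, the paper proves the needed bound $\Ex{\Tr(S^\ell)}\le d(4\ell t)^{\ell/2}$ for independent centered contractions directly via Rademacher symmetrization and a combinatorial pairing count (its Lemmas on Rademacher trace moments and symmetrization), which is an exposition choice rather than a distinct strategy.
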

Composing \cref{lem:ours_based_matrix_sampler} with \cref{lem:matrix_ell_wise_sampler} gives us the next theorem.

\begin{lemma}
    \label{lem:matrix_sample_size_reduction}
    Suppose we have an efficient $d$-dimensional $(\delta_1, \eps_1)$ matrix sampler for domain $\zo^m$ with $t$ samples using $n$ bits. For any constant $\alpha > 0$ such that $(t / \eps_2)^\alpha \le d / \delta_2$, we can construct an efficient $d$-dimensional $(\delta_1 + \delta_2, \eps_1 + \eps_2)$ matrix sampler for domain $\zo^m$ with $O(\frac{t^\alpha}{\eps_2^{2+\alpha}} \log \frac{d}{\delta_2} )$ samples using $n + O(\log (d / \delta_2))$ bits.
\end{lemma}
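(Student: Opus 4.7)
The plan is to apply the matrix composition lemma (\cref{lem:matrix_composition}) directly: take the given $d$-dimensional $(\delta_1,\eps_1)$ matrix sampler as the outer sampler $\Smp_{out}$, and construct a suitable inner sampler $\Smp_{in}$ that selects indices into $\Smp_{out}$'s sample list from the small domain $\zo^{\log t}$. Since composition preserves the sample complexity of $\Smp_{in}$ and adds the randomness costs, the task reduces to designing $\Smp_{in}$ with $O\bigl(\tfrac{t^\alpha}{\eps_2^{2+\alpha}}\log\tfrac{d}{\delta_2}\bigr)$ samples and only $O(\log(d/\delta_2))$ extra random bits.

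For $\Smp_{in}$ I would invoke the almost $\ell$-wise independence matrix sampler of \cref{lem:matrix_ell_wise_sampler}, instantiated with domain $\zo^{\log t}$, error parameters $(\delta_2,\eps_2)$, dimension $d$, and the crucial choice $s = (t/\eps_2)^\alpha$. The hypothesis $(t/\eps_2)^\alpha \le d/\delta_2$ is exactly what is needed to satisfy $s \le d/\delta_2$, so the lemma applies. The sample complexity becomes
\[
O\!\left(\tfrac{s}{\eps_2^2}\log\tfrac{d}{\delta_2}\right) = O\!\left(\tfrac{t^\alpha}{\eps_2^{2+\alpha}}\log\tfrac{d}{\delta_2}\right),
\]
which matches the target. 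For the randomness, the term $\frac{(\log t + \log(1/\eps_2))\log(d/\delta_2)}{\log s}$ has numerator proportional to $\log(t/\eps_2)$ and denominator $\log s = \alpha\log(t/\eps_2)$, so it collapses to $O(\log(d/\delta_2))$ once $\alpha$ is absorbed into the constant; the remaining $\log(d/\delta_2)$ summand is clearly of the same order.

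Putting this together via \cref{lem:matrix_composition}: the composed sampler uses $n + O(\log(d/\delta_2))$ random bits, produces $O\bigl(\tfrac{t^\alpha}{\eps_2^{2+\alpha}}\log\tfrac{d}{\delta_2}\bigr)$ samples from $\zo^m$, and is a $(\delta_1+\delta_2,\eps_1+\eps_2)$ matrix sampler. Efficiency is inherited from efficiency of the two constituent samplers.

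There is essentially no conceptual obstacle; the proof is a parameter computation built on the two black boxes already available (\cref{lem:matrix_composition} and \cref{lem:matrix_ell_wise_sampler}). The only point requiring care is verifying that the choice $s = (t/\eps_2)^\alpha$ is simultaneously (i) small enough to satisfy the range condition $s \le d/\delta_2$ of \cref{lem:matrix_ell_wise_sampler}, and (ii) large enough that the $\log s$ in the denominator of the randomness bound cancels the $\log(t/\eps_2)$ factor in the numerator. Both are automatic once the hypothesis $(t/\eps_2)^\alpha \le d/\delta_2$ is in hand.
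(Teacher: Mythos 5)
Your proof is correct and takes essentially the same route as the paper: both apply \cref{lem:matrix_ell_wise_sampler} with $s=(t/\eps_2)^\alpha$ to build the inner sampler on domain $\zo^{\log t}$, verify that the hypothesis makes $s$ admissible and the randomness cost collapse to $O(\log(d/\delta_2))$, and then invoke \cref{lem:matrix_composition}.
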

\begin{proof}
    When $(t / \eps_2)^\alpha \le d / \delta$, by setting $s = (t / \eps_2)^\alpha$ in \cref{lem:matrix_ell_wise_sampler}, we have a strong $(\delta_2, \eps_2)$ matrix sampler for domain $\zo^{\log t}$ with $O(\frac{{t}^\alpha}{\eps_2^{2+\alpha}}\log \frac{1}{\delta_2})$ samples using 
    \[
        O\left(\frac{(\log t + \log(1/\eps_2)) \log (d / \delta_2)}{\log s} + \log (d / \delta_2)\right) = O(\log (d / \delta_2))
    \]
    random bits. Then by \cref{lem:matrix_composition}, we have the theorem we want.
\end{proof}

\polydSampler*
\begin{proof}
    Set $\delta_1 = \delta_2 = \delta / 2$ and $\eps_1 = \eps_2 = \eps / 2$ in  \cref{lem:matrix_sample_size_reduction} and apply it to \cref{lem:ours_based_matrix_sampler} will give the result.
\end{proof}

\subsection{Iterated Composition}

\begin{lemma}
    \label{lem:induction_lemma}
    There's an efficient $d$-dimensional $(\delta, \eps)$ matrix sampler for domain $\zo^m$ using $m + O(\log(d / \delta) \log\log d)$ random bits and $O((\frac{\log\log d}{\eps})^5\log^{2}\frac{d}{\delta})$ samples.
\end{lemma}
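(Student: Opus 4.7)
The plan is to iterate the sample-reduction step of Lemma \ref{lem:matrix_sample_size_reduction} starting from the randomness-optimal but sample-suboptimal construction of Theorem \ref{thm:polydSampler}, using a fixed small exponent (for concreteness $\alpha = 1/2$) and performing $L = \Theta(\log\log d)$ rounds. Each round shrinks the dependence on $d$ in the sample count by a factor of $\alpha$ in the exponent, so after $L$ rounds the residual $d^{\alpha^L}$ factor is $O(1)$, while the other parameters converge to a fixed point governed by $\eps$, $\delta$, and the error budget per round.

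First, I would initialize with the sampler from Theorem \ref{thm:polydSampler} applied with parameters $(\delta_0, \eps_0) = (\delta/2, \eps/2)$ and $\alpha = 1/2$, which gives $t_0 = O(d^{1/2}\eps^{-5/2}\log^{3/2}(1/\delta))$ samples using $m + O(\log(d/\delta))$ random bits. I would then allocate per-round error budgets $\delta' = \delta/(4L)$ and $\eps' = \eps/(4L)$, so that summing the $L$ rounds of additive error still leaves the total bounded by $(\delta, \eps)$. At iteration $k$, I apply Lemma \ref{lem:matrix_sample_size_reduction} with $\delta_2 = \delta'$, $\eps_2 = \eps'$, $\alpha = 1/2$; before each invocation I must check the side condition $(t_k / \eps')^{1/2} \le d/\delta'$, which is immediate since $t_0 = O(d^{1/2})$ and the iterates only shrink. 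Each application costs an additional $O(\log(d/\delta')) = O(\log(d/\delta))$ random bits because $L$ is polylogarithmic in $d$.

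The key recursion becomes
\[
t_{k+1} \;\le\; C \cdot t_k^{1/2} \cdot (\eps')^{-5/2} \cdot \log(d/\delta')
\;\le\; C' \cdot (\log\log d)^{5/2} \cdot t_k^{1/2} \cdot \eps^{-5/2} \cdot \log(d/\delta),
\]
for absolute constants $C, C'$. Unrolling yields
\[
t_L \;\le\; t_0^{1/2^L} \cdot \bigl(C' (\log\log d)^{5/2}\, \eps^{-5/2}\, \log(d/\delta)\bigr)^{2 - 2/2^L}.
\]
Choosing $L = \lceil \log_2\log_2 d\rceil + O(1)$ makes $t_0^{1/2^L} = O(1)$, and the remaining factor simplifies to $O\bigl((\log\log d)^5 \eps^{-5} \log^2(d/\delta)\bigr)$, matching the claim. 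The randomness accounting is $m + O(\log(d/\delta))$ from the initialization plus $L \cdot O(\log(d/\delta))$ from the iterations, giving $m + O(\log(d/\delta)\log\log d)$ total.

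The main obstacle is bookkeeping rather than conceptual: I need to track carefully that the side condition $(t_k / \eps')^{1/2} \le d/\delta'$ holds at every step (which constrains both the initial $\alpha$ used in Theorem \ref{thm:polydSampler} and the iterated $\alpha$), verify that the $(\log\log d)^{5/2}$ amplification coming from the shrunken error parameters does not compound into a larger power across rounds (a geometric sum keeps it at $(\log\log d)^5$ in the fixed point), and check that $\log(d/\delta') \le O(\log(d/\delta))$ throughout, so that the per-round randomness cost absorbs the $\log L$ overhead cleanly.
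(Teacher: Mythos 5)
Your overall strategy is exactly the paper's: initialize with a randomness-optimal but sample-suboptimal matrix sampler, then iterate the sample-reduction step of \cref{lem:matrix_sample_size_reduction} for $\Theta(\log\log d)$ rounds, splitting the error budget across rounds; the paper does this by induction with per-round budgets $\delta/r,\eps/r$ where $r=\log\log d$ and per-round amplification $(\eps/r)^{-5/2}$, arriving at the same $(\log\log d)^5\eps^{-5}\log^2(d/\delta)$ fixed point. You start one step further along (from \cref{thm:polydSampler}'s $d^{1/2}$ rather than \cref{lem:ours_based_matrix_sampler}'s $d^3$), which does not change the shape of the argument.

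There are two genuine gaps in the bookkeeping. First, the claim that the side condition $(t_k/\eps')^{1/2}\le d/\delta'$ is ``immediate since $t_0 = O(d^{1/2})$'' is wrong: $t_0 = O(d^{1/2}\eps^{-5/2}\log^{3/2}(1/\delta))$ also carries $\eps$- and $\delta$-factors, and when $d$ is small while $\eps,\delta$ are tiny (still with $\delta\le\eps$) one has $(t_0/\eps')^{1/2} \gg d/\delta'$, so the exponent $\alpha=1/2$ is not legal. The paper sidesteps this by taking $\alpha<1/2$ small enough to satisfy the constraint and still invoking $t^{\alpha}\le t^{1/2}$ and $(\eps')^{-(2+\alpha)}\le(\eps')^{-5/2}$ for the recursion; you need to do the same, and also say a word about why $\alpha$ can be taken a \emph{universal} constant (using $\delta\le\eps$), since the $O(\log(d/\delta_2))$ randomness in \cref{lem:matrix_sample_size_reduction} hides a $1/\alpha$ factor. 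Second, the assertion ``$t_0^{1/2^L}=O(1)$ after $L=\lceil\log_2\log_2 d\rceil+O(1)$'' is also false in general for the same reason (it retains $\eps^{-5/(2\cdot 2^L)}$, unbounded when $d$ is small and $\eps$ tiny). The correct statement, which is what your displayed formula actually needs after writing it as $t_L = A^2\cdot(t_0/A^2)^{1/2^L}$ with $A=C'(\log\log d)^{5/2}\eps^{-5/2}\log(d/\delta)$, is that $(t_0/A^2)^{1/2^L}=O(1)$, which holds because $t_0/A^2 = O(d^{1/2}\eps^{5/2}/\log^{1/2}(d/\delta)) = O(d^{1/2})$. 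The final bound is right, but the stated justification does not support it.
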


    \begin{proof}
    Let $r = \log\log d$.
    We will prove that there exists a constant $C > 0$ such that for each $i \in [r]$, there exists an efficient $\left( \frac{i \delta}{r}, \frac{i \varepsilon}{r}\right)$ matrix sampler using at most $m + i \cdot C \log\frac{d}{\delta}$ random bits and $t_i$ samples, where
    \[
        t_i = d^{2^{3-i}} \cdot C \cdot \frac{r^5}{\eps^5} \log^2\frac{d}{\delta}.
    \]
    The $i = r$ case proves the lemma. 

    We will prove by induction on $i$ from $1$ to $r$.
    \paragraph{Base Case ($i = 1$):}
    By \cref{lem:ours_based_matrix_sampler}, there exists an efficient $d$-dimensional $(\frac{\delta}{r}, \frac{\varepsilon}{r})$ matrix sampler using $m + C_1( \log\frac{d}{\delta/r})$ random bits and $C_2\frac{d^3}{(\eps/r)^3}\log^{1.5} \frac{r}{\delta}$ samples for some constants $C_1, C_2 > 0$. When $C \ge 2C_1$ and $C \ge C_2$, we have \[
        m + C_1\tuple{\log\frac{d}{\delta/r}} \le m + C\left( \log\frac{d}{\delta} \right)
    \]
         and
    \[
    C_2\frac{d^3}{(\eps/r)^3}\log^{1.5} \frac{r}{\delta} \le d^{2^{2}} \cdot C \cdot \frac{r^5}{\eps^5} \log^2\frac{d}{\delta} \le t_1.
    \]

    \paragraph{Inductive Step:}
    Assume that for some $i \in [1, r-1]$, there exists an efficient $\left( \frac{i \delta}{r}, \frac{i \varepsilon}{r}\right)$ matrix sampler using $m + i \cdot C\log\frac{d}{\delta}$ random bits and $t_i$ samples.
    By choosing some constant $\alpha < 1/2$ such that $(t_i r / \eps)^\alpha < d r / \delta$ in \cref{lem:matrix_sample_size_reduction}, we have a $\left( \frac{(i+1) \delta}{r}, \frac{(i+1) \varepsilon}{r}\right)$ matrix sampler using $m + i \cdot C\log\frac{d}{\delta} + C_3 \log \frac{d}{\delta/ r}$ random bits and $C_4\frac{\sqrt{t_i}}{(\eps / r)^{2.5}}\log\frac{d}{\delta / r}$ samples for some constants $C_3$ and $C_4$. When $C \ge 2C_3$ and $\sqrt{C} \ge 2C_4$, we have\[
        m + i \cdot C\log\frac{d}{\delta} + C_3 \log \frac{d}{\delta / r} \le m + (i + 1) \cdot C\log\frac{d}{\delta}
    \]
    and \[
        C_4\frac{{t_i}^\alpha}{(\eps / r)^{2+\alpha}}\log\frac{d}{\delta / r} \le 2 C_4 \cdot \sqrt{C} \cdot \sqrt{d^{2^{3-i}}} \cdot \frac{r^5}{\eps^5} \log^2\frac{d}{\delta} \le d^{2^{3-(i+1)}} \cdot C \cdot \frac{r^5}{\eps^5} \log^2\frac{d}{\delta} \le t_{i+1}.
    \]
    This finishes the induction and proves the lemma.
\end{proof}
Using \cref{lem:matrix_sample_size_reduction}, we have the following lemma:
\begin{lemma}
    \label{lem:sampler_without_replacement}
    For any constant $\alpha > 0$:
    There exists an efficient $d$-dimensional $(\delta, \eps)$ matrix sampler for domain $\zo^m$ using $m + O(\log(d / \delta) \log\log d)$ random bits and $O((\frac{1}{\eps^2}\log\frac{d}{\delta})^{1 + \alpha})$ samples.
\end{lemma}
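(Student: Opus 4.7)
The plan is to run the induction of \cref{lem:induction_lemma} one more round, using the sample-reduction composition in \cref{lem:matrix_sample_size_reduction} with a very small exponent $\alpha' > 0$ chosen in terms of $\alpha$. The sampler from \cref{lem:induction_lemma} already achieves the right randomness complexity $m + O(\log(d/\delta)\log\log d)$; its only defect is the sample count $t_0 = O((\log\log d)^5 \log^2(d/\delta)/\eps^5)$, whose polynomial dependence on $\eps$ and $\log(d/\delta)$ is suboptimal. One additional round of sub-sampling by an almost $\ell$-wise independent matrix sampler will push this down to $((1/\eps^2)\log(d/\delta))^{1+\alpha}$.

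Concretely, let $\Smp_0$ be the $(\delta/2, \eps/2)$ matrix sampler from \cref{lem:induction_lemma}, with $n_0 = m + O(\log(d/\delta)\log\log d)$ random bits and $t_0$ samples. Fix $\alpha' = \alpha/c$ for a sufficiently large constant $c$ (say $c = 8$) and apply \cref{lem:matrix_sample_size_reduction} with $\delta_1 = \delta_2 = \delta/2$, $\eps_1 = \eps_2 = \eps/2$, and exponent $\alpha'$. To invoke the lemma I need $(t_0/\eps_2)^{\alpha'} \le d/\delta_2$. Under the standing convention $\delta \le \eps$ from \cref{rem:choice_of_parameters}, we have $\log(1/\eps) \le \log(d/\delta)$, so $\log(t_0/\eps_2) = O(\log\log(d/\delta) + \log(1/\eps)) = O(\log(d/\delta))$; picking $\alpha'$ small relative to the hidden constant makes the hypothesis hold.

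The conclusion of \cref{lem:matrix_sample_size_reduction} then produces a $(\delta, \eps)$ matrix sampler using $n_0 + O(\log(d/\delta)) = m + O(\log(d/\delta)\log\log d)$ random bits and
\[
O\!\tuple{\frac{t_0^{\alpha'}}{\eps^{2+\alpha'}}\log\frac{d}{\delta}}
\]
samples. Expanding $t_0^{\alpha'} = O((\log\log d)^{5\alpha'}\log^{2\alpha'}(d/\delta)/\eps^{5\alpha'})$ and using $\log\log d \le \log(d/\delta)$, the sample count is bounded by $O(\log^{1+7\alpha'}(d/\delta)/\eps^{2+6\alpha'})$. Choosing $\alpha' \le \alpha/7$ makes this at most $O((\log(d/\delta)/\eps^2)^{1+\alpha})$, which is exactly the target bound.

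I do not expect a genuine obstacle: the heavy inductive bookkeeping has already been absorbed into \cref{lem:induction_lemma}, and \cref{lem:matrix_sample_size_reduction} is the off-the-shelf tool for trading a slightly larger sample count for essentially no extra randomness. The only mildly delicate point is verifying the hypothesis $(t_0/\eps_2)^{\alpha'} \le d/\delta_2$, which is where $\delta \le \eps$ is used; degenerate regimes where $d$ is a fixed constant can be handled separately by instead invoking \cref{cor:domain_sampler} together with \cref{lem:sampler_implies_matrix_sampler}, for which the dimension-dependent loss is harmless.
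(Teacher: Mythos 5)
Your proposal is correct and matches the paper's proof: the paper also obtains \cref{lem:sampler_without_replacement} by applying \cref{lem:matrix_sample_size_reduction} once (with $\delta_1=\delta_2=\delta/2$, $\eps_1=\eps_2=\eps/2$) to the sampler of \cref{lem:induction_lemma}. You are simply more explicit about choosing the exponent $\alpha'$ small and verifying the hypothesis $(t_0/\eps_2)^{\alpha'}\le d/\delta_2$, which the paper leaves implicit.
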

\begin{proof}
    Set $\delta_1 = \delta_2 = \delta / 2$ and $\eps_1 = \eps_2 = \eps / 2$ in  \cref{lem:matrix_sample_size_reduction} and apply it to \cref{lem:induction_lemma} will give the result.
\end{proof}
\subsection{Another Composition Scheme}
To further reduce the number of random bits used in \cref{lem:sampler_without_replacement}, we introduce another way of composing matrix samplers. 
Instead of sub-sampling the samples, we sample the random seeds here.

\anotherComposition*

\begin{proof}
    Let $f : \zo^m \to \C^{d \times d}$ be a function such that $\|f(x)\| \leq 1$ for all $x \in \zo^m$.
    We define $h_f:\zo^{n_1} \to \{0, 1\}$ as follows: \[
        h_f(x) := \mathbf{1}\left[\norm{\frac{1}{t_1}\sum_{i=1}^{t_1} f(\Smp_{out}(x, i)) - \E f} > \eps_1\right].
    \] Then $\E h_f < \delta_1$. One good property of $h_f$ is that \[
        \norm{\frac{1}{t_1}\sum_{i=1}^{t_1} f(\Smp_{out}(x, i)) - \E f} \le 2h_f(x) + \eps_1.
    \]
    For $Z_1, \dots, Z_{t_2}$ the output of $\Smp_{in}$, we have     \[
    \Pr
    \left[ \abs{\frac{1}{t_2}  \sum_i h_f(Z_i) - \E h_f} \leq \varepsilon_2 \right]
    \geq 1 - \delta_2.
    \]
    Therefore, with $1 - \delta_2$ probability, \[
        {\frac{1}{t_2}  \sum_i h_f(Z_i)} < \delta_1 + \eps_2.
    \]
    Then we have 
    \begin{align*}
        \norm{\frac{1}{t_1t_2}\sum_{i=1}^{t_2}\sum_{j=1}^{t_1}f(\Smp_{out}(Z_i, j)) - \E f} &\le \frac{1}{t_2}\sum_{i=1}^{t_2} \norm{\frac{1}{t_1} \sum_{j=1}^{t_1} f(\Smp_{out}(Z_i, j)) - \E f}  \\
        &\le \frac{1}{t_2}\sum_{i=1}^{t_2} (2h_f(Z_i) + \eps_1) \\
        &\le \eps_1 +  {\frac{2}{t_2}  \sum_i h_f(Z_i)}.
    \end{align*}
    This show that, with probability $1 - \delta_2$, the error of $\Smp$ is at most $\eps_1 + 2\delta_1 + 2\eps_2$.
\end{proof}
\samplerWithReplacement*
\begin{proof}
    We apply \cref{lem:another_matrix_composition} with the following choices:
    \begin{itemize}
        \item $\Smp_{out}$: Use the $(\eps / 5, \eps / 5)$ matrix sampler for domain $\zo^m$ in \cref{lem:sampler_without_replacement} by choosing $\alpha = 0.5$. This uses $m + O( \log(d / \eps) \log\log d)$ random bits and $O((\frac{1}{\eps^2}\log\frac{d}{\eps})^{1.5})$ samples.
        \item $\Smp_{in}$: Use the $(\delta, \eps / 5)$ averaging sampler for domain $\zo^{m + O( \log(d / \eps) \log\log d)}$ in \cref{cor:domain_sampler} by choosing $\alpha = 0.5$. This uses $m + O( \log (1 / \delta) + \log(d / \eps) \log\log d)$ random bits and $O((\frac{1}{\eps^2}\log\frac{1}{\delta})^{1.5})$ samples.
    \end{itemize}
    This gives as an efficient $(\delta, \eps)$ matrix sampler using $m + O( \log (1 / \delta) + \log(d / \eps) \log\log d)$ random bits and $O((\frac{1}{\eps^2}\log\frac{d}{\eps})^{1.5} \cdot (\frac{1}{\eps^2}\log\frac{1}{\delta})^{1.5})$ samples.   
    
    Set $\delta_1 = \delta_2 = \delta / 2$ and $\eps_1 = \eps_2 = \eps / 2$ in  \cref{lem:matrix_sample_size_reduction} and apply it to this sampler will reduce the sample complexity to $O((\frac{1}{\eps^2}\log\frac{d}{\delta})^{1 + \alpha})$ for any constant $\alpha > 0$.
\end{proof}
\subsection{Proof of \cref{lem:matrix_ell_wise_sampler}}
\label{sec:matrix_lemma_proof}
\subsubsection{Concentration of Random Matrices}
The goal of this section is to prove \cref{lem:matrix_k_wise_concentration}, an analog of \cref{lem:k_wise_concentration} for random Hermitian matrices. Our approach follows standard techniques from the random matrix literature~\cite{Tomczak1974, CGT12, LT13}.

\begin{lemma}
\label{lem:bound_with_Rad}
Let $\{X_i\}_{i \in [t]}$ be a sequence of independent, mean-zero, self-adjoint random matrices of size $d \times d$ such that $\|X_i\| < 1$ for every $i \in [t]$. Then, for every even interger $q \ge 2$, we have
\[
    \Ex{\Tr \tuple{\tuple{\sum_{i=1}^t \eps_i X_i}^q}} \le d (qt)^{q / 2},
\]
where the sequence $\{\eps_i\}$ consists of independent Rademacher random variables. 
\end{lemma}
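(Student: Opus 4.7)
The plan is to compute $\E_\eps \Tr(S^q)$, with $S := \sum_{i=1}^t \eps_i X_i$, by multinomial expansion and then by exploiting the symmetry of the independent Rademacher variables. Writing
\[
\Tr(S^q) = \sum_{(i_1,\ldots,i_q) \in [t]^q} \eps_{i_1}\eps_{i_2}\cdots\eps_{i_q}\,\Tr\tuple{X_{i_1}X_{i_2}\cdots X_{i_q}},
\]
independence together with $\E[\eps_j] = 0$ annihilates every tuple in which some index $j$ appears with odd multiplicity, so only ``even'' tuples---those whose underlying multiset on $[t]$ has all even multiplicities---survive. For each surviving tuple I would bound the trace by $d$: since $\nnorm{X_i}\le 1$, the product $X_{i_1}\cdots X_{i_q}$ has spectral norm at most $1$, whence $\abs{\Tr(X_{i_1}\cdots X_{i_q})}\le d$.

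It then suffices to count the even tuples. I would organize this count by the set partition $\pi$ of $[q]$ whose blocks are the preimages of the distinct index values: each block of $\pi$ has even size (so $\pi$ has at most $q/2$ blocks), and given such a $\pi$ with $r$ blocks the number of ways to assign distinct labels from $[t]$ is at most $t(t-1)\cdots(t-r+1) \le t^{q/2}$ (using $r \le q/2$). Combined with the observation that the number of set partitions of $[q]$ into even blocks is at most $q^{q/2}$ for even $q\ge 2$ (dominated by perfect matchings, which number $(q-1)!! \le q^{q/2}$, with the coarser even partitions contributing negligibly more), this yields $\E\Tr(S^q) \le d\cdot q^{q/2}\cdot t^{q/2} = d(qt)^{q/2}$.

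The main obstacle is pinning down the constant $1$ inside $(qt)^{q/2}$, since naive enumeration of even set partitions would in principle yield $d(Cqt)^{q/2}$ for some absolute $C>1$ rather than the advertised bound. To close this gap cleanly I would invoke the noncommutative Khintchine / Lust-Piquard inequality from the random-matrix literature cited in the paper, which gives $\E\Tr(S^q)\le C_q\cdot \Tr\tuple{\tuple{\sum_i X_i^2}^{q/2}}$ with an explicit sharp constant $C_q$, and then finish using $X_i^2 \preceq I$ so that $\sum_i X_i^2 \preceq tI$, and $\Tr\tuple{\tuple{\sum_i X_i^2}^{q/2}}\le d\cdot t^{q/2}$. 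This trace-moment route is the standard argument that the authors are implicitly invoking, and it delivers exactly the stated bound.
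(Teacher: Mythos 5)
Your overall strategy is the same as the paper's: expand $\Tr(S^q)$ as a sum over words $v\in[t]^q$, use independence and $\E[\eps_j]=0$ to kill every word in which some index has odd multiplicity, bound each surviving trace by $d$ via $\|X_{i_1}\cdots X_{i_q}\|\le 1$, and then count the surviving words. Where you diverge is in the counting step, and that is where there is a genuine gap.

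Your plan to count by the set partition of $[q]$ into preimages of repeated indices, and then assert that the number of set partitions of $[q]$ into even blocks is at most $q^{q/2}$, does not hold for all even $q$. The exponential generating function for even-block partitions is $\exp(\cosh x - 1)$, and a saddle-point estimate shows that $\log(\text{number of even partitions of }[q]) \approx q\log q - q\log\log q - q$, which exceeds $\tfrac{q}{2}\log q$ once $q$ is moderately large. The ``negligibly more'' heuristic is false: coarse even partitions eventually dominate perfect matchings. (Your fix of weighting by the falling factorial $t^{\underline{r}}$ rather than $t^{q/2}$ would rescue this, but that is more work than the statement needs.) The clean count — which you gesture at but do not commit to — is the one the paper uses: encode each surviving word by a \emph{perfect matching} of $[q]$ pairing up positions that carry the same index, together with one index in $[t]$ per pair. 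Every even word arises from at least one such (matching, assignment) pair, so the number of surviving words is at most $(q-1)!!\cdot t^{q/2}\le q^{q/2}t^{q/2}$ directly, with no constant $C>1$ and no need for the refined partition sum. Your worry that the elementary route loses a constant is therefore unfounded, and the retreat to noncommutative Khintchine, while a correct alternative proof, is heavier machinery than either the statement or the paper requires.
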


\begin{proof}
Note that \[
    \Ex{\Tr \tuple{\tuple{\sum_{i=1}^t \eps_i X_i}^q}} = \sum_{v=(v_1,\dots,v_q)\in[t]^{q}} \Ex{\Tr( \eps_{v_1} X_{v_1}\dots \eps_{v_q} X_{v_q})}
\]
When some index appears an odd number of times in the word
$v=(v_1,\dots,v_q)$, the corresponding term vanishes.
% because that single occurrence contributes a factor
% $\E X_i=0$ independent of the remaining factors.
Hence we may restrict the sum to
\[
    \mathcal{T}
    \;=\;
    \bigl\{v\in[t]^{q}:\,
          \text{every index in $v$ appears an even number of times}\bigr\},
\]
and we have \[
\Ex{\Tr \tuple{\tuple{\sum_{i=1}^t \eps_i X_i}^q}} = \sum_{v\in \mathcal{T}} \Ex{\Tr(\eps_{v_1}X_{v_1}\dots \eps_{v_q}X_{v_q})} \le \sum_{v\in \mathcal{T}} d \Ex{\|X_{v_1}\dots X_{v_q}\|} \le \sum_{v\in \mathcal{T}} d,
\]
where the last inequality comes from the fact that $\|X_i\| \le 1$ always holds for every $i$.

% For every word $v\in\mathcal{T}$,
% \[
%     \bigl|\operatorname{tr}(X_{v_1}\dotsm X_{v_q})\bigr|
%     \;\le\;
%     d\,\lVert X_{v_1}\dotsm X_{v_q}\rVert
%     \;\le\;
%     d,
% \]
% because $\lVert X_i\rVert\le 1$ and the operator norm is sub-multiplicative.
%
%
%
%
%

% To bound $|\mathcal{T}|$, note that every $v \in \mathcal{T}$ uses at most
% $q/2$ distinct indices.  
% Choose a pairing of the $q$ positions—there are
% $(q-1)!!=1\cdot3\cdot5\cdots(q-1)$ such pairings—and then assign one of the
% $t$ indices to each pair, giving at most $(q-1)!!\,t^{m}$ words.  Since
% $(q-1)!!\le q^{m}$, we obtain
% \[
%    |\mathcal{T}|\;\le\;(q-1)!!\,t^{m}\;\le\;q^{m}t^{m}
%    \;=\;(qt)^{q/2}.
% \]

It remains to bound $|\mathcal{T}|$.
Since each index appearing in $v\in\mathcal{T}$ must occur an even number of times, one can group the $q$ positions into exactly $q/2$ unordered pairs so that the two entries in each pair carry the same index.
% Encode any $v\in\mathcal{T}$ as follows:
The number of ways to pair is bounded by $q^{q/2}$.
For each pairing there are $t$ choices of index per pair. Therefore,
\[
    |\mathcal{T}| \le q^{q/2} \cdot t^{q/2} \le (qt)^{q/2} .
\]

Therefore,
\[
    \Ex{\Tr \tuple{\tuple{\sum_{i=1}^t \eps_i X_i}^q}} \le d (qt)^{q / 2} \le
    d |\mathcal{T}|
   \le
    d(qt)^{q/2},
\]
which proves the lemma.
\end{proof}

Using a standard symmetrization trick, we get the following lemma, which is an analog of \cref{prop:MZ_inequality}:
\begin{lemma}\label{lem:trace_moment_no_rad}
% Let $\{X_i\}_{i=1}^t$ be a sequence of independent, mean‐zero, self‐adjoint random matrices of size $d\times d$ satisfying 
% \[
%    \|X_i\| < 1\quad\text{for each }i=1,\dots,t.
% \]
% Fix an even integer $q\ge 2$.  Then

Let $\{X_i\}_{i \in [t]}$ be a sequence of independent, mean-zero, self-adjoint random matrices of size $d \times d$ such that $\|X_i\| < 1$ for every $i \in [t]$. Then, for every even interger $q \ge 2$, we have
\[
 \Ex{\Tr \tuple{\tuple{\sum_{i=1}^t X_i}^q}}
    \le 
   d \left(4 q t\right)^{q/2} .
\]
\end{lemma}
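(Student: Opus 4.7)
The plan is to reduce Lemma \ref{lem:trace_moment_no_rad} to Lemma \ref{lem:bound_with_Rad} by the standard symmetrization trick, in three short steps. First I would introduce an independent copy $\{X_i'\}_{i \in [t]}$ of the sequence $\{X_i\}$. Since $\E X_i' = 0$ and the $X_i'$ are independent of $X = (X_1,\dots,X_t)$, we may write $\sum_i X_i = \E[\sum_i (X_i - X_i') \mid X]$. The function $M \mapsto \Tr(M^q)$ is convex on Hermitian matrices for even $q$ (it equals $\sum_j \lambda_j(M)^q$ and $t \mapsto t^q$ is convex on $\R$), so Jensen's inequality applied conditional on $X$ and then integrated yields
\[
\Ex{\Tr\tuple{\tuple{\sum_i X_i}^q}} \le \Ex{\Tr\tuple{\tuple{\sum_i (X_i - X_i')}^q}}.
\]

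The second step is Rademacher randomization. Writing $Y_i := X_i - X_i'$, the $Y_i$ are independent across $i$ and each is symmetric in law, i.e.\ $Y_i \stackrel{d}{=} -Y_i$. Hence for independent Rademacher signs $\{\eps_i\}$ (also independent of the $Y_i$) we have $(Y_1,\dots,Y_t) \stackrel{d}{=} (\eps_1 Y_1,\dots,\eps_t Y_t)$, so the expected trace moment on the right-hand side above is unchanged when each $Y_i$ is replaced by $\eps_i Y_i$.

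The third step applies Lemma \ref{lem:bound_with_Rad}. By the triangle inequality $\|Y_i\| \le 2$, so $Z_i := Y_i/2$ is Hermitian with $\|Z_i\| \le 1$. Conditioning on $Y$, the $Z_i$ become deterministic Hermitian matrices of norm at most one, and Lemma \ref{lem:bound_with_Rad} (with the $Z_i$'s playing the role of the independent mean-zero summands, treating constants as degenerate random variables) gives
\[
\Ex[\eps]{\Tr\tuple{\tuple{\sum_i \eps_i Z_i}^q}} \le d(qt)^{q/2}.
\]
Multiplying by the $2^q$ coming from $Y_i = 2 Z_i$ and averaging over $Y$ delivers the claimed bound $d(4qt)^{q/2}$.

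I do not expect any real obstacle; the only points worth a line of justification are the convexity of $M \mapsto \Tr(M^q)$ for even $q$ (standard via the spectral representation) and the observation that Lemma \ref{lem:bound_with_Rad} remains valid when the summand matrices are conditionally deterministic, which is immediate since constants are trivially independent of each other and of the $\eps_i$.
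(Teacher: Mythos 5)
Your argument is correct and follows essentially the same symmetrization route as the paper: introduce an independent copy $\{X_i'\}$, apply Jensen to the convex map $M\mapsto\Tr(M^q)$, pass to Rademacher averages via the symmetry of $X_i-X_i'$, and invoke \cref{lem:bound_with_Rad}; the only structural difference is in the final step, where the paper uses the convexity inequality $\Tr\left((a-b)^q\right)\le 2^{q-1}\left(\Tr(a^q)+\Tr(b^q)\right)$ and you instead rescale $Y_i=X_i-X_i'$ by $1/2$ and pull the factor $2^q$ out of the trace, both landing on the same constant. The one point to tighten is your justification for applying \cref{lem:bound_with_Rad} conditionally: a nonzero deterministic matrix does \emph{not} satisfy the lemma's mean-zero hypothesis, so ``treating constants as degenerate random variables'' is not literally a valid invocation; the correct observation is that the proof of \cref{lem:bound_with_Rad} never uses mean-zero at all, relying only on $\|X_i\|\le 1$ and the vanishing of odd-multiplicity Rademacher moments, so the stated bound indeed holds for arbitrary fixed Hermitian matrices of norm at most one.
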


\begin{proof}
Let us write
\[
   S  :=  \sum_{i=1}^t X_i,\qquad
   S'  :=  \sum_{i=1}^t X_i',
\]
where $\{X_i'\}_{i=1}^t$ is an independent copy of $\{X_i\}_{i=1}^t$, independent also of the original $X_i$’s.  Since each $X_i$ has mean zero, we have
\[
   \E\left[\Tr\left(S^{q}\right)\right]
    = 
   \E\left[\Tr\left(( S - \E[S'] )^{q}\right)\right]
    \le 
   \E\left[\Tr\left(( S - S' )^{q}\right)\right],
\]
where the last inequality follows from Jensen’s inequality.

Next, observe that
\[
   S - S'  =  \sum_{i=1}^t \left(X_i - X_i'\right).
\]
Introduce a fresh sequence of Rademacher random variables $\{\varepsilon_i\}_{i=1}^t$, independent of everything else.  Then, conditional on $\{X_i,X_i'\}_{i=1}^t$, the random matrix
\[
   \sum_{i=1}^t \varepsilon_i \left(X_i - X_i'\right)
\]
has the same distribution as $ S - S'$.
% (namely, each difference $X_i - X_i'$ is replaced by either $+(X_i - X_i')$ or $-(X_i - X_i')$ with equal probability).  
Consequently,
\[
   \E\left[\Tr\left(( S - S' )^{q}\right)\right]
    = 
   \E_{ X,X'} \E_{ \varepsilon}\left[\Tr\left(\left(\sum_{i=1}^t \varepsilon_i (X_i - X_i')\right)^{q}\right)\right].
\]

% For a fixed realization of $\{X_i,X_i'\}$ and $\{\varepsilon_i\}$, define
% \[
%    A  :=  \sum_{i=1}^t \varepsilon_i X_i, 
% \qquad
%    B  :=  - \sum_{i=1}^t \varepsilon_i X_i'.
% \]
% Then $A+B = \sum_{i=1}^t \varepsilon_i (X_i - X_i')$, and since $q$ is even the map $M\mapsto \|M\|_q := \left(\Tr(M^q)\right)^{1/q}$ is a norm.  Hence
% \[
%    \Tr\left((A+B)^{q}\right)
%     = \|A+B\|_q^{ q}
%     \le \left(\|A\|_q + \|B\|_q\right)^{q}
%     \le 2^{ q-1}\left(\|A\|_q^{ q} + \|B\|_q^{ q}\right).
% \]
% But
% \[
%    \|A\|_q^{ q}  = \Tr\left((\sum_{i=1}^t \varepsilon_i X_i)^{q}\right),
% \qquad
%    \|B\|_q^{ q}  = \Tr\left((\sum_{i=1}^t \varepsilon_i X_i')^{q}\right).
% \]
Note that for each fixed $(X,X',\varepsilon)$,
\[
   \Tr\left(\left(\sum_{i=1}^t \varepsilon_i (X_i - X_i')\right)^{q}\right)
    \le 
   2^{ q-1}\left(\Tr\left((\sum_{i=1}^t \varepsilon_i X_i)^{q}\right)
                 + \Tr\left((\sum_{i=1}^t \varepsilon_i X_i')^{q}\right)\right).
\]
Taking $\E_{\varepsilon}$, then $\E_{X,X'}$, and using that $\{X_i'\}$ has the same law as $\{X_i\}$ gives
\[
   \E\left[\Tr\left((S - S')^{q}\right)\right]
    = 
   \E_{X,X'} \E_{\varepsilon}\left[\Tr\left((\sum_{i=1}^t \varepsilon_i (X_i - X_i'))^{q}\right)\right]
    \le 
   2^{ q} \E_{X} \E_{\varepsilon}\left[\Tr\left((\sum_{i=1}^t \varepsilon_i X_i)^{q}\right)\right].
\]
In combination with the symmetrization bound 
\(\E[\Tr(S^{q})]\le \E[\Tr((S-S')^{q})]\),
this yields
\[
   \E\left[\Tr(S^{q})\right]
    \le 
   2^{ q} \E\left[\Tr\left((\sum_{i=1}^t \varepsilon_i X_i)^{q}\right)\right].
\]
% Finally, by hypothesis (Lemma~\ref{lem:bound_with_Rad}), for every even integer $q\ge2$,
% \[
%    \E\left[\Tr\left((\sum_{i=1}^t \varepsilon_i X_i)^{q}\right)\right]
%     \le  d (q t)^{ \frac{q}{2}}.
% \]
Taking in the bound in~\cref{lem:bound_with_Rad}, we have 
\[
   \E\left[\Tr\left(\left(\sum_{i=1}^t X_i\right)^{q}\right)\right]
    \le 
   2^{ q} \cdot d (q t)^{ \frac{q}{2}}
    = 
   d \left(2^{2} q t\right)^{ \frac{q}{2}}
    = 
   d \left(4 q t\right)^{ \frac{q}{2}}.
\]
% which completes the proof.
\end{proof}

\begin{lemma}
\label{lem:matrix_k_wise_concentration}
    Let $Z_1, \dots, Z_t \in \zo^m$ be a sequence of $\gamma$-almost $\ell$-wise independent  variables for a positive even  integer $\ell$. Then for any $f: \zo^m \to \C^{d \times d}$ such that for all $x \in \zo^m$, $f(x)$ is Hermitian and $\|f(x)\| \le 1$, we have \[
        \Prb{\norm{\frac{1}{t}\sum_{i=1}^t f(Z_i) - \E f} \le \eps} \ge 1 - 
        d \left( \frac{16\ell}{\eps^2 t} \right)^{\ell/2} - \frac{2^\ell \gamma d}{\eps^\ell}.
        % d \left( \frac{4\sqrt{ert} + 8er}{\eps t} \right)^\ell - \frac{2^\ell\gamma d}{\eps^\ell},
\]
% for $r \ge \max(\ell, 2 \log d)$.
\end{lemma}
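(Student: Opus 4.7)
}
The plan is to bound the $\ell$-th trace moment of $\frac{1}{t}\sum_i f(Z_i) - \E f$ and then apply Markov's inequality. I would first set $X_i := f(Z_i) - \E f$, noting that each $X_i$ is Hermitian with $\|X_i\| \le 2$ and, crucially, has $\E[X_i]=0$ when $Z_i$ is truly uniform. Since $\ell$ is even, $\bigl(\tfrac{1}{t}\sum_i X_i\bigr)^\ell$ is positive semidefinite, so
\[
   \left\|\frac{1}{t}\sum_{i=1}^t X_i\right\|^{\ell}
   \;=\;\left\|\left(\frac{1}{t}\sum_{i=1}^t X_i\right)^{\ell}\right\|
   \;\le\;\Tr\!\left(\left(\frac{1}{t}\sum_{i=1}^t X_i\right)^{\ell}\right),
\]
and Markov's inequality reduces the problem to bounding $\E\bigl[\Tr\bigl((\sum_i X_i)^\ell\bigr)\bigr]$ under the $\gamma$-almost $\ell$-wise distribution, and then dividing by $t^\ell \eps^\ell$.

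Next I would compare this moment to what it would be under a truly uniform $\ell$-tuple. Expanding
\[
   \Tr\!\left(\left(\sum_{i=1}^t X_i\right)^{\ell}\right)
   \;=\;\sum_{v\in[t]^{\ell}} \Tr(X_{v_1} X_{v_2}\cdots X_{v_\ell}),
\]
each summand is a function of at most $\ell$ of the random variables $Z_1,\dots,Z_t$, and its absolute value is bounded by $d \cdot \|X_{v_1}\cdots X_{v_\ell}\| \le d\cdot 2^{\ell}$. Therefore, by the definition of $\gamma$-almost $\ell$-wise independence, each term's expectation under the true distribution differs from its expectation under the uniform distribution by at most $\gamma \cdot d \cdot 2^{\ell}$. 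Summing over the $t^{\ell}$ words gives a total discrepancy of at most $\gamma \cdot d \cdot 2^{\ell} \cdot t^{\ell}$.

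For the uniform distribution, the $X_i$'s are independent, mean-zero, Hermitian, with $\|X_i/2\|\le 1$. Applying Lemma~\ref{lem:trace_moment_no_rad} to the rescaled matrices $X_i/2$ with $q=\ell$ yields
\[
   \E_{\mathrm{unif}}\!\left[\Tr\!\left(\left(\sum_{i=1}^t X_i\right)^{\ell}\right)\right]
   \;\le\; 2^{\ell}\cdot d\,(4\ell t)^{\ell/2}
   \;=\; d\,(16\ell t)^{\ell/2}.
\]
Combining this with the discrepancy bound and dividing by $t^\ell$, we get
\[
   \E\!\left[\Tr\!\left(\left(\frac{1}{t}\sum_i X_i\right)^{\ell}\right)\right]
   \;\le\; d\left(\frac{16\ell}{t}\right)^{\ell/2} + 2^{\ell}\gamma d.
\]
Applying Markov's inequality to $\|\frac{1}{t}\sum_i X_i\|^\ell$ and dividing by $\eps^\ell$ then yields the stated bound
$d(16\ell/(\eps^2 t))^{\ell/2} + 2^\ell \gamma d/\eps^\ell$.

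The main obstacle is the error analysis from almost-independence: one must check that the coarse pointwise bound $d\cdot 2^\ell$ on each monomial, multiplied by $t^\ell$ terms, produces a discrepancy that is still absorbed into the second term $2^\ell \gamma d/\eps^\ell$ after dividing by $t^\ell\eps^\ell$. Everything else is a fairly mechanical combination of Lemma~\ref{lem:trace_moment_no_rad}, a trace-moment Markov argument, and the fact that each length-$\ell$ trace word touches at most $\ell$ coordinates.
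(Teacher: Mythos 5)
Your proposal is correct and follows essentially the same route as the paper's proof: bound the $\ell$-th trace moment, compare term-by-term to an independent copy using $\gamma$-almost $\ell$-wise independence (each length-$\ell$ trace word touches at most $\ell$ coordinates and is bounded by $d$ times the product of operator norms), invoke Lemma~\ref{lem:trace_moment_no_rad} for the independent case, and finish with Markov. The only cosmetic difference is bookkeeping: the paper sets $W_i := (f(Z_i)-\E f)/2$ up front so that $\|W_i\|\le 1$ throughout, whereas you keep $X_i := f(Z_i)-\E f$ and rescale by $2$ when applying the lemma; the arithmetic is identical and yields the same final bound.
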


\begin{proof}
    Let $W_i := (f(Z_i) - \E f) / 2$. 
    We have
    \begin{align*}
    \Prb{\norm{\frac{1}{t}\sum_{i=1}^t f(Z_i) - \E f} > \eps} = \Prb{\norm{\sum_{i=1}^t W_i} > \frac{t\eps}{2}} \le \frac{\Ex{\norm{\sum_{i=1}^t W_i}^\ell}}{(t\eps / 2)^\ell}. 
\end{align*}
Note that each $W_i$ is always a Hermitian matrix, so their sum is always Hermitian and therefore normal. Then we have \[
    \norm{\sum_{i=1}^t W_i}^\ell = \norm{\tuple{\sum_{i=1}^t W_i}^\ell}.
\]
Moreover, since $\ell$ is a positive even integer, $(\sum_{i=1}^t W_i)^\ell$ is positive semidefinite. Therefore, \[
\norm{\tuple{\sum_{i=1}^t W_i}^\ell} \le \Tr \tuple{\tuple{\sum_{i=1}^t W_i}^\ell} =
\Tr\tuple{\sum_{i_1, \dots, i_\ell \in [t]} W_{i_1} W_{i_2} \dots W_{i_\ell}} =
\sum_{i_1, \dots, i_\ell \in [t]} \Tr\tuple{  W_{i_1} W_{i_2} \dots W_{i_\ell}}.
\] 
 Let $W_1', \dots, W_t'$ be a sequence of independent random variables where $W_i' := f_i(U_{\zo^m}) - \E f_i$.

 Since the $W_i$'s are $\gamma$-almost $\ell$-wise independent and $|W_i| \le 1$, we have 
 \begin{align*}
    \Ex{\norm{\sum_{i=1}^t W_i}^\ell} &\le \Ex{\sum_{i_1, \dots, i_\ell \in [t]} \Tr\tuple{  W_{i_1} W_{i_2} \dots W_{i_\ell}}} \\
    &= \sum_{i_1, \dots, i_\ell \in [t]} \Ex{\Tr\tuple{  W_{i_1} W_{i_2} \dots W_{i_\ell}}} \\
    &\le \sum_{i_1, \dots, i_\ell \in [t]} \Ex{\Tr\tuple{  W_{i_1}' W_{i_2}' \dots W_{i_\ell}'}} + \gamma d t^\ell \\
    &= \Ex{\Tr \tuple{\tuple{\sum_{i=1}^t W_i'}^\ell}} + \gamma d t^\ell.
 \end{align*}
Therefore, by~\cref{lem:trace_moment_no_rad}, we have \[
    \Ex{\norm{\sum_{i=1}^t W_i}^\ell} \le \Ex{\Tr \tuple{\tuple{\sum_{i=1}^t W_i'}^\ell}} + \gamma d t^\ell \le 
    d \left(4 \ell t\right)^{ {\ell}/{2}} + \gamma d t^\ell.
    % d   \Ex{\norm{\tuple{\sum_{i=1}^t W_i'}^\ell}} + \gamma d t^\ell = d   \Ex{\norm{\sum_{i=1}^t W_i'}^\ell} + \gamma d t^\ell.
\]
%  We have \begin{align*}
%     \Ex{\norm{\sum_{i=1}^t W_i'}^\ell} \leq (2\sqrt{ert} + 4er)^\ell,
% \end{align*}
% where $r = \max(\ell, 2 \log d)$.
Hence, \[
 \Prb{\norm{\sum_{i=1}^t W_i} > \frac{t\eps}{2}}  \le \frac{\Ex{\norm{\sum_{i=1}^t W_i}^\ell}}{(t\eps / 2)^\ell} \le d \left( \frac{16\ell}{\eps^2 t} \right)^{\ell/2} + \frac{2^\ell \gamma d}{\eps^\ell}.
\]
% where $r = \max(\ell, 2 \log d)$.
\end{proof}

\subsubsection{Almost $\ell$-wise independence for small domains}
We first prove that the concentration analysis for Hermitian matrices directly implies the general case.
\begin{lemma}    \label{lem:Hermitian_implies_general}
    Let $\Smp: \zo^n \to (\zo^m)^t$ be a function. Suppose for any $f: \zo^m \to \C^{2d\times 2d}$ such that for all $x \in \zo^m$, $f(x)$ is Hermitian and $\|f(x)\| \le 1$,     \[
    \Pr_{(Z_1, \dots, Z_t) \sim \Smp(U_n)}
    \left[ \norm{\frac{1}{t}  \sum_i f(Z_i) - \E f} \leq \varepsilon \right]
    \geq 1 - \delta.
    \]
    Then $\Smp$ is a $d$-dimensional $(\delta, \eps)$ matrix sampler.
\end{lemma}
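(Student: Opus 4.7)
The plan is to use the standard Hermitian dilation trick, which reduces the general matrix case to the Hermitian case at the cost of doubling the dimension. For any matrix $A \in \C^{d \times d}$, define its dilation
\[
\mathcal{H}(A) := \begin{pmatrix} 0 & A \\ A^* & 0 \end{pmatrix} \in \C^{2d \times 2d}.
\]
Two properties of $\mathcal{H}$ will be crucial: (i) $\mathcal{H}(A)$ is always Hermitian, and its spectral norm equals $\|A\|$ (this follows because the singular values of $A$ are exactly the absolute eigenvalues of $\mathcal{H}(A)$); (ii) $\mathcal{H}$ is $\C$-linear, so in particular $\mathcal{H}$ commutes with expectation and averaging: $\mathcal{H}(\E f) = \E[\mathcal{H}(f)]$ and $\mathcal{H}\bigl(\tfrac{1}{t}\sum_i f(Z_i)\bigr) = \tfrac{1}{t}\sum_i \mathcal{H}(f(Z_i))$.

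Given an arbitrary $f : \zo^m \to \C^{d\times d}$ with $\|f(x)\| \le 1$ for all $x$, I would set $g(x) := \mathcal{H}(f(x))$. Then $g : \zo^m \to \C^{2d\times 2d}$ is Hermitian-valued with $\|g(x)\| = \|f(x)\| \le 1$, so $g$ falls into the class covered by the hypothesis. Applying the hypothesis to $g$ gives
\[
\Pr\left[\,\norm{\tfrac{1}{t}\sum_i g(Z_i) - \E g} \le \eps\,\right] \ge 1 - \delta.
\]

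Finally, I would use linearity of $\mathcal{H}$ and the norm-preservation property to rewrite the event inside the probability:
\[
\tfrac{1}{t}\sum_i g(Z_i) - \E g = \mathcal{H}\!\left(\tfrac{1}{t}\sum_i f(Z_i) - \E f\right),
\]
and hence its spectral norm equals $\norm{\tfrac{1}{t}\sum_i f(Z_i) - \E f}$. Thus the high-probability event for $g$ is literally the same event as the one required for $f$, establishing the $(\delta,\eps)$ matrix sampler property for $f$. Since $f$ was arbitrary, $\Smp$ is a $d$-dimensional $(\delta,\eps)$ matrix sampler. There is no real obstacle here; the only thing to be careful about is verifying the two properties of the dilation, both of which are classical.
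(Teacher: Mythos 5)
Your proof is correct and takes exactly the same route as the paper's: you define the Hermitian dilation $\mathcal{H}(A)$, verify that it preserves the spectral norm, apply the hypothesis to $g := \mathcal{H}\circ f$, and pull the event back to $f$ by commuting $\mathcal{H}$ with averaging and expectation. One small imprecision: $\mathcal{H}$ is not $\C$-linear (it is only $\R$-linear, since $\mathcal{H}(cA)$ involves $\bar{c}$ in the lower-left block), but this does not affect the argument because the averaging $\tfrac{1}{t}\sum_i$ and the expectation only involve real, nonnegative coefficients, so $\R$-linearity is all you actually use.
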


\begin{proof}
    Let $(Z_1, \dots, Z_t) \sim \Smp(U_n)$.
    We are going to prove that for any function $f : \zo^m \to \C^{d \times d}$ such that $\|f(x)\| \leq 1$ for all $x \in \zo^m$, we have
    \[
    \Pr
    \left[ \norm{\frac{1}{t}  \sum_i f(Z_i) - \E f} \leq \varepsilon \right]
    \geq 1 - \delta.
    \]
    For any matrix $A \in \C^{d \times d}$, its Hermitian dilation $\mathcal{H}(A) \in \C^{2d \times 2d}$ is defined by \[
        \mathcal{H}(A) := \begin{bmatrix}
        0 & A \\
        A^* & 0
        \end{bmatrix}.
    \]
    It is easy to verify that $\norm{A} = \norm{\mathcal{H}(A)}$. Then, for function $g : x \mapsto \mathcal{H}(f(x))$, we have
        \[
    \Pr
    \left[ \norm{\frac{1}{t}  \sum_i g(Z_i) - \E g} \leq \varepsilon \right]
    \geq 1 - \delta.
    \]
    Note that we have \[
    \norm{\frac{1}{t}  \sum_i f(Z_i) - \E f} = \norm{\mathcal{H}\tuple{\frac{1}{t}  \sum_i f(Z_i) - \E f}} = \norm{\frac{1}{t}  \sum_i g(Z_i) - \E g}.
    \]
    Hence, \[
        \Pr
    \left[ \norm{\frac{1}{t}  \sum_i f(Z_i) - \E f} \leq \varepsilon \right]
    \geq 1 - \delta.
    \]
\end{proof}

Now we are ready to prove \cref{lem:matrix_ell_wise_sampler}. 
\matrixEllWiseSampler*

\begin{proof}
% [Proof of \cref{lem:matrix_ell_wise_sampler}]
    We begin by setting $\ell = \frac{2\log(2d / \delta)}{\log s}$, $\gamma = \frac{\delta\eps^\ell}{2^{\ell + 1}d}$,  and $t = \frac{16\ell s}{\eps^{2}}$. We then define our sampler by outputting a $\gamma$-almost $\ell$-wise independent  sequence $Z_1, \dots, Z_t \in \zo^m$. 
    Taking the parameters of \cref{lem:matrix_k_wise_concentration}, observe \[
    d \left( \frac{16\ell}{\eps^2 t} \right)^{\ell/2}  \le d\tuple{\frac{1}{{s}}}^{\ell / 2} = d\tuple{\frac{1}{{s}}}^{\frac{\log(2d/\delta)}{\log s}}
        =  \frac{\delta}{2},
    \]
    and \[
        \frac{2^\ell\gamma d}{\eps^\ell} = \frac{\delta}{2}.
    \]

     Let $\Smp: \zo^n \to (\zo^m)^t$ be a function. Suppose for any $f: \zo^m \to \C^{2d\times 2d}$ such that for all $x \in \zo^m$, $f(x)$ is Hermitian and $\|f(x)\| \le 1$,     \[
    \Pr_{(Z_1, \dots, Z_t) \sim \Smp(U_n)}
    \left[ \norm{\frac{1}{t}  \sum_i f(Z_i) - \E f} \leq \varepsilon \right]
    \geq 1 - \delta.
    \]
    Then $\Smp$ is a $d$-dimensional $(\delta, \eps)$ matrix sampler by \cref{lem:Hermitian_implies_general}.
     Therefore, by \cref{lem:matrix_k_wise_concentration}, we have \[
        \Prb{\abs{\frac{1}{t}\sum_{i=1}^t (f(Z_i) - \E f)} \le \eps} \ge 1 - \delta.
    \]
    Our sampler uses \[
        t = O\tuple{\frac{\ell s}{\eps^2}} = O\tuple{\frac{s}{\eps^2 \log s}\log\frac{d}{\delta}}
    \]
    samples.
     Furthermore, \cref{lem:k_wise_complexity} shows that we have an efficient algorithm that uses only 
     \begin{align*}
         O(\ell m + \log (1 / \gamma) + \log\log t) &=  O\tuple{\ell m + \ell\log(1 / \eps)  + \log(d / \delta) +  \log\log s} \\
         &=  O\tuple{\frac{(m + \log(1/\eps)) \log (d / \delta)}{\log s} + \log (d / \delta)} 
     \end{align*}
     random bits to generate this $\gamma$-almost $\ell$-wise independent  sequence.
\end{proof}

\begin{remark}
    The initial work by Wigderson and Xiao~\cite{WX05} on matrix samplers focused on Hermitian matrices, where each $f(x)$ was assumed to be Hermitian. Nevertheless, as shown in \cref{lem:Hermitian_implies_general}, any sampler that works for Hermitian matrices can naturally be applied to general matrices as well.
\end{remark}

\section{Applications to Extractors and Codes}
\subsection{Applications to Extractors}

\label{sec:extractors}

Zuckerman showed that averaging samplers are equivalent to randomness extractors \cite{zuckerman1997randomness}. Here we state the only direction that we need.

\begin{lemma}[
\cite{zuckerman1997randomness}]
\label{lem:sampler_implies_extractor}
An efficient strong $(\delta, \eps)$ averaging sampler $\Smp : \zo^n \to (\zo^m)^t$ gives an efficient strong $ (n - \log (1 / \delta) + \log (1 / \eps), 2\eps)$ extractor $\Ext : \zo^n \times \zo^{\log t} \to \zo^m$.
\end{lemma}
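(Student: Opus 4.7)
The plan is to define the extractor by $\Ext(x, i) := \Smp(x)_i$, viewing the $i$-th coordinate of the sampler output as the extractor's output on seed $i$, and then do a standard ``bad set is small'' argument. The key is that the strong averaging sampler property is precisely what allows us to accommodate a different distinguisher for each seed $y$.

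First I would reduce the strong extractor condition to an expected-statistical-distance statement: $(\Ext(X,Y), Y)$ is $\eps'$-close to $(U_m, Y)$ for $Y$ uniform on $\zo^{\log t}$ if and only if $\E_{y}\bigl[\mathrm{SD}(\Ext(X,y),\,U_m)\bigr] \le \eps'$. So it suffices to show this expected distance is at most $2\eps$ whenever $X$ is an $(n,k)$-source with $k = n - \log(1/\delta) + \log(1/\eps)$. Suppose for contradiction it exceeds $2\eps$. For each $y \in \zo^{\log t}$, let $T_y \subseteq \zo^m$ be a set that witnesses the statistical distance, i.e.\ $\Pr[\Ext(X,y) \in T_y] - \Pr[U_m \in T_y] = \mathrm{SD}(\Ext(X,y), U_m)$, and set $f_y := \mathbf{1}_{T_y}$. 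Then
\[
\Ex[X]{\frac{1}{t}\sum_{y} f_y(\Ext(X,y))} - \frac{1}{t}\sum_{y}\Ex{f_y} \;>\; 2\eps.
\]

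Next I would apply the strong sampler guarantee to the sequence $(f_y)_{y}$ against the uniform input $U_n$: the ``bad set''
\[
B \;:=\; \set{x \in \zo^n \;:\; \tfrac{1}{t}\sum_{y} f_y(\Smp(x)_y) - \tfrac{1}{t}\sum_y \Ex{f_y} > \eps}
\]
satisfies $|B| \le \delta \cdot 2^n$. Splitting the expectation over $X$ according to whether $X \in B$ (where the summand is at most $1$) or $X \notin B$ (where it is at most $\eps$), the lower bound of $2\eps$ forces $\Pr[X \in B] > \eps$.

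Finally, since $X$ has min-entropy at least $k$, every event has probability at most $|B|\cdot 2^{-k}$, so
\[
\Pr[X \in B] \;\le\; \delta \cdot 2^{n-k} \;=\; \delta \cdot 2^{\log(1/\delta) - \log(1/\eps)} \;=\; \eps,
\]
contradicting $\Pr[X \in B] > \eps$. Efficiency of $\Ext$ is immediate from efficiency of $\Smp$, and the seed length $\log t$ is exactly the indexing of the $t$ output samples. The only delicate point, and what I expect to be the one place a reader might stumble, is the use of the \emph{strong} sampler: because the distinguisher $T_y$ can depend on $y$, we genuinely need the sampler to concentrate a sum $\tfrac{1}{t}\sum_y f_y(\Smp(x)_y)$ for an arbitrary sequence of functions rather than a single $f$; a non-strong sampler would not suffice for this reduction.
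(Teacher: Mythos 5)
Your proof is correct and is the standard argument for this direction of the sampler--extractor equivalence (as in \cite{zuckerman1997randomness}; the paper simply cites it without reproducing the proof). You define $\Ext(x,y) := \Smp(x)_y$, pick per-seed distinguishers $T_y$, invoke the strong sampler guarantee to bound the size of the bad set $B$, and use the min-entropy hypothesis to bound $\Pr[X\in B]\le \delta\,2^{n-k}=\eps$, which together with the split $\E_X[g(X)]\le\Pr[X\in B]+\eps$ yields the contradiction; this matches the classical proof exactly, including the observation that strongness is what lets the distinguisher vary with $y$.
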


Applying \cref{lem:sampler_implies_extractor} on \cref{cor:domain_sampler} gives  \cref{thm:our_extractor}:

\ourExtractor*
\begin{proof}
    By \cref{cor:domain_sampler}, for any positive constant $\alpha > 0$, there exists a constant $\lambda > 1$ such that there exists an efficient strong $(\delta, \eps)$ averaging sampler for domain $\zo^m$ with $O(\frac{1}{\eps^{2 + \alpha}} \log^{1 + \alpha} \frac{1}{\delta})$ samples using $\lambda(m + \log\frac{1}{\delta})$ random bits.

To construct the required strong $(k, \eps)$ extractor for every $n$, we set $\delta$ such that $\log(1 / \delta) = \frac{n}{2\lambda} + \log(1 / \eps)$.
Then, we construct an efficient strong $(\delta, \eps)$ sampler $\Smp$ for domain $\zo^m$ where \[
    m = \frac{n}{\lambda} - \log(1 / \delta) > \frac{n}{2\lambda} - \log(1 / \eps) = \Omega(n) - \log(1 / \eps).
\]
    By the above, $\Smp$ uses $n$ random bits and generates $O(\frac{1}{\eps^{2 + \alpha}} \log^{1 + \alpha} \frac{1}{\delta})$ samples.

By~\cref{lem:sampler_implies_extractor}, $\Smp$ implies an efficient strong $( n - \log(1 / \delta) + \log (1 / \eps), 2\eps)$ extractor $\Ext: \zo^n \times \zo^d \to \zo^m$ with $d \le (1 + \alpha)\log (n-k) + (2 + \alpha) \log (1 / \eps) + O(1)$.
It is only left to verify that $n - \log(1 / \delta) + \log(1 / \eps) \le \beta n$ for some constant $\beta < 1$. We have \[
 n - \log(1 / \delta) + \log (1 / \eps) =  n - \frac{n}{2\lambda}  \le \frac{2\lambda - 1}{2\lambda} n.
\]
This proves the theorem.
\end{proof}

If we would like an extractor with the optimal seed length of $d = \log(n - k) + 2 \log(1 / \eps) + O(1)$, we can have the following extractor using \cref{lem:optimal_sample_sampler}.
\OurOptimalExtractor*

\begin{proof}
    By \cref{lem:optimal_sample_sampler}, there exists a constant $\lambda > 1$ such that there exists an efficient strong $(\delta, \eps)$ averaging sampler for domain $\zo^m$ with $O(\frac{1}{\eps^2} \log \frac{1}{\delta})$ samples using $m + \lambda\log\frac{1}{\delta}(\log\log(1 / \delta) + \log(1 / \eps))$ random bits.

To construct the required strong $(k, \eps)$ extractor for every $n$, we set $\delta$ such that $\log(1 / \delta) = \frac{1}{2\lambda} (\frac{n}{\log n + \log(1 / \eps)}) + \log(1/\eps)$.
Then, we construct an efficient strong $(\delta, \eps)$ sampler $\Smp$ for domain $\zo^m$ where 
\begin{align*}
     m &= n - \lambda\log\frac{1}{\delta}(\log\log(1 / \delta) + \log(1 / \eps)) \\
     &\ge n - \frac{n}{2}\frac{\log\log(1 / \delta) + \log (1 / \eps)}{\log n + \log(1 / \eps)} - \log^2(1 / \eps) -  \log(1 / \eps)\log n \\
     &\ge \Omega(n) - \log^2(1 / \eps).
\end{align*}
   
 By the above, $\Smp$ uses $n$ random bits and generates $O(\frac{1}{\eps^{2}} \log \frac{1}{\delta})$ samples.

By~\cref{lem:sampler_implies_extractor}, $\Smp$ implies an efficient strong $( n - \log(1 / \delta) + \log (1 / \eps), 2\eps)$ extractor $\Ext: \zo^n \times \zo^d \to \zo^m$ with $d = \log (n-k) + 2 \log (1 / \eps) + O(1)$.
It is only left to verify that $ n - \log(1 / \delta) + \log (1 / \eps) \le (1 -  \frac{\beta}{\log n + \log (1 / \eps)})n$ for some constant $\beta < 1$. 
We have \[
 n - \log(1 / \delta) + \log (1 / \eps) = n - \frac{1}{2\lambda} (\frac{n}{\log n + \log(1 / \eps)}) \le (1 -  \frac{1}{2\lambda(\log n + \log (1 / \eps))})n.
\]
This proves the theorem.
\end{proof}

\subsection{Application to List-Decodable Codes}
Error-correcting codes are combinatorial objects that enable messages to be accurately transmitted, even when parts of the data get corrupted. Codes have been extensively studied and have proven to be extremely useful in computer science. Here we focus on the combinatorial property of list-decodability, defined below.
\begin{definition}
    A code ${\textup{\textsf{ECC}}} : \zo^n \rightarrow ({\zo^m})^t$ is $(\rho, L)$ list-decodable if for every received message $r \in (\zo^m)^t$, there are at most $L$ messages $x \in \zo^n$ such that $d_H({\textup{\textsf{ECC}}}(x),r) \le \rho t$, where $d_H$ denotes the Hamming distance. A code is binary if $m = 1$.
\end{definition}

We focus on the binary setting, i.e., $m=1$.

\begin{lemma}[\cite{ExtractorCodes}]
    \label{lem:sampler_implies_codes}
    An efficient strong $(\delta, \eps)$ averaging sampler $\Smp : \zo^n \to \zo^t$ over the binary domain gives an efficient binary code that is $ (\rho = \frac{1}{2} - \eps, \delta 2^n)$ list-decodable with code rate $R = n / t$.
\end{lemma}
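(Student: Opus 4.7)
The plan is to unpack the definition of list-decodability and reduce it to the strong-sampler guarantee applied to an appropriately chosen sequence of test functions indexed by the received word. First I would define the code by simply setting $\textup{\textsf{ECC}}(x) := \Smp(x) \in \zo^t$ for $x \in \zo^n$. This gives a binary code of block length $t$ with $2^n$ codewords, so the rate is $n/t$ as claimed. The combinatorial content is all in the list-size bound, so the task reduces to showing that for every $r \in \zo^t$, at most $\delta 2^n$ inputs $x$ satisfy $d_H(\Smp(x), r) \le (\tfrac{1}{2} - \eps) t$.

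Next I would fix an arbitrary received word $r = (r_1, \dots, r_t) \in \zo^t$ and construct a sequence of functions $f_1, \dots, f_t : \zo \to [0,1]$ tailored to $r$ by setting
\[
f_i(b) := \1[b = r_i].
\]
Two observations make these the right test functions: (i) since the sample domain is $\zo$, we have $\E f_i = \Pr_{b \sim U_1}[b = r_i] = 1/2$ for every $i$; and (ii) for any $x \in \zo^n$, the quantity $\sum_i f_i(\Smp(x)_i)$ counts exactly the coordinates where $\Smp(x)$ agrees with $r$, namely $t - d_H(\Smp(x), r)$.

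Now I would apply the strong sampler guarantee directly to this sequence: with probability at least $1 - \delta$ over $X \sim U_n$,
\[
\left| \frac{1}{t} \sum_{i=1}^t f_i(\Smp(X)_i) - \frac{1}{2} \right| \le \eps.
\]
Equivalently, the set $B_r := \{x \in \zo^n : |\tfrac{1}{t}\sum_i f_i(\Smp(x)_i) - \tfrac{1}{2}| > \eps\}$ has size at most $\delta 2^n$. Finally I would observe that any $x$ with $d_H(\Smp(x), r) \le (\tfrac{1}{2} - \eps) t$ has agreement fraction at least $\tfrac{1}{2} + \eps$, so it lies in $B_r$; hence at most $\delta 2^n$ codewords fall within the decoding radius of $r$, yielding the $(\tfrac{1}{2} - \eps, \delta 2^n)$ list-decodability. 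Efficiency of encoding is inherited from efficiency of $\Smp$.

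There is no real obstacle here: the argument is a direct translation between the two objects, and the only mild subtlety is that the strong sampler's two-sided deviation inequality $|\cdot| \le \eps$ simultaneously bounds codewords near $r$ and codewords near its bit-complement $\bar r$. Since list-decodability only needs the former, we obtain the claimed bound with room to spare and never have to invoke the other direction.
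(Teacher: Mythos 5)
Your argument is correct and is exactly the standard translation that the cited lemma rests on (the paper itself does not reprove it but defers to Ta-Shma--Zuckerman): encode by $\Smp$, test against the indicator functions $f_i(b)=\1[b=r_i]$ determined by the received word, and convert the strong-sampler deviation bound into a bound of $\delta 2^n$ on the number of codewords with agreement fraction at least $\tfrac12+\eps$. The only cosmetic point is the boundary case $d_H(\Smp(x),r)=(\tfrac12-\eps)t$, where the deviation equals $\eps$ exactly and is therefore not excluded by the bad event $|\cdot|>\eps$; this is a $\le$ versus $<$ convention issue and does not affect the substance of the reduction.
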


To construct our codes, we will use our almost $\ell$-wise independence sampler in \cref{lem:general_ell_wise_sampler} directly.
\begin{lemma}
    \label{lem:code_sampler}
    For all constant $\alpha > 0$,
     there exists an efficient strong $(\delta, \eps)$ averaging sampler for binary domain with $O(\frac{1}{\eps^{2 + \alpha}} \log \frac{1}{\delta})$ samples using $n = C\log(1 / \delta)$ random bits for some constant $C \ge 1$.
\end{lemma}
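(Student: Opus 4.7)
The idea is to instantiate Lemma~\ref{lem:general_ell_wise_sampler} directly with $m = 1$ (the binary domain) and a carefully chosen $s$. Set $s := \eps^{-\alpha}$ (or, to be safe about the strict inequality $s < 1/\delta$, $s := \eps^{-\alpha/2}$, which only costs a constant-factor slack in the exponent and can be absorbed by renaming $\alpha$). Under the standing convention $\delta \le \eps$, and the broader relaxation $\delta \le \eps^\alpha$ noted in Remark~\ref{rem:choice_of_parameters}, the constraint $2 \le s < 1/\delta$ required by Lemma~\ref{lem:general_ell_wise_sampler} is satisfied whenever $\eps$ is bounded away from $1$; the boundary case $\eps = \Theta(1)$ is already handled by, e.g., pairwise independence and only affects constants.

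\textbf{Key calculations.} Plugging $m = 1$ and $s = \eps^{-\alpha}$ into the sample-complexity bound of Lemma~\ref{lem:general_ell_wise_sampler} gives
\[
O\!\left(\tfrac{s}{\eps^2}\log\tfrac{1}{\delta}\right) \;=\; O\!\left(\tfrac{1}{\eps^{2+\alpha}}\log\tfrac{1}{\delta}\right),
\]
as required. For the randomness bound, the same lemma uses
\[
\tfrac{(2+o(1))(m + 2\log(1/\eps))\log(1/\delta)}{\log s} + 2\log\tfrac{1}{\delta}
\;=\;
\tfrac{(2+o(1))(1 + 2\log(1/\eps))\log(1/\delta)}{\alpha\log(1/\eps)} + 2\log\tfrac{1}{\delta},
\]
which simplifies to $\bigl(\tfrac{4}{\alpha} + 2 + o(1)\bigr)\log(1/\delta) = C\log(1/\delta)$ for a constant $C = C(\alpha) \ge 1$. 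This is exactly the stated bound $n = C\log(1/\delta)$.

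\textbf{Anticipated obstacle.} The proof is essentially a one-line substitution, so the only subtlety is verifying the admissibility of the parameter $s$: we need $s = \eps^{-\alpha} < 1/\delta$ and $s \ge 2$. The second is trivial for sufficiently small $\eps$. The first is handled by the relaxed regime $\delta \le \eps^{\alpha}$ from Remark~\ref{rem:choice_of_parameters} (if needed, replace $\alpha$ by $\alpha/2$ in the choice of $s$ so that the inequality becomes strict with room to spare, and then rename the resulting exponent). Since the target lemma allows an arbitrary constant $\alpha > 0$, this parameter bookkeeping does not cost anything in the statement. No other step is nontrivial, because Lemma~\ref{lem:general_ell_wise_sampler} already packages the concentration argument and the explicit almost $\ell$-wise generator.
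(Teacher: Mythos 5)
Your proof is correct and follows essentially the same route as the paper: instantiate \cref{lem:general_ell_wise_sampler} with $m = 1$ and $s = \eps^{-\alpha}$, which directly yields $O(\eps^{-2-\alpha}\log(1/\delta))$ samples and $O(\log(1/\delta))$ random bits, with pairwise independence covering the degenerate regime where $s$ would exceed $1/\delta$. The only minor point of imprecision is that your paragraph on the constraint $s < 1/\delta$ leans on ``the relaxed regime $\delta \le \eps^\alpha$'' as if it can always be arranged by rescaling $\alpha$; what is really happening (and what the paper states explicitly) is a case split: if $\eps^{-\alpha} \le 1/\delta$ apply \cref{lem:general_ell_wise_sampler}, otherwise $1/\delta < \eps^{-\alpha}$ so the $O(1/(\delta\eps^2))$-sample pairwise-independence sampler of \cref{lem:pairwise_independence} already has $O(\eps^{-2-\alpha})$ samples and, since $\delta \le \eps$, uses $O(\log(1/\delta))$ random bits. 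Making that dichotomy explicit would tighten the argument, but the substance is there.
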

\begin{proof}
    By setting $s = 1 / \eps^\alpha$ and $m = 1$ in \cref{lem:general_ell_wise_sampler}, we have that whenever $1 / \eps^\alpha \le 1 / \delta$, we have a strong $(\delta, \eps)$ sampler with $O(\frac{1}{\eps^2} \log \frac{1}{\delta})$ samples using $O(\log(1 / \delta))$ random bits. When $1 / \eps^\alpha > 1 / \delta$. Using the pairwise independence sampler in \cref{lem:pairwise_independence} for binary domain will satisfy the condition.
\end{proof}

Applying \cref{lem:sampler_implies_codes} to \cref{lem:code_sampler}
gives \cref{thm:our_code}:
\OurCode*
\begin{proof}
    We use the $(\delta, \eps)$ sampler in \cref{lem:code_sampler}, where we choose $\delta$ such that $n = C\log(1 / \delta)$. 
    Applying \cref{lem:sampler_implies_codes} to this sampler implies \cref{thm:our_code}, where $c(\alpha) = 1 / C$ here.
\end{proof}

\label{sec:codes}

\section{Open Problems}
Our work raises interesting open problems.
\begin{itemize}
    
    \item Comparing to the sampler in \cite{reingold2000entropy} which uses $m + (1 + \alpha)\log(1 / \delta)$ random bits, our averaging sampler requires $m + O(\log (1 / \delta))$ random bits. Can we improve our randomness efficiency while maintaining a good sample complexity?
    \item Is there a way to eliminate the additional $\alpha$ in the sample complexity? For $\eps = 1 / \poly(m)$ and $\delta = \exp(-\poly(m))$, can we design an efficient averaging sampler that is asymptotically optimal in both randomness and sample complexity? 
    \item Can we further improve the randomness complexity of our matrix samplers to fully resolve~\cref{problem2}? 
    \item Is it possible to reduce the list size of the list-decodable codes in~\cref{thm:our_code} to $\poly(n)$ using the structure of the list?
    \item Can we construct randomness-efficient $V$-samplers on other normed spaces $V$? 
\end{itemize}

\section*{Acknowledgements}

We thank Kuan Cheng for introducing us to the matrix sampler problem. We thank Shravas Rao for simplifying and slightly improving~\cref{lem:matrix_k_wise_concentration} (although this didn't improve our final result). We thank Oded Goldreich, Dana Moshkovitz, Amnon Ta-Shma, Salil Vadhan, and anonymous reviewers for helpful comments. 

\printbibliography

\appendix

\section{Proof of \cref{prop:non_explicit}}
\label{sec:non_explicit}

In this section, we extend the non-explicit averaging sampler construction from~\cite{CEG95} to matrix samplers.

\begin{lemma}
    \label{lem:non_explicit_reduction}
    Let $\Smp$ be a $d$-dimensional $(\delta, \eps)$ matrix sampler for domain $\zo^m$ using $t$ samples. Then there exists a $d$-dimensional $(2\delta, 3\eps)$ matrix sampler for domain $\zo^m$ using $m + 2 \log \frac{1}{\delta} + 2 \log d + \log \log \frac{d}{\eps} + 1$ random bits and $t$ samples.
\end{lemma}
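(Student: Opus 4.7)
The plan is to instantiate the probabilistic method in the standard~\cite{CEG95} style, adapted to matrix-valued functions. I will pick a uniformly random map $\phi : \zo^{n'} \to \zo^n$ (where $n$ is the randomness complexity of $\Smp$) and define $\Smp'(y) := \Smp(\phi(y))$. The goal is to show that $n' = m + 2\log(1/\delta) + 2\log d + \log\log(d/\eps) + 1$ suffices for some choice of $\phi$ to give a $(2\delta,3\eps)$ matrix sampler. Since there are infinitely many admissible $f$, I first discretize: by~\cref{lem:Hermitian_implies_general} I may restrict attention to Hermitian $f$, and I take an $\eps$-net $\mathcal{N}$ of the spectral-norm unit ball in $\C^{d\times d}$. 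This ball has real dimension at most $2d^2$, so a standard volume argument gives $|\mathcal{N}| \le (3/\eps)^{2d^2}$. Let $\mathcal{F}$ be the set of all functions $f : \zo^m \to \mathcal{N}$, so that $\log_2|\mathcal{F}| \le 2d^2 \cdot 2^m \cdot \log_2(3/\eps)$.

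For each fixed $f \in \mathcal{F}$, let $B_f \subseteq \zo^n$ denote the $\eps$-bad seeds of $\Smp$ for $f$; by hypothesis $|B_f|/2^n \le \delta$. The number of bad seeds that $\Smp'$ inherits is the sum $Z_f = \sum_{y \in \zo^{n'}} \1[\phi(y) \in B_f]$ of $2^{n'}$ independent Bernoulli indicators with $\E Z_f \le \delta \cdot 2^{n'}$. The Chernoff bound gives $\Pr[Z_f > 2\delta \cdot 2^{n'}] \le \exp(-\Omega(\delta \cdot 2^{n'}))$. A union bound over $\mathcal{F}$ then shows
\[
\Pr_\phi\bigl[\exists f \in \mathcal{F} : Z_f > 2\delta \cdot 2^{n'}\bigr] \le |\mathcal{F}| \cdot \exp\bigl(-\Omega(\delta \cdot 2^{n'})\bigr).
\]
With the stated $n'$ this probability is strictly less than $1$ (since $\delta \cdot 2^{n'}$ comfortably exceeds a constant times $\log|\mathcal{F}| \lesssim 2^m d^2 \log(1/\eps)$), so some $\phi$ works for every $f \in \mathcal{F}$ simultaneously.

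The last step is the net-approximation argument that extends the guarantee to all admissible $f$. Given $f$ with $\|f(x)\| \le 1$, choose $f' \in \mathcal{F}$ with $\max_x \|f(x) - f'(x)\| \le \eps$. Then $\|\E f - \E f'\| \le \eps$ and for any sample tuple $(Z_1,\dots,Z_t)$, $\|\tfrac{1}{t}\sum_i f(Z_i) - \tfrac{1}{t}\sum_i f'(Z_i)\| \le \eps$, so the triangle inequality upgrades an $\eps$-good seed for $f'$ into a $3\eps$-good seed for $f$. Hence at most $2\delta \cdot 2^{n'}$ seeds are bad for $f$ in the new sampler, and $\Smp'$ is a $(2\delta, 3\eps)$ matrix sampler.

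The conceptual content is entirely routine; the main thing to get right is the net count, since complex $d\times d$ matrices live in real dimension $2d^2$ rather than $d^2$. Matching the exact randomness bound $m + 2\log(1/\delta) + 2\log d + \log\log(d/\eps) + 1$ then amounts to carefully bookkeeping the dominant terms $\log|\mathcal{F}| \approx 2^m \cdot 2d^2 \log(1/\eps)$ against the Chernoff threshold $\Omega(\delta \cdot 2^{n'})$; the extra $\log(1/\delta)$ in the claim relative to the tight $\log(1/\delta)$ my Chernoff bound produces is comfortably absorbed as slack. The only place any care is genuinely needed is noticing that the Hermitian dilation reduction (\cref{lem:Hermitian_implies_general}) lets us net over Hermitian matrices in $\C^{d\times d}$ rather than all matrices, keeping the exponent $2d^2$ rather than something larger.
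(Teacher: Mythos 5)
Your proposal is correct and follows essentially the same probabilistic-method reduction as the paper: discretize the matrix-valued function class, show via a concentration bound and a union bound that a random small seed set is simultaneously $\eps$-good for every element of the net, and then transfer to arbitrary $f$ by the triangle inequality (the paper uses Hoeffding and an entrywise grid of mesh $\eps/d$, while you use multiplicative Chernoff and a spectral-norm $\eps$-net from a volume argument; both fit the claimed randomness bound, and yours in fact only costs $\log(1/\delta)+O(1)$ rather than $2\log(1/\delta)$). One slip worth cleaning up: the exponent $2d^2$ you write down is the real dimension of all $d\times d$ complex matrices, not of Hermitian ones --- Hermitian $d\times d$ matrices have real dimension $d^2$, and the $2d\times 2d$ Hermitian dilations have $4d^2$ --- so your appeal to \cref{lem:Hermitian_implies_general} is both unnecessary and inconsistent with your own dimension count; simply drop it and net the full unit ball of $\C^{d\times d}$ directly, which is what your computation actually does and which matches the lemma's bound.
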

\begin{proof}
Our goal is to construct a $(2\delta, 3\eps)$ matrix sampler $\Smp'$ based on $\Smp$.
\paragraph{Output Approximation via Discretization.} 

We construct a discretization grid \( G \subset \mathbb{C} \) by \[
    G := \{a\Delta + b\Delta i \mid a, b \in \Z \text{ and } \abs{a}^2 + \abs{b}^2 \le \Delta^{-2}\},
\]
where $\Delta = \frac{\eps}{d}$.
For each \( x \in \{0,1\}^m \), define an approximation function \( f' \) that rounds each entry of \( f(x) \) to the nearest point in \( G \), yielding \( f' : \{0,1\}^m \to G^{d \times d} \). Since each entry in \( f(x) \) differs from \( f'(x) \) by at most \( \Delta \), the total approximation error per matrix (in spectral norm) is bounded by \( d\Delta \le \eps\) according to \cref{prop:norm_bound_complex}. Thus, \( f' \) has an average that approximates the average of \( f \) within \( \eps \), and the set of all such approximations \( f' \) forms a finite function class, which we denote \( F \).

\paragraph{Bounding the Size of \( F \).}
Each entry of a matrix in \( G^{d \times d} \) has at most \(1 / \Delta^2\)
possible values. The number of possible matrices is therefore bounded by
\[
\left( \frac{1}{\Delta^2} \right)^{d^2} = \tuple{\frac{d}{\eps}}^{2d^2},
\] 
so the total number of functions in \( F \) is 
\[
\abs{F} \le \left( \tuple{\frac{d}{\eps}}^{2d^2} \right)^{2^m} = 2^{2^{m+1}d^2\log\frac{d}{\eps}}.
\]

\paragraph{Probabilistic Reduction of Random Bits.}
For each function \( f' \in F \), let a random seed be called \textit{bad} if the estimate of \( \Smp \) deviates from the true average of \( f' \) by more than \( \eps \). Since \( \Smp \) is a \( (\delta, \eps) \)-sampler, the fraction of bad random seeds for any \( f' \in F \) is at most \( \delta \). By Hoeffding's inequality, if we select \( k \) random seeds independently at random, the probability that more than \( 2\delta k \) of them are bad is at most $
2e^{-2\delta^2 k}$.
Applying a union bound over all \( f' \in F \), the probability that there exists any \( f' \) with more than \( 2\delta k \) bad seeds is at most \( \abs{F} \cdot 2 e^{-2\delta^2 k} \).

\paragraph{Choosing \( k \) and Applying Probabilistic Method.}
Set 
\[
k = \frac{\ln|F| + \ln 2.01}{2\delta^2} \le \frac{2^{m}d^2\log\frac{d}{\eps} + 1}{\delta^2}
\]
so that \( |F| \cdot 2 e^{-2\delta^2 k} < 1 \). With this choice, there exists a set \( K \) of \( k \) random seeds such that, for all \( f' \in F \), the fraction of bad seeds in \( K \) is at most \( 2\delta \). The number of random bits required to select a sequence \( \rho \in K \) is 
\[
\log k \le m + 2 \log \frac{1}{\delta} + 2 \log d + \log \log \frac{d}{\eps} + 1.
\]

\paragraph{Defining the New Sampler \( \Smp' \).}
We define \( \Smp' \) as follows: Select a random seed \( \rho \in K \), and run \( \Smp(\rho) \) to get samples $Z_1, \dots, Z_t \in \zo^m$. With $1 - 2\delta$ probability, we have 
\begin{align*}
\left\| \frac{1}{t} \sum_{i=1}^t f(Z_i) - \E f \right\| &\leq \left\| \frac{1}{t} \sum_{i=1}^t \left( (f(Z_i) -  f'(Z_i)) + (f'(Z_i)  - \E f') + (\E f' - \E f) \right) \right\| \le 3\eps.
\end{align*}
\( \Smp' \) is then a \( (2\delta, 3\eps) \) matrix sampler, using only 
\[
 m + 2 \log \frac{1}{\delta} + 2 \log d + \log \log \frac{d}{\eps} + 1
\]
random bits.
This completes the proof.
\end{proof}

\begin{theorem}[Matrix Chernoff Bound, see~\cite{Tro15}]
    \label{thm:matrix_chernoff_bound}
    Let $X_1, \dots, X_k$ be independent $d \times d$ complex random matrices. Suppose  $\|X_i\| \leq 1$ for all $i \in [k]$. Then, for any $\varepsilon > 0$, the following inequality holds:
    \[
    \Pr\left( \left\| \frac{1}{t} \sum_{i=1}^t \tuple{ X_i - \mathbb{E}[X_i]} \right\| > \varepsilon \right) \leq 2d \cdot \exp\left( - \frac{3 }{8}  t\eps^2 \right).
    \]
\end{theorem}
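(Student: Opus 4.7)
\medskip

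\noindent\textbf{Proof Proposal.} The plan is to follow the standard Laplace-transform / Ahlswede--Winter--Tropp approach for matrix concentration. First, I would reduce to the Hermitian case: by the Hermitian dilation trick already used in \cref{lem:Hermitian_implies_general}, the spectral norm of a $d\times d$ matrix $A$ equals the spectral norm of its $2d\times 2d$ dilation $\mathcal{H}(A)$, which is Hermitian, and dilation commutes with averaging. Thus it suffices to prove the bound for self-adjoint $X_i$ with $\|X_i\|\le 1$, at the cost of replacing $d$ by $2d$ in the trace bound (which is exactly where the leading factor of $2d$ in the statement comes from, once we also union-bound the upper and lower tails of $\lambda_{\max}$).

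Next, set $Y = \frac{1}{t}\sum_{i=1}^t(X_i-\E X_i)$ and write the one-sided event $\{\lambda_{\max}(Y) > \eps\}$. For any $\theta>0$, Markov's inequality applied to $e^{\theta\lambda_{\max}(tY)}=\lambda_{\max}(e^{\theta tY})\le \tr(e^{\theta tY})$ gives
\[
\Pr[\lambda_{\max}(Y)>\eps] \;\le\; e^{-\theta t\eps}\,\E\tr\exp\!\Bigl(\theta\sum_{i=1}^t (X_i-\E X_i)\Bigr).
\]
The key step, and the one I would invoke as a black box from Tropp's framework, is the Lieb-concavity-based master inequality: for independent self-adjoint $W_i$,
\[
\E\tr\exp\!\Bigl(\sum_i W_i\Bigr) \;\le\; \tr\exp\!\Bigl(\sum_i \log\E e^{W_i}\Bigr).
\]
This reduces the problem to controlling the individual matrix cumulant generating functions $\log\E e^{\theta(X_i-\E X_i)}$.

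Then I would apply the matrix analog of Hoeffding's lemma: for a self-adjoint $W$ with $\|W\|\le 1$ and $\E W = 0$, one shows $\E e^{\theta W}\preceq e^{\theta^2/2}\,I$ by comparing $e^{\theta W}$ pointwise in the spectrum to the quadratic upper envelope $\cosh(\theta)+W\sinh(\theta)$ and taking expectations. Plugging this bound in and using the operator monotonicity of $\log$ and $\tr\exp$, we obtain
\[
\E\tr\exp\!\Bigl(\theta\sum_{i=1}^t(X_i-\E X_i)\Bigr) \;\le\; 2d\cdot e^{t\theta^2/2},
\]
so the one-sided probability is at most $2d\cdot e^{-\theta t\eps + t\theta^2/2}$. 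Optimizing over $\theta$ (choosing $\theta=\eps$) yields $2d\exp(-t\eps^2/2)$ for each tail, and careful accounting — using the sharper bounded-variance Bernstein-type estimate $\log\E e^{\theta(X-\E X)}\preceq (e^\theta-\theta-1)\,I$ rather than the crude Hoeffding envelope — tightens the exponent to the claimed $3/8$.

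\medskip

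\noindent\textbf{Main obstacle.} The only genuinely nontrivial ingredient is Lieb's concavity theorem (equivalently, Tropp's subadditivity of matrix cumulants), since matrices do not commute and so one cannot simply factorize $\E \exp(\sum \theta W_i)$ into a product of individual MGFs as in the scalar Chernoff proof. Everything else — the dilation reduction, the Markov step, and the scalar-style optimization of $\theta$ — is routine. Since the paper explicitly cites \cite{Tro15} for this theorem, I would state Lieb's inequality and the master tail bound as known results and focus the write-up on verifying the constant $3/8$ and the $2d$ prefactor.
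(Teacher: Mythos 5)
You should note at the outset that the paper does not prove this theorem at all: it is stated with the bracket ``Matrix Chernoff Bound, see~\cite{Tro15}'' and imported as a black box from Tropp's monograph, then used directly in the proof of \cref{prop:non_explicit}. So there is no ``paper's own proof'' to compare against. That said, your sketch does faithfully outline the standard Laplace-transform argument from \cite{Tro15}: Hermitian dilation to reduce to the self-adjoint case, Markov applied to $\tr\exp(\theta\sum W_i)$, Lieb's concavity theorem giving subadditivity of the matrix cumulant generating functions, a matrix Hoeffding/Bernstein-type bound on each $\log\E e^{\theta W_i}$, and optimization over $\theta$. The only genuinely non-scalar ingredient is indeed Lieb's theorem, as you correctly identify.

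Two small points worth tightening if you were to write this out in full. First, your bookkeeping of the $2d$ prefactor is slightly off: after dilating, the $2d\times 2d$ Hermitian matrix $\mathcal{H}(A)$ has spectrum symmetric about $0$, so $\|\mathcal{H}(A)\|=\lambda_{\max}(\mathcal{H}(A))$ and only a \emph{single} tail is needed; the factor $2d$ comes entirely from the trace of a $2d\times 2d$ matrix, and union-bounding over upper and lower tails on top of that would double-count. Second, the constant $3/8$ does not fall out of the crude Hoeffding envelope $\E e^{\theta W}\preceq e^{\theta^2/2}I$: here $W=X_i-\E X_i$ satisfies $\|W\|\le 2$ (not $1$), so Hoeffding gives exponent $-t\eps^2/8$, and you must in fact use the Bernstein-type bound you mention, exploiting $\|\E W_i^2\|\le 1$, together with the observation that the result is vacuous for $\eps>2$, to recover $3/8$ on the relevant range. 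Since the paper simply cites \cite{Tro15}, you could equally well state Tropp's Theorem~6.1.1 verbatim and specialize, rather than re-deriving the constant.
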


Applying matrix chernoff bound, we can prove~\cref{prop:non_explicit}.

\nonExplicit*
\begin{proof}
    By \cref{thm:matrix_chernoff_bound}, taking $t = \frac{24}{\eps^2}\log\frac{4d}{\delta}$ independent samples in $\zo^m$ would give a $d$-dimensional $(\delta / 2, \eps / 3)$ matrix sampler for domain $\zo^m$ using $t$ samples and $tm$ random bits. Applying \cref{lem:non_explicit_reduction}, we get a $d$-dimensional $(\delta, \eps)$ matrix sampler for domain $\zo^m$ using $O(\frac{1}{\eps^2}\log\frac{d}{\delta})$ samples and $m + 2\log\frac{1}{\delta} + 2 \log d + \log \log \frac{d}{\eps}$ random bits.
\end{proof}

\section{Proof of \cref{lem:k_wise_concentration}}
\label{sec:k-wise-proof}
\begin{proposition}[Marcinkiewicz–Zygmund inequality~\cite{ren2001best}]
\label{prop:MZ_inequality}
Let $\{X_i, i \geq 1\}$ be a sequence of independent random variables with $\E X_i = 0$, $\E |X_i|^p < \infty$. Then for $p \ge 2$:
\[
    \E \left|\sum_{i=1}^{n} X_i\right|^p \leq C(p)n^{p/2-1}\sum_{i=1}^{n}\E |X_i|^p,
\]
where $C(p) \leq (3\sqrt{2})^p p^{p/2}$.
\end{proposition}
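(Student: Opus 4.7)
The plan is to prove the Marcinkiewicz--Zygmund inequality via the classical symmetrization plus Khintchine approach, which naturally reduces the problem about general mean-zero variables to one about Rademacher sums. First, I would introduce an independent copy $(X_i')$ of $(X_i)$ and use $\E X_i = 0$ together with Jensen's inequality to obtain
$$\E\left|\sum_{i=1}^n X_i\right|^p \;\le\; \E\left|\sum_{i=1}^n (X_i - X_i')\right|^p.$$
Since each $Y_i := X_i - X_i'$ is symmetric, the distribution of $\sum_i Y_i$ coincides with that of $\sum_i \varepsilon_i Y_i$ for independent Rademacher signs $\varepsilon_i$, at no additional cost. We also have $\E|Y_i|^p \le 2^p \E|X_i|^p$ by the triangle inequality in $L^p$.

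Second, condition on $(Y_i)$ and apply Khintchine's inequality to the Rademacher sum with fixed real coefficients $y_i$:
$$\E_\varepsilon\left|\sum_{i=1}^n \varepsilon_i y_i\right|^p \;\le\; B_p \left(\sum_{i=1}^n y_i^2\right)^{p/2},$$
with a constant of order $B_p \le (c\sqrt{p})^p$ for some absolute $c$. Khintchine itself can be established by the same even-tuple counting used in \cref{lem:bound_with_Rad}: for even integer $p$ one expands the $p$-th power, drops terms where any $\varepsilon_i$ appears an odd number of times, counts pairings, and applies Stirling. For non-integer $p$ one interpolates via Hölder or uses the fact that $\|\cdot\|_p$ is monotone in $p$ together with the integer case for $\lceil p\rceil$.

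Third, apply the power-mean inequality $(\sum_i b_i)^{p/2} \le n^{p/2-1}\sum_i b_i^{p/2}$ (valid for $p \ge 2$ and $b_i \ge 0$) to the sum $\sum y_i^2$, obtaining
$$\E_\varepsilon\left|\sum_{i=1}^n \varepsilon_i y_i\right|^p \;\le\; B_p\, n^{p/2-1}\sum_{i=1}^n |y_i|^p.$$
Taking expectations over $(Y_i)$, substituting $\E|Y_i|^p \le 2^p\E|X_i|^p$, and chaining the three inequalities yields exactly the MZ bound, with a constant of the form $2^p B_p \le (C\sqrt{p})^p$ for a universal $C$.

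The main obstacle is getting the explicit constant $C(p) \le (3\sqrt{2})^p p^{p/2}$ quoted in the statement. The $2^p$ from symmetrization and the $p^{p/2}$ from Khintchine are essentially forced; squeezing the remaining numerical factor down to $(3\sqrt{2})^p$ requires using a sharp form of Khintchine (e.g.\ the Haagerup constants, or a careful direct moment computation) rather than the crude estimate above. Since this is a classical result cited from~\cite{ren2001best}, I expect the paper will simply invoke it without reproving the sharp constant.
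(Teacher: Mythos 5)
The paper does not prove \cref{prop:MZ_inequality}; it is stated purely as a citation to \cite{ren2001best}, exactly as you predicted in your final paragraph. Your symmetrization-plus-Khintchine sketch is a correct and standard route to the inequality, and notably it aligns with the paper's own style: the matrix analogue, \cref{lem:trace_moment_no_rad}, is proved in the paper by precisely this symmetrization argument (introduce an independent copy, apply Jensen, pass to Rademacher signs, then invoke \cref{lem:bound_with_Rad}, which is the even-tuple-counting step you describe for establishing Khintchine).

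One small correction to your last paragraph: you are too pessimistic about the constant. For even integer $p$, the crude pairing count already gives the Khintchine constant $B_p \le (p-1)!! \le p^{p/2}$, so the chained bound is $2^p B_p \le 2^p\, p^{p/2} = (2\sqrt{p})^p$, which is strictly smaller than the quoted $(3\sqrt{2})^p\, p^{p/2}$ since $2 < 3\sqrt{2}$. No sharp Haagerup-type Khintchine is needed to reach the stated constant; the elementary even-$p$ computation followed by monotonicity of $L^p$ norms for general $p \ge 2$ already suffices.
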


\kWiseConcentration*
\begin{proof}
    Let $W_i := f_i(Z_i) - \E f_i$. 
    We have
    \begin{align*}
    \Prb{\abs{\sum_{i=1}^t W_i} > t\eps} \le \frac{\Ex{\abs{\sum_{i=1}^t W_i}^\ell}}{(t\eps)^\ell}. 
\end{align*}
 Let $W_1', \dots, W_t'$ be a sequence of independent random variables where $W_i' := f_i(U_{\zo^m}) - \E f_i$.  Since the $W_i$'s are $\gamma$-almost $\ell$-wise independent and $|W_i| \le 1$, we have \[
    \Ex{\abs{\sum_{i=1}^t W_i}^\ell} = \Ex{\tuple{\sum_{i=1}^t W_i}^\ell} \le \Ex{\tuple{\sum_{i=1}^t W_i'}^\ell} + \gamma t^\ell = \Ex{\abs{\sum_{i=1}^t W_i'}^\ell} + \gamma t^\ell.
\]
Since $\E W_i' = 0$ and $|W_i'| \le 1$,
they satisfy the conditions for Marcinkiewicz–Zygmund inequality. We have \begin{align*}
    \Ex{\abs{\sum_{i=1}^t W_i'}^\ell} \leq (3\sqrt{2})^\ell \ell^{\ell/2}t^{\ell/2-1}\sum_{i=1}^{t}\E |W_i'|^\ell \le (5\sqrt{\ell t})^\ell.
\end{align*}
Therefore, \[
\frac{\Ex{\abs{\sum_{i=1}^t W_i}^\ell}}{(t\eps)^\ell} \le \left( \frac{25{\ell}}{\eps^2 {t}} \right)^{\ell/2} + \frac{\gamma}{\eps^\ell}.
\]
\end{proof}

\section{Proof of \cref{lem:sampler_implies_matrix_sampler}}

\label{sec:WXproof}

To prove \cref{lem:sampler_implies_matrix_sampler}, we need the following property of matrix norms:
\begin{proposition}
\label{prop:norm_bound_complex}
Let \( A \in \mathbb{C}^{d \times d} \) and define
\[
r = \max_{i, j} |A_{ij}|.
\]
Then the spectral norm of \( A \) satisfies
\[
r \leq \|A\| \leq d r.
\]
\end{proposition}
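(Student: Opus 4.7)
The plan is to prove both bounds directly from the definition of the spectral norm $\|A\| = \sup_{\|v\|_2 = 1} \|Av\|_2$ (equivalently, the largest singular value of $A$).

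For the lower bound $r \le \|A\|$, I would pick indices $(i,j)$ achieving $|A_{ij}| = r$ and test the norm against the standard basis vector $e_j$. Since $Ae_j$ is the $j$-th column of $A$, we have $\|Ae_j\|_2 \ge |A_{ij}| = r$, so $\|A\| \ge \|Ae_j\|_2 / \|e_j\|_2 \ge r$. Equivalently, one can observe $\|A\| \ge |e_i^* A e_j| = |A_{ij}| = r$ using the variational characterization of the spectral norm.

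For the upper bound $\|A\| \le dr$, the cleanest route is to pass through the Frobenius norm. Since every singular value of $A$ is at most $\|A\|_F$, we have
\[
\|A\|^2 \;\le\; \|A\|_F^2 \;=\; \sum_{i,j=1}^d |A_{ij}|^2 \;\le\; d^2 r^2,
\]
so $\|A\| \le d r$. Alternatively (and by a nearly identical computation), for any unit vector $v \in \mathbb{C}^d$, Cauchy--Schwarz gives $\big|\sum_j A_{ij} v_j\big|^2 \le \big(\sum_j |A_{ij}|^2\big)\big(\sum_j |v_j|^2\big) \le d r^2$, so summing over $i$ yields $\|Av\|_2^2 \le d^2 r^2$.

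There is no genuine obstacle here; the only thing to be slightly careful about is invoking the right fact ($\|A\| \le \|A\|_F$ follows from $\|A\|_F^2 = \sum_k \sigma_k(A)^2 \ge \sigma_1(A)^2 = \|A\|^2$), and noting that the bounds are tight in the extreme cases (the identity matrix for the lower bound, and the all-ones matrix for the upper bound), which is a useful sanity check but not required for the proof.
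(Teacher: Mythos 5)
Your proof is correct and follows essentially the same route as the paper: the lower bound via $\|A\| \ge |e_i^* A e_j| = r$, and the upper bound via $\|A\| \le \|A\|_F \le dr$. The Cauchy--Schwarz alternative and the tightness remarks are fine but not needed.
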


\begin{proof}
 Select standard basis vectors \( e_i, e_j \in \mathbb{C}^d \) such that \( |A_{ij}| = r \). Then,
\[
\|A\| \geq \frac{|e_i^* A e_j|}{\|e_i\|_2 \|e_j\|_2} = |A_{ij}| = r.
\]
We also have \[
    \|A\| \le \|A\|_F = \sqrt{\sum_{i = 1}^d \sum_{j = 1}^d A_{ij}^2} \le dr.
\]
\end{proof}

\samplerImpliesMatrxSampler*   

\begin{proof}
    Let $Z_1, \dots, Z_t$ be the sampler's output. We define\[
        A := \frac{1}{t}\sum_{i=1}^t f(Z_i).
    \]  Now we fix some $i,j \in [d]$. For all $x \in \zo^m$, we have $|f(x)_{ij}| \le \norm{f(x)} \le 1$ by \cref{prop:norm_bound_complex}. Then, since $Z_i's$ are the output of a $(\delta, \eps)$ averaging sampler, we have \[
        \Prb{\abs{\mathrm{Re}(A_{ij}) - \mathrm{Re}((\E f)_{ij})} \le \eps} \ge 1 - \delta \quad \text{and} \quad \Prb{\abs{\mathrm{Im}(A_{ij}) - \mathrm{Im}((\E f)_{ij})} \le \eps} \ge 1 - \delta,
    \]
    where $\mathrm{Re}(x)$ and $\mathrm{Im}(x)$ are the functions that extract the real part and imaginary part of $x$ respectively. Take a union bound, we have with $1 - 2\delta$ probability,
    \[
        \abs{A_{ij} - (\E f)_{ij}} \le \abs{\mathrm{Re}(A_{ij} - (\E f)_{ij})} + \abs{\mathrm{Im}(A_{ij} - (\E f)_{ij})} \le 2\eps.
    \]
    By a union bound over all entries, with $1 - 2d^2\delta$ probability, all entries have an additive error bounded by $2\eps$, and this implies that $\|A - \E f\| \le 2d\eps$ by \cref{prop:norm_bound_complex}.
\end{proof}

\end{document}